\definecolor{cb-blue}{HTML}{648FFF}
\definecolor{cb-magenta}{HTML}{785EF0}
\definecolor{cb-pink}{HTML}{DC267F}
\definecolor{cb-red-orange}{HTML}{FE6100}
\definecolor{cb-orange}{HTML}{FFB000}
\definecolor{ALBG}{rgb}{1.0,0.8,0.8}
\renewcommand{\theequation}{\roman{equation}}
\pgfplotsset{compat=1.16}
\pgfplotsset{colormap/PuOr-11}
\pgfplotsset{grid style={dotted,gray}}
\pgfplotsset{legend style={column sep=0.15cm}}
\newcommand{\select}[3]{
    \pgfplotstablegetelem{\coordindex}{#1}\of{#3}
    \ifnum\pgfplotsretval=#2
    \else
    \def\pgfmathresult{}
    \fi
}
\let\epsilon\varepsilon
\title{Insights into \texorpdfstring{$\bm{(k, \rho)}$}{(k, ρ)}-shortcutting algorithms} %TODO Please add
\author{Alexander Leonhardt}{Goethe University Frankfurt, Germany}{aleonhardt@ae.cs.uni-frankfurt.de}{https://orcid.org/0009-0006-8263-6900}{}
\author{Ulrich Meyer}       {Goethe University Frankfurt, Germany}{umeyer@ae.cs.uni-frankfurt.de}{https://orcid.org/0000-0002-1197-3153}{}
\author{Manuel Penschuck}   {Goethe University Frankfurt, Germany}                              {mpenschuck@ae.cs.uni-frankfurt.de}{https://orcid.org/0000-0003-2630-7548}{Funded by the Deutsche Forschungsgemeinschaft (DFG) -- {ME~2088/5-1} (FOR~2975 --- Algorithms,  Dynamics,  and Information Flow in Networks).}
\keywords{Complexity, Approximation, Optimal algorithms, Parallel shortest path}
\authorrunning{A. Leonhardt, U. Meyer, M. Penschuck} %TODO mandatory. First: Use abbreviated first/middle names. Second (only in severe cases): Use first author plus 'et al.'
\definecolor{goetheblau}{cmyk}{1.00, 0.20, 0.00, 0.40}
\definecolor{newblack}{rgb}{0,0,0}
\colorlet{sidecolor}{newblack!60}
\definecolor{darkgreen}{rgb}{0,0.5,0}
\def\NP{\ensuremath{\mathcal{NP}}}
\def\ie{i.e.\xspace}
\def\eg{e.g., \space}
\def\etal{et\,al.\xspace}
\def\problem#1{\textsc{#1}\xspace}
\def\probVC{\problem{Vertex Cover}}
\def\probVCNN{\textsc{Vertex Cover}\xspace}
\def\probMSP{\problem{$k\rho$-MSP}}
\def\probSSSP{\problem{SSSP}}
\def\probPSSSP{\problem{PSSSP}}
\def\set#1{\ensuremath{\left\{#1\right\}}}
\def\setc#1#2{\ensuremath{\left\{#1\,\mid\,#2\right\}}}
\def\algoDP{$k\rho$-DP\xspace}
\def\algoGreedy{$k\rho$\textsc{-Greedy}\xspace}
\def\algoDPPC{$k\rho$-DP-PC\xspace}
\def\algoDPSA{$k\rho$-DP-SA\xspace}
\def\algoDPPCMH{$k\rho$-DP-PC+MH\xspace}
\def\algoDPSAMH{$k\rho$-DP-SA+MH\xspace}
\def\algoDPAll{$k\rho$-DP-PC-SA+MH\xspace}
\def\algoDPStar{$k\rho$-DP*\xspace}
\def\Oh{\ensuremath{\mathcal{O}}}
\def\Vedges{\ensuremath{V_\text{edges}}\xspace}
\def\Gsub{\ensuremath{G_P}\xspace}
\def\Vsub{\ensuremath{V_P}\xspace}
\def\Esub{\ensuremath{E_P}\xspace}
\def\Gtrans{\ensuremath{G_T}\xspace}
\def\Vtrans{\ensuremath{V_T}\xspace}
\def\Etrans{\ensuremath{E_T}\xspace}
\let\Go\Gtrans
\def\Gop{\ensuremath{G_\text{MSP}}\xspace}
\def\pitchfork{\textsc{pitchfork}\xspace}
\def\pitchforks{\textsc{pitchfork}s\xspace}
\def\trivial{\textsc{canonical}\xspace}
\def\complex{\textsc{complex}\xspace}
\def\shortver#1{#1}
\def\longver#1{}
    \def\shortver#1{}
    \def\longver#1{#1}
\def\shortlong#1#2{\shortver{#1}\longver{#2}}
\begin{document}
\maketitle
\begin{abstract}
	A graph is called a $(k, \rho)$-graph iff every node can reach $\rho$ of its nearest neighbors in at most $k$ hops.
	This property proved useful in the analysis and design of parallel shortest-path algorithms~\cite{10.1145/2935764.2935765,delta-star-stepping}.
	Any graph can be transformed into a $(k, \rho)$-graph by adding shortcuts.
	Formally, the \probMSP problem asks to find an appropriate shortcut set of minimal cardinality.
	
	We show that $\probMSP$ is \NP-complete in the practical regime of $k \ge 3$ and $\rho = \Theta(n^\epsilon)$ for $\epsilon > 0$.
	With a related construction, we bound the approximation factor of known \probMSP heuristics~\cite{10.1145/2935764.2935765} from below and propose algorithmic countermeasures improving the approximation quality.
	Further, we describe an integer linear problem (ILP) solving \probMSP optimally.
	Finally, we compare the practical performance and quality of all algorithms in an empirical campaign.
\end{abstract}

\longver{\clearpage}

\section{Introduction}
Shortest path algorithms trace back to the very roots of computer science and are a part of the basic algorithmic toolbox.
Consequently a large body of literature has been devoted to shortest-path algorithms dating back to at least the 1950s with the classical single-source shortest-path (\probSSSP) algorithms due to Dijkstra~\cite{dijkstra1959note} as well as Bellman and Ford~\cite{Bel58,P-923}.

Despite this immense attention, shortest-path algorithms remain notoriously hard to parallelize efficiently;
one of the first parallel \probSSSP (\probPSSSP) is based on the Bellman-Ford algorithm.
More efficient solutions often follow the \emph{stepping framework} including $\Delta$-stepping~\cite{Meyer2003}, \textsc{radius-stepping}~\cite{10.1145/2935764.2935765} and $\Delta$*-stepping~\cite{delta-star-stepping}.
Although they outperform Bellman-Ford in practice, their theoretical guarantees on general graphs seldom reflect this.

Recently, Blelloch~\etal \cite{10.1145/2935764.2935765} introduced the notion of $(k,\rho)$-graphs (see \cref{definition:k-rho-ball}) which, roughly speaking, asserts that every node can reach its $\rho$ nearest neighbors by a shortest-path with at most $k$ edges.
This notion provides just enough structure to derive new theoretical bounds~\cite{10.1145/2935764.2935765,delta-star-stepping}.
Crucially, any graph can be converted into a $(k,\rho)$-graph by introducing shortcuts.
We refer to the problem of finding such a shortcut set of minimal cardinality as \probMSP (see \cref{def:probmsp}).

To the best of our knowledge, only two articles considered $(k, \rho)$-graphs previously.
Blelloch~\etal~\cite{10.1145/2935764.2935765} introduce the notion, propose two \probMSP heuristics, and empirically study the number of edges added by them.
Further, Dong~\etal~\cite{delta-star-stepping} rely on $(k, \rho)$-graphs to provide faster algorithms and new bounds without further investigating the notion itself.

\paragraph*{Our contribution.} 
We prove for the first time that \probMSP is \NP-complete for \emph{all} constant $k\geq 3$ and some $\rho=\Theta(n^\epsilon)$ in \cref{section:kp-msp-np-hard}.
This is of great practical relevance as \probPSSSP algorithms for $(k,\rho)$-graphs crucially depend on small $k$ and large $\rho$ for their span bounds.
As a by-product, we obtain non-trivial lower bounds on the approximation ratio of the best known \probMSP heuristic in \cref{sec:lower_bounds}.

Based on the insights derived from the lower bounds, we propose several new preprocessing steps in \cref{sec:preprocessing} to improve the heuristic's approximation factor on several random graph models.
In \cref{section:optimal-algorithm}, we give an integer linear program (ILP) to solve \probMSP optimally.
Finally, we conduct an experimental evaluation providing evidence that a commonly found property in social network graphs is at fault for frequently inducing large approximation factors for the existing heuristics.
Our newly proposed heuristics are able to mitigate this partially and a final assessment on real world graphs show that our techniques generalize well to a variety of graphs.

\section{Preliminaries}\label{sec:preliminaries}
Let $G=(V,E)$ be an undirected graph with vertex set~$V$ and edge set~$E$.
A \emph{weighted} graph has an additional weight function $W\colon E\to\mathbb{R}_{\geq 0}$, assigning each edge~$e$ the weight~$W(e)$.
Unweighted graphs imply the unit edge weights of~$1$.
We denote the \emph{neighbors} of $u \in V$ as $N(u) = \setc{v}{\set{u,v} \in E}$.
Further, let $G[V']$ be the subgraph of $G$ induced by the vertices~$V'$.

For two nodes $u, v \in V$, let $d(u,v)$ be the \emph{total weight of the shortest path} (by weight) connecting $u$ and $v$.
Further, let $\hat d(u,v)$ be the \emph{hop distance} between $u$ and $v$, which we define as the smallest number of edges among all paths between $u$ and $v$ with a weight of $d(u,v)$.
If no path exists between $u$ and $v$, set $d(u,v) = \hat d(u,v) = \infty$.
For the sake of brevity we denote by \emph{shortest path} a path with weight $d(u,v)$ and by \emph{closest path} we denote a path with weight $d(u,v)$ and $\hat d(u,v)$ edges.

\medskip
\noindent
We first restate the two central properties \emph{$k$-radius} and \emph{$\rho$-distance} according to~\cite{10.1145/2935764.2935765}:

\begin{definition}
	For $u \in V$, the \emph{$k$-radius $\bar{r}_k(u)$ of $u$} is the distance to the closest (by weight) node from $u$ which is more than $k$ hops away from $u$, \ie $\bar{r}_k(u) = \min_{v \in V\colon\ \hat d(u,v)>k}d(u,v)$.
\end{definition}

\begin{definition}
	For $u \in V$, the \emph{$\rho$-nearest distance of $u$}, denoted by $r_\rho(u)$, is the distance from $u$ to the $\rho$-th closest vertex to $u$.
\end{definition}

These allow us to define $(k, \rho)$-graphs, where each node can reach any of its $\rho$ nearest neighbors in at most $k$ hops.
Observe that we deviate from~\cite{10.1145/2935764.2935765} (which originally requires $r_\rho(v) \leq \bar{r}_k(v)$ for $(k,\rho)$-balls) to avoid ambiguity if multiple nodes lie at hop distance $k + 1$.

\begin{definition}\label{definition:k-rho-ball}
	Vertex $v \in V$ has a \emph{$(k,\rho)$-ball} iff $r_\rho(v) < \bar{r}_k(v)$; \ie the $\rho$-closest node of $v$ can be reached within $k$ hops.
	Further, $G$ is a $(k,\rho)$-graph iff every node has a $(k,\rho)$-ball.
\end{definition}

Arbitrary graphs can be transformed into $(k, \rho)$ graphs by adding shortcuts, \ie edges that lower the hop distance~$\hat d$ between nodes without changing their shortest path distance $d$:

\begin{definition}
	For $G=(V,E)$, let $u, v \in V$ be different nodes with distance $D = d(u,v) < \infty$ and $\hat d(u,v)>1$.
	\emph{A shortcut between $u$ and~$v$} is a new edge $\set{u,v}$ with weight $D$.
\end{definition}

\begin{definition}
	Given a weighted graph $G=(V,E)$, the $(k,\rho)${\normalfont\textsc{-Minimum-Shortcut Problem}} (\probMSP) asks for a minimum cardinality set $S$, s.t. $G'=(V,E\cup S)$ is a $(k,\rho)$-graph.
	Further let $k\rho\ell$-{\normalfont\textsc{MSP}}, the decision variant of \probMSP, ask whether $|S| \le \ell$ exists.
	\label{def:probmsp}
\end{definition}

\noindent
If clear from context, we use \probMSP as a shorthand for it decision variant $k\rho\ell$-\textsc{MSP}.

\section{Complexity of KP-MSP}\label{section:kp-msp-np-hard}
The \textsc{Minimum-Shortcut Problem} (\probMSP) can be solved in polynomial time for $k = 1$, since each shortcut only affects its two endpoints.
Hence, each node has to be connected to $\rho$ of its nearest neighbors.
In the following, we show that \probMSP is \NP-complete for $k > 2$ and practical $\gamma$.
Since it is trivial to verify a \probMSP solution in polynomial time, \probMSP is in \NP.
We therefore focus on its hardness by establishing that the \NP-hard~\cite{DBLP:conf/coco/Karp72} \probVC problem is polynomial time reducible to \probMSP.

Recall that, given an undirected graph $G=(V,E)$, \probVC asks to find a minimal set of nodes $C \subseteq V$ for which at least one endpoint of every edge is in $C$.
Since $G$ is undirected and the transformation in \cref{subsec:vc_transform} uses only unit edge weights, our proof suggests, that the hardness of \probMSP does not stem from edge weights.

\begin{theorem} \label{theorem:kp-msp-hardness}
    Let $k \geq 3$ be integer constants, and $\rho = \Theta(n^\epsilon)$ for some constant $\epsilon > 0$.
    Then, $\probVC \leq_p \probMSP$ implying $\probMSP \in \NP\text{-hard}$.
\end{theorem}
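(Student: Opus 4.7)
The plan is to build a polynomial-time reduction from \probVC to \probMSP. Given a \probVC instance $(H = (V_H, E_H), c)$, I would construct an unweighted graph $G$ on $\mathrm{poly}(|V_H|)$ nodes together with a shortcut budget $\ell = c + \beta$, where $\beta$ is a fixed offset depending only on $H$ (and on the gadget structure, not on $c$), such that $H$ admits a vertex cover of size at most $c$ iff $G$ admits a $(k,\rho)$-shortcut set of size at most $\ell$. Because $H$ is undirected and unweighted, the resulting $G$ can also be taken unweighted, matching the remark before the theorem.

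\textbf{Gadget design.} For every $u \in V_H$ I would attach a \emph{vertex gadget} $\Gamma_u$ containing a distinguished anchor $a_u$ together with $\Theta(\rho)$ ``private'' nodes arranged so that (i) they lie inside the $\rho$ closest neighbours of $a_u$, and (ii) their hop distance from $a_u$ exceeds $k$ unless a single ``activation'' shortcut is added at $a_u$. For every edge $e = \{u, v\} \in E_H$ I would attach an \emph{edge gadget}, realised as a \pitchfork with a central node $x_e$ linked to $a_u$ and $a_v$ through paths of length close to~$k$. The edge counts are tuned so that the $\rho$-nearest neighbours of $x_e$ are dominated by the private nodes of $\Gamma_u$ and $\Gamma_v$, and $x_e$ can reach them within $k$ hops iff at least one of $\Gamma_u, \Gamma_v$ has been activated. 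To hit $\rho = \Theta(n^\epsilon)$ for a prescribed constant $\epsilon > 0$, I would pad the number of private nodes per gadget so that $\rho$ scales like the $\epsilon$-th power of the total graph size; $k \geq 3$ only influences the pitchfork-path length and is absorbed into~$\beta$.

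\textbf{Two directions.} The forward direction is the easy one: given a vertex cover $C$ of size $c$, the shortcut set consisting of one activation at each $a_u$ with $u \in C$ plus $\beta$ ``intra-gadget'' shortcuts (those required regardless of $C$ to enforce the $(k,\rho)$-property at purely local witnesses) satisfies the $(k,\rho)$-property at every anchor, every private node, and every $x_e$. The backward direction proceeds by a normalisation argument. Starting from any feasible shortcut set $S$ with $|S| \le \ell$, I would show how to rewrite each shortcut whose endpoints sit inside a single edge gadget $\Gamma_e$ as an activation at $a_u$ or $a_v$ without increasing $|S|$; after this rewriting, the set of anchors incident to an activation shortcut forms a vertex cover of $H$, because the $(k,\rho)$-requirement at every $x_e$ cannot be fulfilled unless at least one of its two endpoint gadgets has been activated.

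\textbf{Main obstacle.} The hardest part will be the ``no cheating'' lemma underpinning the backward direction: ruling out that a feasible $S$ beats the canonical cost by sharing a single shortcut across multiple edge gadgets, short-circuiting through intermediate pitchfork nodes, or exploiting weight ties to re-route $\rho$-nearest-neighbour sets. This requires (a) pinning down the identity of the $\rho$ closest neighbours of every node in $G$ tightly enough that no alternative cluster of nodes can substitute for the private ones of $\Gamma_u, \Gamma_v$ at $x_e$, and (b) bounding the ``reach'' of any individual shortcut so that one new edge cannot serve as an activation for more than one vertex gadget. Both constraints come down to calibrating path lengths in the pitchfork relative to $k$ and $\rho$, and are the reason the reduction is restricted to $k \geq 3$ (so the pitchfork has room to separate the two endpoint-activation regions) and to $\rho = \Theta(n^\epsilon)$ (so enough private nodes exist to make the nearest-neighbour analysis rigid).
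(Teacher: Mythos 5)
Your overall strategy coincides with the paper's: reduce from \probVC, attach a per-vertex gadget whose ``activation'' by a single shortcut certifies membership in the cover, make each edge of the \probVC instance induce a witness node whose $(k,\rho)$-ball can only be completed by activating one of its two endpoint gadgets, and then normalise an arbitrary feasible shortcut set into one consisting solely of activations. However, your gadget specification contains a fatal miscalibration. You require that the private nodes of $\Gamma_u$ lie among the $\rho$ closest neighbours of the anchor $a_u$ \emph{and} sit at hop distance greater than $k$ from $a_u$ unless an activation shortcut is added. That makes every anchor itself deficient, so \emph{every} vertex gadget must be activated in any feasible solution; once all gadgets are activated, every edge witness $x_e$ is automatically satisfied, and the minimum shortcut count becomes $|V_H| + \beta$ independently of the vertex cover number --- the reduction carries no information. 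The paper's construction avoids exactly this trap: the $\gamma$-\pitchfork is attached so that its leaves are at hop distance $3 \le k$ from the host vertex $u$ (hence $u$ and all gadget nodes already have $(k,\rho)$-balls with \emph{zero} shortcuts, and the offset $\beta$ is in fact $0$), while the edge node $w_{u,v}$ sits at hop distance exactly $k+1$ from the leaves via a $(k-2)$-length subdivision path, so that only the edge nodes are deficient and a single \trivial shortcut $\set{u, b_u}$ repairs all edge nodes incident to $u$ at once. You would need to invert condition (ii) so that deficiency lives at the edge witnesses only.

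The second gap is the one you name yourself: the ``no cheating'' lemma for the backward direction. You correctly identify that one must rule out a single shortcut serving several edge witnesses or a coalition of shortcuts jointly completing a ball more cheaply than activations, but you give no argument. In the paper this is the bulk of the technical work (\cref{lemma:single_ball_for_complex} and its supporting lemmas in the appendix): one shows that distinct edge nodes are at hop distance at least $2(k-2)$ apart, that shortcuts in an optimal solution cannot meaningfully interact, and via an explicit case analysis (separately for $k \ge 4$ and the delicate case $k=3$) that every non-\trivial shortcut completes exactly one ball and can therefore be exchanged for a \trivial one. Without this canonicalisation the map from shortcut sets to vertex covers is not well defined, so the reduction's correctness remains unproven. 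Finally, note that padding to achieve $\rho = \Theta(n^\epsilon)$ is not as free as you suggest: the paper has to attach the padding nodes \emph{below} a \pitchfork leaf so that they neither enter any relevant $\rho$-neighbourhood within $k+1$ hops nor create new deficient vertices, which is what restricts the achievable exponents to the ranges given in \cref{lemma:quantize-rho}.
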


\begin{proof}
    We transform a \probVC input~$G=(V, E)$ into a \probMSP graph~\Gtrans in \cref{subsec:vc_transform}.
    In \cref{subsec:solving_vc}, we show that the structure of~\Gtrans implies that there exist optimal so-called \emph{canonical} \probMSP solutions~(see \cref{cor:canonical_msp}).
    Then we establish a bijection between shortcuts in canonical \probMSP solutions to nodes in \probVC.

    For simplicity, the proof only considers $\gamma \ge 7|V| + 6$.
    We refer to \shortlong{the full version~\cite{fullversion}}{\cref{sec:appendix_blowup}} of this paper, for the acclaimed range of~$\gamma$.
    The extended proof is structurally similar to the present one, but uses a recursive application of the \textsc{pitchfork}-gadgets in \cref{subsec:vc_transform}.
\end{proof}

\subsection{Transforming Vertex Cover to KP-MSP}\label{subsec:vc_transform}
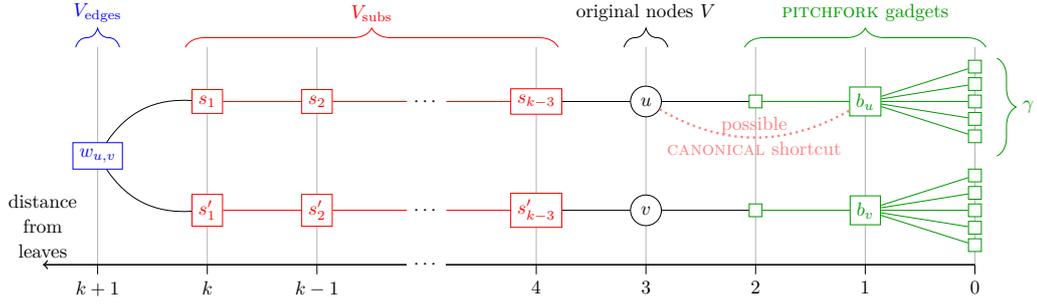
\begin{figure}[t]
    \centering
    \resizebox{\textwidth}{!}{
        \def\yspace{1}
        \def\ybar{-2}
        \begin{tikzpicture}[
                vertex/.style={draw, fill=white, circle, minimum width=1.6em, inner sep=0},
                subs/.style={draw, red, fill=white},
                edge vertex/.style={draw, blue, fill=white},
                gadget/.style={draw, green!60!black, fill=white},
                gedge/.style={draw, green!60!black},
                sedge/.style={draw, red}
            ]

            % scale		
            \foreach \x/\l in {0/{k+1},1/{k},2/{k-1}, 4/{4}, 5/{3}, 6/{2}, 7/{1}, 8/{0}} {
            \path[draw, black!30] (2*\x, -\ybar) to (2*\x, \ybar);
            \path[draw] (2*\x, \ybar) to node[pos=1,below] {$\l$} ++(0, -0.5em);
            }
            \path[draw, thick, ->] (16,\ybar) to node[pos=1, align=center, above] {distance \\ from \\ leaves} ++(-17, 0);
            \node[fill=white] at (6, \ybar) {$\cdots$};

            \node[edge vertex] (wuv) at (0,0) {$w_{u,v}$};
            \node[vertex] (u) at (10, \yspace) {$u$};
            \node[vertex] (v) at (10, -\yspace) {$v$};

            \foreach \s/\uv/\y in {{s}/u/\yspace, {s'}/v/-\yspace}{
                    % subdivisions			
                    \node[subs] (s-\uv-1) at (2, \y) {$\s_1$};
                    \node[subs] (s-\uv-2) at (4, \y) {$\s_2$};
                    \node[fill=white] (s-\uv-d) at (6, \y) {$\cdots$};
                    \node[subs] (s-\uv-k) at (8, \y) {$\s_{k-3}$};
                    \path[sedge] (s-\uv-1) to (s-\uv-2) to (s-\uv-d) to (s-\uv-k);

                    % gadget
                    \node[gadget] (g-\uv-c) at (12, \y) {};
                    \node[gadget] (g-\uv-b) at (14, \y) {$b_\uv$};

                    \foreach \x in {0, 1,2,3,4} {
                            \node[gadget] (g-\uv-l\x) at (16, \y + 0.32*\x - 0.64) {};
                            \path[gedge] (g-\uv-b) to (g-\uv-l\x);
                        }

                    \path[gedge] (g-\uv-c) to (g-\uv-b);

                    % edges to vertex node
                    \path[draw] (s-\uv-k) to (\uv) to (g-\uv-c);
                }

            \path[draw, bend left] (wuv) to (s-u-1);
            \path[draw, bend right] (wuv) to (s-v-1);

            % Partition of nodes
            \draw [blue, decorate,decoration={brace,amplitude=10pt},yshift=-0.1]
            (-0.4, -\ybar) to node[above, yshift=1em] {\Vedges} ++(0.8, 0);

            \draw [red, decorate,decoration={brace,amplitude=10pt},yshift=-0.1]
            (1.6, -\ybar) to node[above, yshift=1em] {$V_\text{subs}$} ++(6.8, 0);

            \draw [decorate,decoration={brace,amplitude=10pt},yshift=-0.1]
            (-0.4+10, -\ybar) to node[above, yshift=1em] {original nodes $V$} ++(0.8, 0);

            \draw [green!60!black,decorate,decoration={brace,amplitude=10pt},yshift=-0.1]
            (-0.2+12, -\ybar) to node[above, yshift=1em] {\pitchfork gadgets} ++(4.4, 0);

            \draw [green!60!black,decorate,decoration={brace,amplitude=10pt},yshift=-0.1]
            (16.4, \yspace + 0.8) to node[right, xshift=1em] {$\gamma$} ++(0, -1.8);

            % Shortcut			
            \draw[very thick, dotted, red!50, bend right, align=center] (u) to node[below=-1.5em] {possible \\ \trivial shortcut} (g-u-b);

        \end{tikzpicture}
    }
    \caption{
        Transformation of edge $\set{u,v}$ for the input~$G=(V,E)$.
        Each edge in $E$ implies their own nodes in $\Vedges$ and $V_\text{subs}$ (left); each original node in $V$ has its own \pitchfork-gadget (right).
        By construction $w_{u,v}$ is too far from the leaves of either \pitchfork.
        By adding, \eg the \trivial shortcut (see \cref{def:trivial}) between $u$ and $b_u$, the leaves become available for the $(k, \rho)$-ball of $w_{u,v}$.
    }
    \label{fig:trans1}
\end{figure}

Let $G=(V, E)$ be an input graph for \probVC.
For a fixed value for parameter~$k \in [3, n)$, we transform $G$ into $\Go$ in two steps:

\begin{itemize}
    \item
          We first obtain $\Gsub = (\Vsub, \Esub)$ by replacing every edge $\set{u, v} \in E$ with its own copy of the following path template (\ie the subgraph induced by the following nodes is a path):
          \begin{align*}
              (u, \quad \underbrace{s_{k-3}, \ldots, s_1}_\text{$k-3$ subdivision}, \quad \underbrace{w_{u,v},}_\text{named representative of edge} \quad \underbrace{s'_{1}, \ldots, s'_{k-3}}_\text{$k-3$ subdivision}, \quad v)
          \end{align*}

          For $k=3$, the template degenerates into $(u, w_{u,v}, v)$.
          As visualized in \cref{fig:trans1}, we conceptually partition the vertex set $\Vsub$ into
          (i) the original nodes~$V$,
          (ii) the edge nodes $\Vedges = \setc{w_{u,v}}{\set{u,v} \in E}$, and
          (iii) the subdivision nodes $V_\text{subs}$ with $|V_\text{subs}| = 2|E|(k-3)$.
          Observe that this transformation retains the degrees of the original nodes~$V$ and that all new nodes $\Vsub \setminus V$ have degree 2.

    \item
          We obtain the final graph $\Gtrans = (\Vtrans, \Etrans)$ from \Gsub by adding and connecting a copy of a so-called $\gamma$-\pitchfork to each original node $v \in V$.
          A $\gamma$-\pitchfork to the host node $v \in V$ is a $(\gamma {+} 1)$-star graph where exactly one satellite has an additional edge to $v$.
          We denote the star's center of the gadget attached to node $v \in V$ as the \emph{base node}~$b_v$.
\end{itemize}

The $\gamma$-\pitchforks encode \probVC into the \probMSP problem.
The main idea is that all nodes but the ones in \Vedges have a $(k,\rho)$-ball.
The latter are too distant to the leaves of their respective $\gamma$-\pitchforks.
For any edge node $w_{u,v} \in \Vedges$, however, it suffices that the distance to the $\gamma$-\pitchfork of either $u$ or $v$ is reduced with a single shortcut.
This roughly corresponds to having at least one endpoint of each edge in a vertex cover.

\subsection{Solving \probVCNN}\label{subsec:solving_vc}
Let $G = (V, E)$ be the input graph for \probVC, $\Gtrans = (\Vtrans, \Etrans)$ the graph obtained from the transformation in \cref{subsec:vc_transform}, and $\Gop = (\Vtrans, \Etrans \cup S)$ where $S$ is a minimal cardinality shortcut set to ensure $\Gop$ is a $(k, \rho)$ graph.
We begin by formally establishing the observation that exactly the nodes $\Vedges$ have no $(k, \rho)$ ball in $\Gtrans$:
\begin{lemma} \label{lemma:sufficiently-large}
    Fix $\gamma\geq 7|V|+6$ and let $\rho=\gamma$.
    Then, $v \in \Vtrans$ forms a $(k,\rho)$-ball, iff $v \notin \Vedges$.
\end{lemma}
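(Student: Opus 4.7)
The plan is to split the biconditional into its two directions and handle them by a case analysis on the vertex type in~\Vtrans, namely: pitchfork leaves, pitchfork bases $b_u$, special satellites $c_u$, original nodes $u\in V$, subdivision nodes $s_j$ (or $s'_j$), and edge nodes $w_{u,v}$. For each case we will either exhibit at least $\rho$ nodes within hop distance $k$ (to certify a $(k,\rho)$-ball) or show fewer than $\rho$ such nodes exist (so that $r_\rho \ge k+1 = \bar r_k$, ruling out a $(k,\rho)$-ball). All edge weights are unit, so $d$ and $\hat d$ coincide and distances reduce to hop counts.

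For the \emph{easy} direction ($v\notin\Vedges$ implies a $(k,\rho)$-ball), the workhorse is that every $\gamma$-pitchfork contributes $\gamma$ leaves at bounded hop distance from any node close to its host~$u$: distance $1$ from $b_u$, $2$ from $c_u$, $3$ from $u$, and, for a subdivision node $s_j$ on the $\{u,v\}$-gadget with $j$ counted from $w_{u,v}$, distance $(k-2-j)+3 = k+1-j$. Since $j \ge 1$ on every subdivision node, all these distances are $\le k$ whenever $k\ge 3$. Hence $r_\rho \le k$ and $\bar r_k \ge k+1$ give $r_\rho < \bar r_k$ in every non-edge case.

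For the \emph{hard} direction ($w_{u,v}\in\Vedges$ has no $(k,\rho)$-ball), I would carefully count the nodes within $k$ hops of $w_{u,v}$. Along the main path we collect exactly $2(k-3)+2 = 2k-4$ nodes ($s_1,\dots,s_{k-3},u,s'_1,\dots,s'_{k-3},v$). From each endpoint we have at most $k-(k-2)=2$ hops left, which yields at most $2d_u$ additional nodes on $u$'s side (one subdivision per incident edge at distances $k-1$ and $k$, plus $c_u$ and $b_u$) and symmetrically $2d_v$ on $v$'s side. Any pitchfork leaf sits at hop distance exactly $(k-2)+3=k+1$, and $v$'s pitchfork contributes nothing within $k$ hops. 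Summing gives at most $2k-4+2(d_u+d_v) \le 2k-4+4(|V|-1)$ reachable nodes, which is strictly less than $\gamma \ge 7|V|+6$ for any $k \in [3,n)$. Thus $r_\rho \ge k+1$, and because $\bar r_k(w_{u,v}) = k+1$ is witnessed by the leaves of $u$'s pitchfork, the required strict inequality $r_\rho < \bar r_k$ fails.

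The main obstacle I expect is the careful bookkeeping in the edge-node count: avoiding double-counting the subdivision $s_{k-3}^{(u,v)}$ (which is on the main path and \emph{also} a ``1-hop neighbor'' of $u$), handling the degenerate $k=3$ case where $V_\text{subs}=\emptyset$ and the main path collapses to $(u,w_{u,v},v)$, and making sure the chosen bound $\gamma \ge 7|V|+6$ leaves enough slack in the worst case $d_u = d_v = |V|-1$. Once these details are pinned down, both directions reduce to comparing the $O(k + |V|)$ count of $k$-hop neighbors of $w_{u,v}$ against $\gamma$, and all other vertex types dominate $\rho$ by a single pitchfork's worth of leaves.
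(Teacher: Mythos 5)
Your proposal is correct and follows essentially the same route as the paper: it verifies that every non-edge node reaches the $\gamma$ leaves of its nearest \pitchfork within $k$ hops (so those leaves alone certify the ball), and that an edge node $w_{u,v}$ sits at hop distance exactly $k+1$ from all \pitchfork leaves while the remaining reachable nodes number only $O(k+|V|) < \gamma$. The paper's own proof merely asserts these two facts from the construction; your version adds the explicit distance table and the $2k-4+2d_u+2d_v$ count, which is a welcome (and correct) elaboration rather than a different argument.
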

\begin{proof}
    Observe that by construction of~$\Gtrans$ and choice of $\gamma$, (i) the leaves of a single $\gamma$-pitchfork suffice to form a $(k, \rho)$-ball and (ii) any $(k, \rho)$-ball has to include the leaves of at least one $\gamma$-\pitchfork.
    As illustrated in \cref{fig:trans1}, all nodes except $\Vedges$ have a hop distance of at most $k$ to the leaves of their respective closest \pitchfork.
    Contrary, any edge node $w_{u,v} \in \Vedges$ requires $k+1$ hops to the leaves of the closest \pitchforks which are connected to nodes $u$ and $v$, respectively.
\end{proof}

For the remainder of this section, we assume that $\rho=\gamma$ and $\gamma\geq 7|V|+6$.
Then let~$S$ be a minimal cardinality \probMSP solution.
We now categorize the shortcuts into \trivial and \complex types, and show that any \complex shortcut can be canonicalized:

\begin{definition}\label{def:trivial}
    We call a shortcut \trivial iff it connects a node in $V$ to the base of its corresponding $\gamma$-\pitchfork (see \cref{fig:trans1}).
    All other shortcuts are called \complex.
\end{definition}

\begin{observation} \label{lemma:shortcut}
    A \trivial shortcut at node $u \in V$ allows all nodes $w_{u,v} \in \Vedges$ with $\set{u, v} \in E$ to form a $(k, \rho)$-ball.
\end{observation}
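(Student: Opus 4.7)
The plan is to confirm that adding the single shortcut $\{u,b_u\}$ — whose weight is $d_{\Gtrans}(u,b_u)=2$ by construction — is sufficient to enforce $r_\rho(w_{u,v})<\bar r_k(w_{u,v})$ for every edge node $w_{u,v}$ with $\{u,v\}\in E$. The central insight is that this shortcut reduces the hop distance from $w_{u,v}$ to each of the $\gamma$ leaves of $u$'s pitchfork that is not adjacent to $u$ from $k+1$ down to $k$, while leaving every other distance in $\Gtrans$ intact, so these leaves can supply the bulk of $w_{u,v}$'s $(k,\rho)$-ball.

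I would carry out the argument in three steps. First, a direct hop-count: trace the path $w_{u,v}\to s_1\to\cdots\to s_{k-3}\to u\to b_u\to\ell$ for an arbitrary leaf $\ell$ of $u$'s pitchfork not adjacent to $u$; it has $(k-2)+1+1=k$ edges and weight $(k-2)+2+1=k+1$. Since every $w_{u,v}$-to-$\ell$ path must cross $b_u$ and the only single-edge $u$-to-$b_u$ route is the new shortcut, this is a closest path, giving $\hat d(w_{u,v},\ell)=k$ and $d(w_{u,v},\ell)=k+1$. Second, I inventory the vertices of $\Gtrans\cup\{u,b_u\}$ at weighted distance at most $k$ from $w_{u,v}$: the $2(k-3)$ template subdivisions, the endpoints $u,v$, the pitchfork satellites adjacent to $u$ and $v$, the bases $b_u,b_v$ (with $b_u$ now pulled inside $k$ hops by the shortcut), and the first two subdivision vertices of every other edge incident to $u$ or $v$, totalling at most $O(|V|)$ vertices. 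As $\rho=\gamma\geq 7|V|+6$ by the parameter choice in \cref{lemma:sufficiently-large}, this count is far below $\rho$, so the $\gamma$ newly reachable leaves must populate the remaining slots of the $\rho$-closest neighborhood, pinning down $r_\rho(w_{u,v})=k+1$.

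The step I expect to require the most care is the strict inequality $r_\rho<\bar r_k$. I plan to enumerate all candidate vertices that remain beyond $k$ hops from $w_{u,v}$ in $\Gtrans\cup\{u,b_u\}$ and verify that each has weighted distance strictly greater than $k+1$; this is exactly the place where the paper's strengthening of the $(k,\rho)$-ball definition from $\leq$ to $<$ (noted after \cref{definition:k-rho-ball}) is tailored to cut through ties at hop $k+1$. Once verified, the observation follows immediately and provides the atomic building block for the bijection between trivial shortcuts and vertex-cover elements carried out in the remainder of \cref{subsec:solving_vc}.
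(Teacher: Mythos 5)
Your first two steps are correct and coincide with the argument the paper leaves implicit (the statement is an \emph{observation} given without proof; it is meant to follow from the construction together with part~(i) of the proof of \cref{lemma:sufficiently-large}): the \trivial shortcut $\set{u,b_u}$ has weight $2$, the path $w_{u,v}\to s_1\to\cdots\to s_{k-3}\to u\to b_u\to\ell$ has $k$ edges and weight $k+1$, and the number of vertices at weighted distance at most $k$ from $w_{u,v}$ is $O(|V|)<\gamma=\rho$, so the $\gamma$ leaves of $u$'s \pitchfork, now at hop distance $k$, fill the quota and pin $r_\rho(w_{u,v})=k+1$.

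The gap is in your third step, exactly where you anticipated difficulty: the verification that every vertex at hop distance more than $k$ from $w_{u,v}$ has weighted distance strictly greater than $k+1$ fails. The $\gamma$ leaves of the \pitchfork attached to the \emph{other} endpoint $v$ are untouched by a shortcut at $u$; they remain at hop distance $k+1$ and weighted distance exactly $k+1$ from $w_{u,v}$. Hence $\bar{r}_k(w_{u,v})=k+1=r_\rho(w_{u,v})$, and the strict inequality of \cref{definition:k-rho-ball} does not hold, so your plan cannot be completed as written. The conclusion is salvageable only under the reading ``some admissible choice of $\rho$ nearest neighbors of $w_{u,v}$ lies within $k$ hops'' (resolving the tie at distance $k+1$ in favor of $u$'s leaves), or after perturbing the construction so that the competing vertices at hop distance $k+1$ have weight strictly exceeding $k+1$; it does not follow from the literal condition $r_\rho<\bar{r}_k$ that the paper adopts precisely to disambiguate such ties. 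A correct write-up must confront this tie explicitly rather than defer it, since as posed your final verification returns ``false''.
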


\begin{restatable}{lemma}{singleBallForComplex}
    \label{lemma:single_ball_for_complex}
    Let $S$ be an optimal solution for \probMSP on $\Gtrans$ and $s_c \in S$ be an arbitrary \complex shortcut.
    Then, the removal of $s_c$ destroys exactly one $(k, \rho)$ ball.
\end{restatable}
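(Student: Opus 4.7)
The plan is to prove the two halves of the ``exactly one'' claim separately. For the easy direction (at least one ball destroyed), I observe that if removing $s_c$ destroyed no ball at all, then $S \setminus \set{s_c}$ would be a strictly smaller valid $(k,\rho)$-shortcut set, contradicting the optimality of $S$.

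For the harder direction (at most one), I argue by contradiction: assume removing $s_c$ destroys the balls of two distinct nodes $w_1, w_2$. By \cref{lemma:sufficiently-large} these must both lie in $\Vedges$, say $w_i = w_{u_i,v_i}$. With $\rho = \gamma \geq 7|V|+6$, the $\rho$ nearest neighbors of each $w_i$ are dominated by the $\gamma$ leaves of at least one adjacent $\gamma$-\pitchfork (rooted at, say, $u_i$); for all of those leaves to fit within $k$ hops, $w_i$ must reach the base $b_{u_i}$ in at most $k-1$ hops in $\Gop$, but not in $\Gop \setminus \set{s_c}$. Consequently $s_c$ must lie on a weight-minimum $w_i \to b_{u_i}$ path of at most $k-1$ edges.

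Since shortcuts preserve weight distances, the topology of weight-minimum paths is governed by the underlying $\Gtrans$, whose template structure forces every $w_i \to b_{u_i}$ path to traverse the unique chain $(w_i, s_1, \ldots, s_{k-3}, u_i, \mathrm{cs}(u_i), b_{u_i})$. Hence both endpoints of $s_c$ must lie on the intersection of these two chains. I then case on the overlap of $e_1$ and $e_2$: if $e_1 \cap e_2 = \emptyset$, the two chains are vertex-disjoint and no single shortcut can shave a hop off both, contradiction. If $e_1 \cap e_2 = \set{u}$ and both $w_i$ use the $u$-\pitchfork, the shared segment collapses to $\set{u, \mathrm{cs}(u), b_u}$, and every candidate pair of endpoints on it is either an already-present edge of $\Etrans$ or precisely the \trivial shortcut $\set{u, b_u}$, contradicting that $s_c$ is \complex. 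The mixed sub-cases (one $w_i$ exploiting the $u$-\pitchfork and the other a \pitchfork at a non-shared endpoint) reduce to the disjoint-chain case.

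The main obstacle is to justify the ``both endpoints lie on the intersection of the critical chains'' step rigorously when other shortcuts in $S$ potentially provide alternative weight-minimum routes. I plan to neutralize this by using that each shortcut's weight equals the pre-existing weight distance between its endpoints (so no new path of smaller total weight is introduced), together with a hop-counting argument showing that any weight-minimum $w_i \to b_{u_i}$ path of at most $k-1$ edges still has to anchor itself in the underlying template topology near both endpoints of $s_c$, preserving the contradiction.
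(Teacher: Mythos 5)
Your ``at least one'' half is fine and coincides with the paper's own first step (\cref{lemma:at-least-one}): if removing $s_c$ destroyed no ball, $S$ was not minimal. The ``at most one'' half, however, has a genuine gap exactly at the step you yourself flag as the ``main obstacle,'' and the proposed fix does not close it. You assert that for $w_i=w_{u_i,v_i}$ to have a $(k,\rho)$-ball, $w_i$ must reach the base $b_{u_i}$ of an adjacent \pitchfork in at most $k-1$ hops, so that $s_c$ must sit on the unique $w_i\leadsto b_{u_i}$ chain. This is not a consequence of weight-preservation plus local hop-counting alone: a node's ball could in principle be completed by an \emph{accumulation} of shortcuts each contributing only a few extra reachable nodes (individual leaves, neighboring subdivision chains, and --- for $k=3$ --- shortcuts between two original nodes or to a foreign \pitchfork base, each of which adds up to $|V|+1$ nodes to the reachable set of \emph{many} edge nodes simultaneously without lying on any $w\leadsto b$ chain). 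Ruling this out is a global counting argument about the whole solution $S$, not about a single weight-minimum path; the paper spends most of its effort here, first bounding the number of ``$+1$'' shortcuts and passing to a reduced threshold $\rho'=6|V|+7$ (\cref{lemma:optimal-solution-size} and \cref{remark:shortcut-more-than-const}), then showing via \cref{lemma:single-short-cor} that no combination of such shortcuts can be cheaper than \trivial shortcuts, with a separate, delicate case analysis for $k=3$ (cases (a), (b), (e) of \cref{fig:shortest-path-tree-bet}). Your chain-intersection framework never sees these configurations because they do not lie on any $w_i\leadsto b_{u_i}$ chain, yet they must be excluded before your premise holds.

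Once that premise is granted, your endgame is sound and is essentially a compressed version of the paper's case distinction: disjoint chains cannot both contain a single shortcut, and on a shared segment $u$--$\mathrm{cs}(u)$--$b_u$ the only non-edge pair is the \trivial shortcut $\set{u,b_u}$, contradicting that $s_c$ is \complex. So the skeleton is recoverable, but as written the proof is incomplete: you would need to import (or reprove) the independence-of-shortcuts and accumulation-bounding machinery of \cref{lemma:single-short-cor} before the localization step is legitimate, and you would need to treat $k=3$ separately since there the ``unique chain'' degenerates and new shortcut types (original node to original node, original node to a foreign base) become geometrically possible.
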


\begin{proof}[Proof sketch]
    \shortlong{The proof is included in the full version of this paper~\cite{fullversion}}{We refer to \cref{{sec:append_complex_shortcuts}} for the full proof}.
    Roughly speaking, it establishes two central properties of \complex shortcuts in a minimal shortcut set:
    \begin{itemize}
        \item No two \complex shortcuts interact in a meaningful way, \ie there is no shortcut $s_1$ that relies on a shortcut $s_2$ to close a $(k, \rho)$ ball.

        \item The distance between any two edge nodes in $\Vedges$ is sufficiently large, such that no \complex shortcut can affect more than one edge node~$v \in \Vedges$. \qedhere
    \end{itemize}
\end{proof}

\begin{corollary}\label{cor:canonical_msp}
    For any transformation~$G_T$, there exists an optimal \probMSP-solution~$S$ containing only \trivial shortcuts.
    We call such an~$S$ \emph{canonical}.
\end{corollary}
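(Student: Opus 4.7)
The plan is to use a direct exchange argument on any optimal solution~$S$, converting every \complex shortcut into a \trivial one without increasing the cardinality. Concretely, if $S$ contains some \complex shortcut $s_c$, I would apply \cref{lemma:single_ball_for_complex} to identify a unique node $v^\star$ whose $(k,\rho)$-ball is destroyed when $s_c$ is removed. By \cref{lemma:sufficiently-large} every node outside~$\Vedges$ already has a $(k,\rho)$-ball in~$\Gtrans$ itself, so $v^\star$ must be an edge node $w_{u,v}\in\Vedges$ for some $\set{u,v}\in E$. I then replace $s_c$ by the \trivial shortcut $s_t$ connecting $u$ with its base~$b_u$ (the endpoint~$v$ would work symmetrically).

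The next step is to argue that $S':=(S\setminus\{s_c\})\cup\{s_t\}$ is still a valid \probMSP solution. Every $(k,\rho)$-ball in $G_T\cup S$ that did not rely on~$s_c$ is still present in $G_T\cup S'$, since adding the edge~$s_t$ can only shrink hop distances and therefore cannot destroy balls. The single ball lost by removing~$s_c$, namely the one at~$w_{u,v}$, is restored by~$s_t$ via \cref{lemma:shortcut}. Hence $S'$ certifies that $G_T\cup S'$ is a $(k,\rho)$-graph.

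It remains to show that the exchange does not grow the solution, that is $s_t\notin S$. If instead $s_t$ were already contained in~$S$, then \cref{lemma:shortcut} would imply that $w_{u,v}$ already has a $(k,\rho)$-ball in $G_T\cup(S\setminus\{s_c\})$, so $S\setminus\{s_c\}$ would be a strictly smaller valid solution, contradicting the optimality of~$S$. Consequently $|S'|=|S|$, and $S'$ contains one fewer \complex shortcut than~$S$. Iterating this replacement, which terminates after at most $|S|$ steps, yields a canonical optimal solution.

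The only delicate point is ensuring that successive exchanges do not interfere: after a swap, the new trivial shortcut could, in principle, create new dependencies for the remaining \complex shortcuts. This is handled cleanly because \cref{lemma:single_ball_for_complex} is applied freshly to the current optimal set at every iteration; since each intermediate $S'$ is optimal and contains strictly fewer \complex shortcuts than the previous set, the hypotheses of the lemma continue to hold and the argument goes through unchanged.
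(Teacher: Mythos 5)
Your proposal is correct and follows essentially the same exchange argument as the paper: invoke \cref{lemma:single_ball_for_complex} to isolate the unique edge node $w_{u,v}$ affected by a \complex shortcut, swap it for the \trivial shortcut $\set{u,b_u}$ justified by \cref{lemma:shortcut}, and recurse. You additionally make explicit two points the paper leaves implicit --- that $s_t\notin S$ (else optimality is violated) and that re-applying the lemma to each intermediate optimal set handles potential interference --- which is a welcome tightening but not a different approach.
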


\begin{proof}
    Let $s \in S$ be an arbitrary \complex shortcut; if non exists, the claim follows.
    The removal of~$s$ destroys the $(k, \rho)$-ball of a unique~$w_{u,v} \in \Vedges$ due to \cref{lemma:single_ball_for_complex}.
    Thus we can replace~$s$ with the \trivial shortcut $s' = \set{u, b_u}$ (or ---equivalently--- $s' = \set{v, b_v}$).
    Observe that $S' = S \setminus \set{s} \cup \set{s'}$ has the same size $|S| = |S'|$, still turns \Gtrans into a $(k, \rho)$-graph, but has one fewer \complex shortcut.
    Finally recurse until $S'$ is canonical.
\end{proof}

\begin{lemma} \label{lemma:correctness}
    An optimal solution set for \probMSP on $\Go$ can be transformed into a solution for \textsc{vertex cover} on $G$.
\end{lemma}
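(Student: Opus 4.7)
The plan is to exploit \cref{cor:canonical_msp}, which guarantees an optimal \probMSP-solution $S$ in canonical form, \ie one consisting solely of trivial shortcuts $\set{x, b_x}$ with $x \in V$. From such an $S$ I read off the vertex cover candidate
\[
    C = \setc{u \in V}{\set{u, b_u} \in S}.
\]
Distinct trivial shortcuts correspond to distinct $V$-endpoints, so $|C| = |S|$, which is the cardinality preservation needed for the reduction.

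It remains to verify that $C$ covers every edge of $G$. Fix $\set{u,v} \in E$ and consider its edge node $w_{u,v} \in \Vedges$. By \cref{lemma:sufficiently-large}, $w_{u,v}$ has no $(k,\rho)$-ball in \Gtrans; since $\Gop$ is a $(k,\rho)$-graph by assumption, some shortcut in $S$ must restore the ball of $w_{u,v}$. Among all canonical shortcuts $\set{x, b_x}$, \cref{lemma:shortcut} already establishes that $\set{u, b_u}$ or $\set{v, b_v}$ suffice individually, so the remaining claim to establish is that no shortcut at a different $x$ can help $w_{u,v}$.

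For this necessity direction I would argue by distance: a trivial shortcut at some $x \notin \set{u,v}$ only shortens paths between $x$ and the leaves of its own pitchfork, and only by a single hop (weights are preserved). Every path from $w_{u,v}$ to the leaves of $x$'s pitchfork must first exit the edge-template containing $\set{u,v}$ through either $u$ or $v$, then traverse at least one further path-template to reach $x$. This forces both hop and weight distance from $w_{u,v}$ to the leaves of $x$'s pitchfork to strictly exceed the corresponding distance to the pitchforks of $u$ or $v$; in particular these leaves cannot be brought within hop distance $k$, nor do they enter the $\rho = \gamma$ closest vertices of $w_{u,v}$. Consequently the only canonical shortcuts able to create a $(k,\rho)$-ball at $w_{u,v}$ are $\set{u, b_u}$ and $\set{v, b_v}$, forcing $u \in C$ or $v \in C$.

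The main obstacle is this necessity argument: one has to rule out that \emph{combinations} of trivial shortcuts at several third nodes $x \notin \set{u,v}$ cooperate in some unexpected way to cover $w_{u,v}$. Here the $k-3$ subdivision vertices inserted into every edge-template are essential, because they make every pitchfork other than those of $u$ and $v$ strictly farther than $k+1$ in weight from $w_{u,v}$, irrespective of which canonical shortcuts are added elsewhere. Once this geometric property is in place, the bijection between canonical shortcuts and vertex-cover elements yields the claim and, together with the converse direction handled by \cref{lemma:shortcut}, completes \cref{theorem:kp-msp-hardness}.
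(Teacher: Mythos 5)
Your proposal is correct and follows essentially the same route as the paper: pass to a canonical optimal solution via \cref{cor:canonical_msp}, read off $C=\setc{u}{\set{u,b_u}\in S}$, and combine \cref{lemma:shortcut} with the cardinality comparison $|C|=|S|$ against the converse direction to get optimality. The only difference is that you explicitly spell out the necessity step --- that a \trivial shortcut at any $x\notin\set{u,v}$ cannot restore the $(k,\rho)$-ball of $w_{u,v}$ because the leaves of $x$'s \pitchfork are too far to be among its $\rho$-nearest neighbors --- which the paper leaves implicit behind its citation of \cref{lemma:shortcut}; your distance argument for that step is sound.
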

\begin{proof}
    Let $C$ be an optimal \probVC solution on $G$.
    Then, by \cref{lemma:shortcut}, we know that $S = \setc{ \set{v, b_v}}{v \in C}$ consists only of \trivial shortcuts and turns $\Gtrans$ into a $(k, \rho)$-graph; \ie any optimal \probMSP solution~$S$ on \Gtrans satisfies $|S| \le |C|$.

    Let $S$ be an optimal solution of \probMSP on \Gtrans.
    Due to \cref{cor:canonical_msp}, assume without loss of generality that $S$ is canonical.
    Then, $C = \setc{u}{\set{u, b_u} \in S}$ is a \probVC (see \cref{lemma:shortcut}).
    Since $|S| \le |C|$, $C$ is minimal.
\end{proof}

\section{A rigorous bound on on the approximation ratio of \texorpdfstring{$\bm{k\rho}$}{kρ}-DP}
\label{sec:lower_bounds}
In \cref{section:kp-msp-np-hard}, we show that computing a minimal shortcut set~$S$ to convert an arbitrary graph into a $(k,\rho)$-graph can be expensive.
This has direct implications for preprocessing steps of algorithms based on $(k,\rho)$-graphs, which tend to favor large values of $\rho$.
For instance, the depth\footnote{
    The depth of a parallel algorithm is the length of its critical path, bounding the runtime from below --- even for an unbounded number of processors.
}
of \textsc{radius-stepping} on a $(k, \rho)$-graph with small constant $k$ is bounded by $\Oh(\frac{\log \rho}{\rho} \cdot
    n \log(n) L)$ where $L$ is the maximal edge weight.~\cite{10.1145/2935764.2935765}

Observe that any graph can be transformed into a $(k,\rho)$-graph by adding $\rho$ shortcuts per node, \ie $|S| \le n \rho$ is trivial.
Blelloch~\etal~\cite{10.1145/2935764.2935765} introduce two heuristics for smaller~$S$: the greedy \algoGreedy and \algoDP involving dynamic programming.
The authors show that \algoGreedy with $\rho=n-1$, $k=2$ suffers from an approximation ratio of at least $n-5$.
They demonstrate that, in practice, \algoDP yields smaller solutions than \algoGreedy.

Given a graph~$G=(V, E)$, the \algoDP heuristic~\cite{10.1145/2935764.2935765} uses a dynamic program to compute the minimal number of shortcuts needed to form a $(k, \rho)$-ball around a node~$s \in V$.
While the solution is optimal for this node conditioned on a shortest path tree (\cref{fig:not-one-short} depicts why the conditioning is necessary), \algoDP may not find a global minimal $|S|$, since the algorithm treats all $(k, \rho)$-balls independently (and in parallel).

For a fixed start node~$s \in V$ and let $T_s$ be a shortest path tree to the $\rho$-nearest neighbors of~$s$, such that for all $v \in T_s$ the path between $s$ and $v$ in $T_s$ has the least hops possible in~$G$.
If the height of $T_s$ is at most $k$, $s$ already has an $(k, \rho)$-ball; otherwise, we need to add shortcuts.
Consider some node~$u \in T_s$ with $u \ne s$, let $p$ be its parent, and denote $p$'s depth as $t = \hat d(s, p)$.
Then, define $F(u,t)$ as the smallest number of shortcuts into the subtree (of $T_s$) rooted in $u$ required to put $u$ and its children within the $(k, \rho)$-ball of $s$:
{\small
\begin{align*}
    F(u,t) = \min\Bigg(
    \overbrace{1+\sum_{w \in N^+(u)} F(w,1)}
    ^{\substack{\text{add shortcut $\set{s, u}$:} \\ \text{depth of $u$ is now $1$}}}, \ \
    \overbrace{\sum_{w \in N^+(u)} F(w,t+1)}
    ^{\substack{\text{no shortcut added to $u$}   \\ \text{depth of $u$ remains $t+1$}}}
    \Bigg)
    \quad \text{if $t < k$ else } \quad
    F(u, t)  = \infty
\end{align*}
}

\noindent
Summing over the children of $s$, $\sum_{u \in N(s)} F(u,0)$, yields the number of shortcuts on $s$.
To the best of our knowledge, we now bound the solution quality of \algoDP for the first time:

\begin{theorem} \label{theorem:approximation-factor}
    The approximation ratio of \algoDP on a graph with $\Theta(n)$ nodes is at least $n-1$ and for \algoGreedy $\Omega(n^2)$ for any constant $k\geq 3$ and some parameter $\rho$.
\end{theorem}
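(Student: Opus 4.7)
The plan is to construct a family of graphs $G_n$ on $\Theta(n)$ vertices for which the \probMSP optimum is $\Oh(1)$, while \algoDP is forced to add at least $n-1$ shortcuts and \algoGreedy is forced to add $\Omega(n^2)$ shortcuts. The hook to exploit is already signalled in the algorithm description: \algoDP fixes a shortest-path tree $T_s$ per node $s$ and optimizes the recursion $F(u,t)$ on that tree in isolation, so a single shortcut that would simultaneously close the $(k,\rho)$-balls of many nodes remains invisible to the heuristic.

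For \algoDP, I would use a hub-and-satellite gadget: a constant-size hub is attached to $n - \Oh(1)$ ``satellite'' nodes $s_1, s_2, \ldots$, such that the $\rho$ nearest neighbors of every satellite form a single cluster $C$ sitting at hop distance exactly $k+1$, and such that every closest path from any satellite to any $c \in C$ traverses the same bottleneck edge $\{x, y\}$ inside the hub. A single shortcut $\{x, c\}$ for any $c \in C$ then puts $C$ within $k$ hops of every satellite, proving $\text{OPT} \le 1$. Conversely, the tree $T_{s_i}$ used by \algoDP is rooted at $s_i$ and contains $C$ inside a single subtree of depth $k+1$, so the recursion $F(\cdot,\cdot)$ is forced to charge at least one shortcut per satellite. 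Because the satellites' subtrees are disjoint in $T_{s_i}$, the contributions do not overlap and sum to at least $n-1$.

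For \algoGreedy I would layer a second copy of the gadget: on top of the hub-level bottleneck, each satellite $s_i$ is attached to a private sub-gadget of size $\Theta(n)$ in which $\Omega(n)$ locally profitable shortcuts are each preferred by the greedy score over the one hub-shortcut that would consolidate them. Aggregated across $\Theta(n)$ satellites against $\text{OPT} = \Oh(1)$, this yields the claimed $\Omega(n^2)$ ratio.

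The main obstacle, which drives the whole argument, is enforcing uniqueness of closest paths: every satellite must reach every $c \in C$ by a unique closest path through the intended bottleneck, otherwise an alternative shortest-path tree could hand \algoDP a useful shortcut for free. I plan to break ties with a careful weighting scheme on subdivision paths, analogous to those used in \cref{subsec:vc_transform}, and then to argue --- in the spirit of \cref{lemma:single_ball_for_complex} --- that no single \complex shortcut can consolidate two satellites' balls at once. Once uniqueness is locked in, the $F(u,t)$ recursion can be unrolled symbolically and both lower bounds follow by counting.
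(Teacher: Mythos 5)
Your construction for \algoDP is essentially the paper's: the paper takes a star $S_n$ (with each edge subdivided into a path of length $k-3$, as in the reduction) and attaches a single $\gamma$-\pitchfork to the center, so that only the satellites lack $(k,\rho)$-balls, a single \trivial shortcut at the center's \pitchfork fixes all of them at once (\cref{lemma:shortcut}), and \algoDP --- which only ever proposes shortcuts incident to the current source --- must spend one shortcut per satellite, giving the ratio $n-1$. Your ``hub-and-satellite with a shared far cluster behind a bottleneck'' is the same mechanism, and your observation that the per-source trees force one charge per satellite is the right reason. (Your worry about tie-breaking among closest paths is harmless but unnecessary here: with unit weights the star-plus-\pitchfork structure leaves \algoDP no useful alternative tree, since the one consolidating shortcut is not incident to any source that lacks a ball.)

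The \algoGreedy part of your proposal has a genuine gap. You attach to each of the $\Theta(n)$ satellites a \emph{private} sub-gadget of size $\Theta(n)$; this makes the instance have $\Theta(n^2)$ nodes, so even if greedy adds $\Omega(n^2)$ shortcuts you have only shown a ratio of $\Omega(N)$ in terms of the actual graph order $N$, not the claimed $\Omega(n^2)$ on a $\Theta(n)$-node graph. Worse, because the sub-gadgets are pairwise disjoint, no single shortcut can serve two of them, so the optimum itself is forced up to $\Omega(n)$ and the ratio collapses further to $\Oh(n)$. The paper needs no second layer at all: on the very same $\Theta(n)$-node star-plus-\pitchfork instance, \algoGreedy run from each satellite connects that satellite directly to each of the $\gamma - (k-2+2n) = \Theta(n)$ \pitchfork leaves it cannot reach in $k$ hops, so each of the $n-1$ satellites contributes $\Theta(n)$ shortcuts, for $\Omega(n^2)$ in total against an optimum of $1$. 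Replacing your layered gadget with this counting argument on the shared \pitchfork closes the gap.
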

\begin{proof}
    Consider a star graph $S_n$ where each edge is replaced by a path of length $k-3$ as previously described in the reduction and add a single $\gamma$-\textsc{pitchfork} at the center node.
    In this graph only the satellite nodes do not form $(k,\rho)$-balls for some $k$ and some $\rho$.
    Clearly, by \cref{lemma:shortcut}, a single shortcut at the only $\gamma$-\textsc{pitchfork} is enough to transform this graph into a $(k,\rho)$-graph.
    \algoDP will add $n-1$ shortcuts as the only shortcuts considered start at the nodes themselves, thus preventing it to shortcut the $\gamma$-\textsc{pitchfork} gadget as any node within, by definition, already forms a $(k,\rho)$-ball.
    Hence, the aforementioned construction yields a graph of order $n+(n-1)\cdot (k-3)+2+\gamma=\Theta(n)$ which can be transformed into a $(k,\rho)$-graph by a single shortcut while \algoDP adds $n-1$.
    On the same construction the greedy heuristic has an approximation factor of $\Omega(n^2)$, every satellite node will add $\gamma-(k-2+2\cdot n)=\Theta(n)$ shortcuts thus resulting in $\Omega(n^2)$ added shortcuts.
\end{proof}

\section{The \texorpdfstring{$\bm{k\rho}$-DP-*}{kρ-DP-*} family of heuristics}
\label{sec:preprocessing}
Based on the insights leading to the lower bounds of \algoDP, we introduce two simple preprocessing steps for \algoDP, namely \algoDPPC (\underline{\textbf p}air-short\underline{\textbf c}utting) and \algoDPSA (\underline{\textbf s}et \underline{\textbf a}lignment).
These new heuristics address two separate issues introduced by a central property of \algoDP: All proposed shortcuts start at the source vertex on which \algoDP was invoked.

This property rules out any shortcuts starting from nodes which already form $(k,\rho)$-balls even if they are globally beneficial, such as \trivial shortcuts within a $\gamma$-\textsc{pitchfork} gadget.
We address this issue with \algoDPPC.
The new heuristic considers a larger set of shortcut candidates and introduces a global pooling phase to prune unpromising ones.

Additionally, since one end point of each shortcut introduced by \algoDP is determined by the current source vertex, there is little to no overlap between candidates of different source vertices.
Denote the set of shortcuts found for source vertex~$i$ by $S_i$. 
Then, for directed graphs for any $i \ne j$, the local solutions $S_i$ and $S_j$ are disjoint; for undirected graphs, each shortcut appears at most twice, \ie  $\sum_i |S_{i}| \leq 2 |S_{1}\cup S_{2}\cup \cdots \cup S_{n}|$.
Hence, we cannot expect significant savings from detected direct synergies between local solutions. 
Instead, our novel \algoDPSA generates a set of local (minimal) shortcut candidates by perturbing a single solution derived from \algoDP to maximize the inter-set alignments.
In \textbf{some} cases, our proposed algorithm allows the transformation of a shortcut set with $\Theta(n)$ distinct shortcuts to a global shortcut set of constant size.

\begin{figure}
    \begin{subfigure}[t]{0.47\textwidth}
    \centering
    \scalebox{0.8}{
\begin{tikzpicture}[vertex/.style={draw,circle}, bl-edge/.style={color=cb-red-orange,dashed}, bl-pc-edge/.style={color=cb-magenta, dashed}]
    \node[draw,circle, color=cb-magenta, inner sep=0.07cm] (a0) at (0,0) {\tiny u};
    \foreach \i in {1,...,6} {
        \ifthenelse{\equal{\i}{2} \OR \equal{\i}{4} \OR \equal{\i}{6}}
        {\node (a\i) at (\i,0) {$\cdots$};}
        {

            \node[vertex,color=cb-magenta] (a\i) at (\i,0) {};
        }
    }
    \node[vertex, color=cb-magenta, inner sep=0.07cm] (a7) at (7,0) {\tiny v};

    \foreach \i in {0,...,6} {
        \pgfmathtruncatemacro\nexti{\i+1}
        \draw (a\i) -- (a\nexti);
    }

    \foreach \i in {1,...,7} {
        \ifthenelse{\equal{\i}{2} \OR \equal{\i}{4} \OR \equal{\i}{6}}
        {}
        {\draw[bl-edge,bend left=(\i+1)*3] (a0) to[->,> = latex] node[midway,above=0.2cm] (a0-a\i) {} (a\i);}
    }

    \foreach \i in {5,7} {
        \draw[bl-pc-edge,bend right=(\i+1)*2] (a1) to[->,> = latex] node[midway,below=0.2cm] (a1-a\i) {} (a\i);
    }

    \foreach \i in {7} {
        \draw[bl-pc-edge,bend right=(\i+1)*2] (a3) to[->,> = latex] node[midway,below=0.2cm] (a3-a\i) {} (a\i);
    }

    \node[align=left] (kp-dp) at (1.5,1.5) {{\color{cb-red-orange}\algoDP}\\{\color{cb-magenta}\algoDPPC}};
    \draw[color=cb-magenta,dashed] (0,1.25) -- (0.5,1.25);
    \draw[color=cb-red-orange,dashed] (0,1.75) -- (0.5,1.75);

    \draw[decorate, decoration={brace,amplitude=6pt,mirror,raise=17pt}] (a0.south west) -- node[below=0.7cm] {$\hat d(u,x) \leq k-1$} (a3.south east);
    \draw[decorate, decoration={brace,amplitude=6pt,mirror,raise=17pt}] (a5.south west) -- node[below=0.7cm] (lft) {$\hat d(y,v) \leq k-1$} (a7.south east);
\end{tikzpicture}}
\caption{The additional shortcuts considered by \algoDPPC in comparison to \algoDP for any pair of nodes $(u,v)$ for which $v$ is not within the $(k,\rho)$-ball of $u$.}
    \end{subfigure}
    \hfill
    \begin{subfigure}[t]{0.22\textwidth}
    \centering
    \scalebox{0.8}{
    \begin{tikzpicture}[vertex/.style={draw,circle},node distance=0.7cm, faded/.style={opacity=0.3}]
        \node[vertex] (b) {};

        \node[vertex,below right of=b] (b1) {};
        \node[vertex,right=0.3cm of b1] (b2) {};
        \node[vertex,faded,left=0.6cm of b1] (b3) {};
        \node[vertex,below of=b] (c) {};
        \node[vertex,below of=c] (d) {};
        \node[vertex,below of=d] (d1) {};
        \node[vertex,below left of=d] (d0) {};
        \node[vertex,faded,below right of=d] (d2) {};
        \node[vertex,faded,right of=d2] (d3) {};

        \draw (b) -- (c);
        \draw (c) -- (d);
        \draw (c) -- (d);
        \draw (d0) -- (d);
        \draw (d1) -- (d);
        \draw[faded] (d2) -- (d);
        \draw[faded] (d3) -- (d);
        \draw (b1) -- (b);
        \draw (b2) -- (b);
        \draw[faded] (b3) -- (b);
    \end{tikzpicture}}
    \caption{At most two \emph{leafs} of any node in a closest path tree are considered for the score.}
    \label{fig:important-breadth}
    \end{subfigure}
    \hfill
    \begin{subfigure}[t]{0.29\textwidth}
    \centering
    \scalebox{0.8}{
    \begin{tikzpicture}[vertex/.style={draw,circle},node distance=0.5cm]
    \node[color=cb-pink,vertex] (u) {\tiny u};
    \node[right=0.1cm of u] (ulabel) {$T_u$};
    \node[above left=0.1cm and -1cm of u] (ulabel2) {\small$S_u=\set{(u,x),(u,y),(u,z)}$};
    \node[vertex,color=cb-pink,below left=of u] (u0) {};
    \node[vertex,color=cb-pink,below right=of u] (u1) {};

    \node[vertex, color=cb-blue, below left=of u1] (u10) {\tiny y};
    \node[left=0.1cm of u10] (u10label) {$T_y$};
    \node[vertex,color=cb-pink,below right=of u1] (u11) {};

    \node[vertex,color=cb-blue, below left=of u10] (u100) {};
    \node[vertex,color=cb-blue, below right=of u10] (u101) {};

    \node[vertex,color=cb-blue, below=of u10] (u1000) {};
    \draw[color=cb-blue] (u10) -- (u1000);
    
    \node[vertex,color=cb-blue, below left=of u100] (u1001) {};
    \draw[color=cb-blue] (u100) -- (u1001);

    \node[vertex,color=cb-blue,below right=of u101] (u1011) {};

    \node[vertex,color=cb-pink,below left=of u0] (u00) {};

    \node[vertex,color=cb-red-orange,below left=of u00] (u000) {\tiny x};
    \node[right=0.1cm of u000] (u000label) {$T_x$};
    \node[vertex,color=cb-red-orange,below=of u000] (u00m) {};

    \draw[color=cb-pink] (u) -- (u0);
    \draw[color=cb-pink] (u) -- (u1);
    \draw[opacity=0.2] (u1) -- (u10);
    \draw[color=cb-pink] (u1) -- (u11);

    \draw[color=cb-blue] (u10) -- (u100);
    \draw[color=cb-blue] (u100) -- (u1001);
    \draw[color=cb-blue] (u10) -- (u101);

    \draw[color=cb-blue] (u101) -- (u1011);

    \draw[color=cb-pink] (u0) -- (u00);
    \draw[opacity=0.2] (u00) -- (u000);

    \draw[color=cb-red-orange] (u000) -- (u00m);

    \draw[bend right,dashed,opacity=0.4] (u) to (u000);
    \draw[dashed,opacity=0.4] (u) to (u10);

    \node[vertex,color=cb-magenta,below left=of u000] (u0000) {\tiny z};
    \draw[color=cb-magenta] (u0000) -- ++(0,-0.5cm);

    \node[above left=0.1cm and -0.3cm of u0000] (u000label) {$T_z$};
    \draw[opacity=0.2] (u000) -- (u0000);
    \draw[bend right,dashed,opacity=0.4] (u) to (u0000);
    \end{tikzpicture}}
    \caption{A part of the shortest path tree decomposition $T_u'$ for $u$ induced by the shortcut set $S_u$ derived from \algoDP.}
    \label{fig:algo-dp-dr}
    \end{subfigure}
    \caption{Schematics for the central components of \algoDPPC (a), (b) and \algoDPSA (c).}
\end{figure}

\subsection{Pair shortCutting: \texorpdfstring{$\bm{k\rho}$}{kρ}-DP-PC} \label{subsec:pc}
For every node pair $(u,v)$ with $u\in V$, $v \in N_{\rho}(u)$, and $ \hat d(u,v) > k$, consider a closest path from $u$ to $v$.
Let $S_{u,v}$ be the set of node pairs on this closest path such that any tuple $(x,y)$ in $S_{u,v}$ successfully asserts that a shortcut from $x$ to $y$ would move $v$ into the $(k,\rho)$-ball of $u$:
\begin{align} \label{definition:shortcut-set}
&&S_{u,v}=\set{(x,y)\ |\ \hat d(u,x)+1+\hat d(y,v) \leq k}
\end{align}

Observe that a candidate $(x,y) \in S_{u,v}$ can appear in other contexts~$S_{u', v'}$ as well.
Hence, we rate its global relative importance as the sum of local scores.
The local score of node $u$ is the reciprocal of the number of vertices that $u$ is still missing from its $(k,\rho)$-ball if only $(x,y)$ were inserted.
%By construction, a candidate which forms, on its own, the largest fraction of $(k,\rho)$-balls ends up with the highest score.
%Then, we accept all candidates with scores exceeding $\mu+3\cdot\sigma\cdot\alpha(n, \rho)$ where $\alpha(n, \rho)$ is any slowly increasing function of $n$ and $\rho$ (our implementation uses $\alpha(n,\rho) = \sqrt[4]{n/\rho}$).
Then, we accept all candidates with scores exceeding $\mu+3\cdot\sigma$ where $\mu$ and $\sigma$ denote the mean and standard deviation of all scores within the shared hash map.
Finally, we run the original \algoDP heuristic on the resulting graph to form the remaining $(k,\rho)$-balls.

Notice that reducing $\rho$ with respect to $n$ reduces the number of interactions between distinct shortcuts.
To preserve the possibility of finding exceptional shortcuts we only include shortcuts where at least $k$ distinct vertices participated in its score and every of these distinct vertices has at least two nodes which are moved into their $(k,\rho)$-ball by the addition of the shortcut.
Furthermore we use the concept of \emph{important breadth} (\cref{fig:important-breadth}) since two leafs suffice so that our heuristic will give a higher score to the parent of both leafs than to any leaf individually, further considered leafs only exaggerate the importance of this shortcut while overshadowing shortcuts which decrease the depth \eg a $\gamma$-\textsc{pitchfork} gadget versus a path of length $2k+1$.
In addition to that the usage of the average and standard deviation allows us to select only exceptional shortcuts to increase the probability that such a shortcut improves the final solution.

The number of shortcuts processed for each node is bounded by $\Oh(\rho\cdot\sum_{i\leq k} i)=\Oh(\rho\cdot k^2)$.
All nodes can be processed in parallel and the global score can be shared by using a concurrent hash map supporting insert and update in expected $\Oh(1)$ time when $\rho=\Omega(\sqrt{n})$.
For $\rho=o(\sqrt{n})$ we have to fall back to a sorting algorithm followed by a prefix sum to avoid congestion problems for the shared hash map resulting in a worse span bound of $\Oh(\lceil\frac{\rho\cdot k^2}{\log\log n}\rceil(\log n)\log\log n)$.
Surrogate sums for both $\sigma$ and $\mu$ can be computed alongside the accumulation phase in $\Oh(1)$\footnote{We only need to keep track of two sums $\sum X^2$ and $\sum X$ to calculate both in the end in $\Oh(\log n)$. To keep track of the final $X$ for $X^2$ without knowing it we use  $X^2=1^2-1^2+...+(X-1)^2-(X-1)^2+X^2$.} per processed shortcut.
A final parallel iteration over the concurrent hash map containing at most $\max(n\cdot\rho\cdot k^2,(n-1)\cdot n)$ elements neither increases span nor work.
Thus we derive the following span bounds of \algoDPPC; $\Oh(\rho\cdot k^2)$ for $\rho=\Omega(\sqrt{n})$ and $\Oh(\lceil\frac{\rho\cdot k^2}{\log\log n}\rceil(\log n)\log\log n)$ otherwise.
For $k=\Oh(\sqrt{\log\rho})$ and $\rho=\Omega(\sqrt{n})$ this bound matches the one of \algoDP.
Notice that even in the regime $\rho=o(\sqrt{n})$, the worse span bounds are still below the span bounds for all known algorithms for \probPSSSP on $(k,\rho)$-graphs~\cite{delta-star-stepping}.

\subsection{Set Alignment: \texorpdfstring{$\bm{k\rho}$}{kρ}-DP-SA} \label{subsec:sa}
Given a graph $G=(V,E)$, let $S = S_1\cup S_2 \cup \cdots \cup S_n$ be a shortcut set, where $S_i$ denotes shortcuts starting in node $i$ as computed by \algoDP.
Then, for each node~$i$, we can decompose its shortest path tree~$T_i$ as illustrated in \cref{fig:algo-dp-dr} into separate subtrees induced by $S_i$.
For each $(i,x) \in S_i$ define the subtree $T_{i,x}$ rooted in $x$ such that all leafs of $T_{i,x}$ are either leafs of $T_i$ or the parents of a node $y\colon (i,y) \in S_i, y\neq x$.
Now, let $A=(v_1=i,v_2, \ldots ,v_k=x)$ denote a closest path in $G$ connecting $i$ to $x$ and let $A_S := \set{v_z\ |\ v_z \in A\land (i, v_z) \in S_i}$ be the set of nodes on this path which are the target of a shortcut from $i$.
For every subtree $T_{i,x}$, where $\textnormal{depth}(T_{i,x})< k-1$ we can construct a set $S_i':= S_i\cup\set{(t,x)} \setminus\set{(i,x)}$ where $t \in \set{v_{z+\delta}\ |\ v_z \in A_S\cup\set{v_1}, 0\leq \delta < k-1-\textnormal{depth}(T_{i,x})}$ which also constitutes a minimal shortcut set for node $i$.

Hence for any node $i$, we build the set $S_D:=\bigcup_{x,t} S_i'\cup S_i$ of distinct shortcuts being part of an optimal solution for $i$ and increase the score of such a shortcut by one in a concurrent shared hash map\footnote{Since we only need to know if a shortcut is shared by at least two nodes there is no write congestion, hence $\Oh(1)$ in expectation.}.
Any entry in it displaying a score of more than one signifies a shortcut that is in the intersection of at least two locally optimal solutions of unique nodes.

The number of perturbations a single node $i$ can construct of its shortcut set is bounded by $\Oh(|S_i|\cdot (k-1) \cdot \max_{A_S}|A_S|)$. By \cref{lemma:max-size} this is again bounded by $\Oh(\rho\cdot \max_{A_S}{|A_S|})$.
This increases our span bounds as $\max_{A_S}|A_S| = \Omega(\frac{\rho}{k})$ for some graphs.
However, by only considering a subset of $\log\rho$ predecessors for any shortcut, we can bound the span to $\Oh(\rho\log\rho)$ and work to $\Oh(n\rho\log\rho)$; this matches the performance of \algoDP.
Notice that in practice this limitation is often insignificant as many random graph classes (including Gilbert or hyperbolic random graphs) have with high probability bounded shortest paths lengths $\Oh(\log n)$ under reasonable assumptions on the edge weight distributions.
\begin{lemma} \label{lemma:max-size}
	Let $S_i$ be the shortcut set computed for node $i$ by \algoDP.
    Then, $|S_i| < \rho / (k-1)$.
\end{lemma}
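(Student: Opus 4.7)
The plan is to construct a valid shortcut set rooted at~$i$ of size at most $\rho/k$ via a simple greedy procedure, and then use optimality of \algoDP to conclude $|S_i|\leq \rho/k<\rho/(k-1)$.

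First I would argue that the recursion for $F(u,t)$ exhaustively tries the two meaningful options---add the shortcut $(i,u)$ or do not---at every non-root vertex~$u$ of~$T_i$, so $|S_i|$ equals the minimum over all shortcut sets of the form $\{(i,x_j):x_j\in T_i\}$ that place every vertex of~$T_i$ within $k$ hops of~$i$. For the subproblem of building the $(k,\rho)$-ball of~$i$ on the tree~$T_i$, any candidate shortcut $(x,y)$ with $x,y\neq i$ is weakly dominated by $(i,y)$, so restricting to shortcuts rooted at~$i$ is without loss of generality.

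Next I would analyse the following greedy procedure. While $T_i$ still contains a vertex at hop-depth exceeding~$k$ from~$i$, pick the deepest such vertex~$v$, let $u$ be its ancestor in~$T_i$ at depth $\textnormal{depth}(v)-(k-1)$ (which exists and is distinct from~$i$ since $\textnormal{depth}(v)\geq k+1$), add the shortcut $(i,u)$, and delete the entire subtree rooted at~$u$ from~$T_i$. The shortcut places~$u$ at hop~$1$ and~$v$ at hop exactly~$k$, so every deleted vertex joins the $(k,\rho)$-ball of~$i$; the remainder of~$T_i$ is still rooted at~$i$ and is handled in the next iteration.

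The decisive counting step is that each iteration removes at least the $k$ vertices on the downward path $u,\ldots,v$ from~$T_i$. Consequently the greedy terminates after at most $\lfloor \rho/k\rfloor$ iterations, so $|S_i|\leq \rho/k$ and hence $(k-1)\,|S_i|\leq (k-1)\rho/k<\rho$, i.e., $|S_i|<\rho/(k-1)$. I do not anticipate a real obstacle; the only thing requiring explicit care is the trivial case where $T_i$ already has depth at most~$k$ (so $|S_i|=0$) and the corner case $\rho=0$, the latter of which must be excluded for the strict inequality to hold.
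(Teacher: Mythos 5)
Your proof is correct, and it takes a genuinely different route from the paper's. The paper argues by pigeonhole: it partitions the vertices of $T_i$ into the $k-1$ residue classes $L_j=\setc{x \in T_i}{\hat d(i,x)\equiv j \bmod (k-1)}$, picks the smallest class, and observes that shortcutting from $i$ to every vertex of that class already forms the ball, so the optimum found by \algoDP is at most $|T_i|/(k-1)$. You instead run an explicit greedy covering of the deepest offending vertices and charge each added shortcut to the $k$ vertices on the path from $u$ down to $v$, all of which are removed in that iteration; both arguments then conclude via the same ``discoverable by \algoDP'' step, i.e., exhibiting one feasible set of shortcuts rooted at $i$ and invoking the optimality of the dynamic program on the fixed tree. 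What your approach buys is a slightly stronger bound of $\rho/k$ rather than $\rho/(k-1)$ (and it makes the strict inequality of the lemma immediate for $\rho\ge 1$, which the paper's $\le |T_i|/(k-1)$ bound leaves a little delicate); what the paper's approach buys is brevity, since the residue-class partition needs no iteration, no deepest-vertex selection, and no termination argument. Your opening paragraph about $(x,y)$ being dominated by $(i,y)$ is true but not needed: $S_i$ is by definition a set of shortcuts emanating from $i$, so all you require is that \algoDP returns a minimum-cardinality feasible subset of $\set{(i,x)\mid x\in T_i}$, which is exactly the premise the paper's own proof also relies on.
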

\begin{proof}
    Let $T_i$ be the closest path tree rooted in $i$ containing the $\rho$-nearest neighbors. We decompose $T_i$ into $L_1,L_2, ... ,L_{k-1}$ where each $L_j := \setc{x}{x \in T_i,\ \hat d(i,x)\equiv j \mod (k-1)}$.
    Observe that $\sum_j |L_j| = |T_i|$, thus $\exists j: |L_j| \leq \frac{|T_i|}{k-1}$. The claim follows since the set $S_i' = \set{(i,x)\ |\ x \in L_j}$ is discoverable by \algoDP and forms a $(k,\rho)$-ball for $i$.   
\end{proof}

\subsection{MinHash} \label{subsec:minhash}
Both presented heuristics can still introduce redundant shortcuts originating from different local contexts; for instance, consider partially overlapping shortcuts along a path.
In the following, we propose a probabilistic approach based on  MinHash\cite{10.5555/829502.830043} to detect synergetic shortcuts that partially overlap.

\begin{longversion}
MinHash, defines a family of hash functions on a set such that for a fixed hash function $H$ the collision probability of two subsets matches their Jaccard index $J(A,B)=|A\cap B| / |A\cup B|$.
To reduce the variance of the estimator, typically, $c > 1$ such hash functions are used with $Z=\sum_{j=0}^{c-1} I_j/c$ where $I_j$ denotes the event of a collision of hashes of set $A$ and $B$ on hash function $H_j$.
Thus for \algoDPPC we represent a shortcut $(u,v)$ by 
\begin{align}
	M_{u,v} := \setc{(x,y)}{x, y \in V, \textnormal{a shortcut from $u$ to $v$ moves $y$ into the $(k,\rho)$-ball of $x$}}.
\end{align}
We keep a shared concurrent hash set for each hash function keeping track of the already encountered hashes\footnote{This changes the span bounds analogously to \algoDPPC due to possible write congestion.}.
Let $(x,y)$ be a shortcut under consideration and let $C$ be the set of all added shortcuts, then we can assure that for every $0 \le j < c$
\begin{align}
	\mathbb{P}(H_j(M_{x,y}) \textnormal{ collides}) \geq \max_{(a,b) \in B} J(M_{x,y},M_{a,b}).
\end{align}

This allows us to decrease redundancies by upper bounding the Jaccard index of nodes already profiting from previously introduced shortcuts with nodes that are meant to benefit from the new shortcut in $\Oh(1)$.\footnote{Since the hash is trivially derived from union operations $H(A\cup B) = \min(H(A),H(B))$, we can only keep the fingerprints to ensure the computation does not increase the span or work.}
For \algoDPSA, we represent a shortcut $(u,y)$ by $M_{u,y} := \set{x\ |\ x\in V,\hat d_{G'}(x,u)+1+\textnormal{depth}(T_{x,y})\leq k}$, the argument on the foundation is the same as for \algoDPPC.
\end{longversion}
\begin{shortversion}
    The omission of shortcuts displaying a high degree of overlap results in a marked redundancy reduction and in turn improves the solution size of our heuristics.
    For the missing details refer to \cref{sec:appendix_minhash}.
\end{shortversion}

\section{Optimal algorithm} \label{section:optimal-algorithm}
In this section, we present ---to the best of our knowledge--- the first exact solver for \probMSP.
Since we established the problem to be \NP-hard in \cref{section:kp-msp-np-hard}, there is little hope for a solution that is efficient on all instances.
Hence, we propose an integer linear programming (ILP) formulation to harness the extensive research into efficient ILP solvers.

Let $N_\rho(v) = \setc{u \in V}{u \in V,\ u \neq v,\ d(v,u) \leq r_p(v)}$ be the set of vertices with distance at most the $\rho$-nearest neighbor, and define the shortcut $N^+_\rho(v) = N_\rho(v)\cup \set{v}$. %
Recall that $N_\rho(v)$ can contain more than $\rho$ vertices in the case that there are several vertices with the same distance to $v$ as the $\rho$-closest vertex.
Define $w(u,v)$ as the weight of the edge $\set{u,v}$ if it exists or else the weight of a shortest path between $u$ and $v$.
Observe that in general $w(u,v)$ can exceed the shortest-path distance $d(u,v)$.
Finally, let $S$ be the set of possible shortcuts, \ie $S = \setc{(u,v)}{u \in V,\ v \in N_\rho(u), \ (u,v) \notin E}$.

We provide several variations of the ILP formulation for both the the undirected \emph{(U)} and the directed \emph{(D)} case of \probMSP.
In contrast to the directed case, edges are bidirectionally usable in the undirected case, thus requiring a distinct formulation to encode the altered constraints.
\begin{shortversion}
    An explicit description of the formulation, correctness and a discussion on the encoding size can be found in \cref{sec:appendix-optimal-algorithm}.
\end{shortversion}
\begin{longversion}
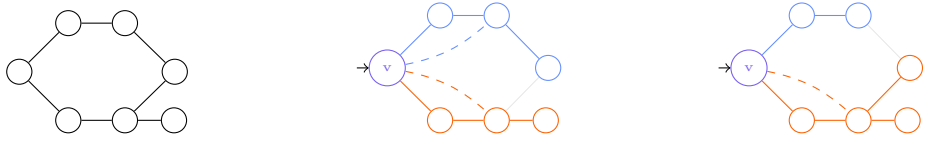
\begin{figure}
    \begin{subfigure}{0.32\textwidth}
        \centering
        \rotatebox{90}{
        \begin{tikzpicture}[dnode/.style={circle,draw}]
            \node[dnode] (a) {};
            \node[dnode,below left=0.4cm and 0.4cm of a] (b) {};
            \node[dnode,below right =0.4cm and 0.4cm of a] (c) {};

            \node[dnode,below=0.4cm of b] (bb) {};
            \node[dnode,below=0.4cm of c] (cb) {};

            \node[dnode,below=1.7cm of a] (ab) {};
            \node[dnode,below=0.3cm of bb] (bbb) {};

            \draw (a) -- (b);
            \draw (a) -- (c);

            \draw (b) -- (bb);
            \draw (c) -- (cb);

            \draw (bb) -- (bbb);
            \draw (bb) -- (ab);
            \draw (cb) -- (ab);
        \end{tikzpicture}}
    \end{subfigure}
    \hfill
    \begin{subfigure}{0.32\textwidth}
        \centering
        \rotatebox{90}{
        \begin{tikzpicture}[dnode/.style={circle,draw}, tree1/.style={color=cb-red-orange}, tree2/.style={color=cb-blue}]
            \node[color=cb-magenta,dnode] (a) {\rotatebox{-90}{\tiny v}};
            \draw[->] (a) ++ (0cm,0.4cm) to (a);
            \node[tree1,dnode,below left=0.4cm and 0.4cm of a] (b) {};
            \node[tree2,dnode,below right =0.4cm and 0.4cm of a] (c) {};

            \node[tree1,dnode,below=0.4cm of b] (bb) {};
            \node[tree2,dnode,below=0.4cm of c] (cb) {};

            \node[tree2,dnode,below=1.7cm of a] (ab) {};
            \node[tree1,dnode,below=0.3cm of bb] (bbb) {};

            \draw[tree1] (a) -- (b);
            \draw[tree2] (a) -- (c);

            \draw[tree1] (b) -- (bb);
            \draw[tree2] (c) -- (cb);

            \draw[tree1] (bb) -- (bbb);
            \draw[tree2] (cb) -- (ab);

            \draw[bend left=15pt, dashed, color=cb-red-orange] (a) to (bb);
            \draw[bend right=15pt, dashed, color=cb-blue] (a) to (cb);
            \draw[opacity=0.1] (bb) -- (ab);
        \end{tikzpicture}}
    \end{subfigure}
    \hfill
    \begin{subfigure}{0.32\textwidth}
        \centering
        \rotatebox{90}{
        \begin{tikzpicture}[dnode/.style={circle,draw}, tree1/.style={color=cb-red-orange}, tree2/.style={color=cb-blue}]
            \node[color=cb-magenta,dnode] (a) {\rotatebox{-90}{\tiny v}};
            \draw[->] (a) ++ (0cm,0.4cm) to (a);
            \node[tree1,dnode,below left=0.4cm and 0.4cm of a] (b) {};
            \node[tree2,dnode,below right =0.4cm and 0.4cm of a] (c) {};

            \node[tree1,dnode,below=0.4cm of b] (bb) {};
            \node[tree2,dnode,below=0.4cm of c] (cb) {};

            \node[tree1,dnode,below=1.7cm of a] (ab) {};
            \node[tree1,dnode,below=0.3cm of bb] (bbb) {};

            \draw[tree1] (a) -- (b);
            \draw[tree2] (a) -- (c);

            \draw[tree1] (b) -- (bb);
            \draw[tree2] (c) -- (cb);

            \draw[tree1] (bb) -- (bbb);
            \draw[tree1] (bb) -- (ab);

            \draw[bend left=15pt, dashed, color=cb-red-orange] (a) to (bb);
            \draw[opacity=0.1] (cb) -- (ab);
        \end{tikzpicture}}
    \end{subfigure}
    \caption{A graph for which the optimality of the solution of \probMSP for $k=2$, $\rho=6$ and a single node $v$ already depends on the encoding of a specific shortest path tree with fewest hops. Thus until it is clear which tree preserves the optimality of the solution an optimal algorithm has to encode all of them.}
    \label{fig:not-one-short}
\end{figure}
\end{longversion}

\begin{longversion}
\renewcommand{\theequation}{\arabic{equation}}
\begin{align}
    \textbf{min}      & \quad \sum_{(u,v) \in S} s_{u,v} \label{eq:ilp-objective},\ \textbf{subject to}\\
                         & \forall s \in V,\; \forall e \in N_\rho(s) \nonumber                                                                                                                                       \\
                         & \quad \sum_{v \in N_\rho(s)} x^{s,e}_{s,v} - \sum_{v \in N_\rho(s)} x^{s,e}_{v,s}                     &                                          & = 1 \label{eq:ilp-start}           \\
                         & \quad -\sum_{v \in N_\rho(s)} x^{s,e}_{v,e} - \sum_{v \in N_\rho(s)} x^{s,e}_{e,v}                    &                                          & = -1 \label{eq:ilp-end}            \\
                         & \quad \sum_{u,v \in N_\rho(s)} x^{s,e}_{u,v} - \sum_{u,v \in N_\rho(s)} x^{s,e}_{v,u}                 &                                          & = 0 \label{eq:ilp-flow}            \\
                         & \quad \sum_{u,v \in N^+_\rho(s)} x^{s,e}_{u,v}\cdot w(u,v)                                            &                                          & = d(s,e) \label{eq:ilp-distance}   \\
                         & \quad \sum_{u,v \in N^+_\rho(s)} x^{s,e}_{u,v}                                                        &                                          & \leq k+\hat d(s,e)\cdot u^{s,e} \label{eq:ilp-hop-distance} \\
                         & \quad \sum_{e \in N_\rho(s)} u^{s,e}                                                        &                                          & = |N_\rho(s)|-\rho \label{eq:ilp-at-most-rho} \\
    \nonumber                                                                                                                                                                                                         \\
                         & \forall u,v \in N_\rho(s)\colon u\neq v, s \in V, e \in N_\rho(s)
    \quad x^{s,e}_{u,v} &                                                                                                            & \geq 0                                                                        \\
    \text{(D)}\quad & \scalebox{1.3}{[}\forall s \in V,\; \forall e \in N_\rho(s), \forall x^{s,e}_{u,v}\colon (u,v) \notin E[G[N^+_\rho(s)]]
    \quad x^{s,e}_{u,v} &                                                                                                            & \leq s_{u,v} \label{eq:ilp-use-existing}\scalebox{1.3}{]}				      \\
    \text{(U)}\quad &{\scalebox{1.3}{[}} \forall s \in V,\; \forall e \in N_\rho(s), \forall x^{s,e}_{u,v}\colon (u,v) \notin E[G[N^+_\rho(s)]]
    \quad x^{s,e}_{u,v} &                                                                                                            & \leq s_{u,v}+s_{v,u}\scalebox{1.3}{]} \label{eq:ilp-use-existing-undirected}
\end{align}
\renewcommand{\theequation}{\roman{equation}}
\textbf{Correctness. }
\cref{eq:ilp-start,eq:ilp-end,eq:ilp-flow} are the standard\cite{TACCARI2016122} shortest path ILP formulation constraints, encoded for all pairs $(s,e): s \in V, e \in N_\rho(s)$.
Hence, this allows us to find a path between $s$ and $e$. Contrary to the standard formulation, we omit the optimization of the path weight but encode this as a separate constraint.
\cref{eq:ilp-distance} encodes the shortest path distance, such that the path connecting $s$ and $e$ is guaranteed to be a shortest path.
The constraint \cref{eq:ilp-hop-distance} further limits the number of traversed edges to at most~$k$ for any path to \emph{selected} neighbors.
This distinction is needed as the cardinality of $|N_\rho(s)|$ can be larger than $\rho$, thus only a subset of $N_\rho(s)$ has to be reached with at most $k$ edges.
Consequently \cref{eq:ilp-at-most-rho} induces with the binary variables $u^{s,e}$ a subset $N^{s} \subseteq N_\rho(s)$ with $|N^s|=\rho$ for which the constraints of \cref{eq:ilp-hop-distance} hold while for all other nodes $v\in N_\rho(s)\setminus N^s$ the constraints are trivially satisfied.
Finally the transitive encodings in \cref{eq:ilp-use-existing} and \cref{eq:ilp-use-existing-undirected} allow the unidirectional or bidirectional usage of any shortcut to fulfill the former constraints incurring a cost of one (see \cref{eq:ilp-objective}).
Observe that for the undirected case (\cref{eq:ilp-use-existing-undirected}) no further constraint $x^{s,e}_{u,v} \leq 1$ is needed as \cref{eq:ilp-distance} implicitly enforces this already.
An integrality constraint on all $s_{u,v}$ and $x^{s,e}_{u,v}$ allows direct recovery of a minimal cardinality shortcut set. 

\noindent
\textbf{Encoding size. }
We focus on the number of $x^{s,e}_{u,v}$ variables, since they dominate all other variable types due to the injective mapping in \cref{eq:ilp-use-existing} (or equivalently in \cref{eq:ilp-use-existing-undirected}).
The ILP formulation encodes the path finding for $\sum_{u\in V} |N_\rho(u)| = \Oh(n\cdot \max_{u\in V} |N_\rho(u)|)$ pairs in total.
For each pair, we need to encode a path of length at most $k$ from the set of all possible edges $E[G[N_\rho(s)]]$ in a locally full graph, thus implying a bound of $\Oh(|N_\rho(s)| \cdot |N_\rho(s)|)$ on the cardinality of $E[G[N_\rho(s)]]$.
Combining this yields a bound of $\Oh(n\cdot \max_{u\in V} |N_\rho(u)|^3) = \Oh(n^4)$ variables.

Improved bounds on the cardinality of $N_\rho(\cdot)$, directly translate into a sharper bound for the total number of variables.
For this consider the complete bipartite graph $K_{n/2,n/2}$ with unit edge weights, clearly for all nodes there are $n/2$ other nodes at distance one.
Thus in $K_{n/2,n/2}$ $\forall v \in V: |N_1(v)| = \Omega(n)$ and by this we cannot put better bounds on the cardinality of $N_\rho(\cdot)$ in the general case.
Contrary, we have $\forall v \in V\ |N_\rho(v)| = \Theta(\rho)$ if the sets of the $\rho$-nearest neighbors are unique for every node.
This happens for instance if all distances are distinct which is highly likely in graphs with random continuous edge weights.

At this point a further optimization by directly encoding a shortest path tree is not possible as seen in \cref{fig:not-one-short}.
The choice which shortest path tree to encode for some node $v$ highly influences the possible shortcuts that are available for the $(k,\rho)$-shortcutting algorithm and it is unclear which tree preserves the ability to find the optimal solution.
\end{longversion}

\pgfplotstableread[col sep=comma]{data/gilbert-ilp-analysed.csv}\tILP
\pgfplotstableread[col sep=comma]{data/gilbert-ilp-max-analysed.csv}\tILPMax

\pgfplotstableread[col sep=comma]{data/gilbert-ilp-feasable-analysed.csv}\tILPFeasable
\pgfplotstableread[col sep=comma]{data/gilbert-ilp-raw-all.csv}\tILPRawAll

\pgfplotstableread[col sep=comma]{data/powerlaw-ilp-analysed.csv}\tILPPowerlaw
\pgfplotstableread[col sep=comma]{data/powerlaw-ilp-max-analysed.csv}\tILPMaxPowerlaw

\pgfplotstableread[col sep=comma]{data/gilbert-ilp-raw-n-20.csv}\tILPRawtwenty
\pgfplotstableread[col sep=comma]{data/gilbert-ilp-raw-n-25.csv}\tILPRawtwentyfive
\pgfplotstableread[col sep=comma]{data/gilbert-ilp-raw-n-30.csv}\tILPRawthirty
\pgfplotstableread[col sep=comma]{data/gilbert-ilp-raw-n-35.csv}\tILPRawthirtyfive
\pgfplotstableread[col sep=comma]{data/gilbert-ilp-raw-n-40.csv}\tILPRawfourty
\pgfplotstableread[col sep=comma]{data/gilbert-ilp-raw-n-45.csv}\tILPRawfourtyfive

\pgfplotstableread[col sep=comma]{data/hyperbolic-ilp-analysed.csv}\tHyperbolicILP
\pgfplotstableread[col sep=comma]{data/hyperbolic.csv}\tHyperbolicILPRaw
\pgfplotstableread[col sep=comma]{data/hyperbolic-ilp-raw-all.csv}\tHyperbolicILPRawAll
\pgfplotstableread[col sep=comma]{data/hyperbolic-ilp-max-analysed.csv}\tHyperbolicMaxILP
\pgfplotstableread[col sep=comma]{data/hyperbolic-ilp-raw-n-20.csv}\tHyperbolicILPRawtwenty
\pgfplotstableread[col sep=comma]{data/hyperbolic-ilp-raw-n-25.csv}\tHyperbolicILPRawtwentyfive
\pgfplotstableread[col sep=comma]{data/hyperbolic-ilp-raw-n-30.csv}\tHyperbolicILPRawthirty
\pgfplotstableread[col sep=comma]{data/hyperbolic-ilp-raw-n-35.csv}\tHyperbolicILPRawthirtyfive
\pgfplotstableread[col sep=comma]{data/hyperbolic-ilp-raw-n-40.csv}\tHyperbolicILPRawfourty
\pgfplotstableread[col sep=comma]{data/hyperbolic-ilp-raw-n-45.csv}\tHyperbolicILPRawfourtyfive
\pgfplotstableread[col sep=comma]{data/hyperbolic-ilp-raw-n-50.csv}\tHyperbolicILPRawfifty
\pgfplotstableread[col sep=comma]{data/hyperbolic-ilp-raw-n-55.csv}\tHyperbolicILPRawfiftyfive
\pgfplotstableread[col sep=comma]{data/hyperbolic-ilp-raw-n-60.csv}\tHyperbolicILPRawsixty
\pgfplotstableread[col sep=comma]{data/hyperbolic-ilp-raw-n-65.csv}\tHyperbolicILPRawsixtyfive
\pgfplotstableread[col sep=comma]{data/hyperbolic-ilp-raw-n-70.csv}\tHyperbolicILPRawseventy

\pgfplotstableread[col sep=comma]{data/powerlaw-ilp-raw-n-20.csv}\tILPPowerlawRawtwenty
\pgfplotstableread[col sep=comma]{data/powerlaw-ilp-raw-n-25.csv}\tILPPowerlawRawtwentyfive
\pgfplotstableread[col sep=comma]{data/powerlaw-ilp-raw-n-30.csv}\tILPPowerlawRawthirty
\pgfplotstableread[col sep=comma]{data/powerlaw-ilp-raw-n-35.csv}\tILPPowerlawRawthirtyfive
\pgfplotstableread[col sep=comma]{data/powerlaw-ilp-raw-n-40.csv}\tILPPowerlawRawfourty
\pgfplotstableread[col sep=comma]{data/powerlaw-ilp-raw-n-45.csv}\tILPPowerlawRawfourtyfive
\pgfplotstableread[col sep=comma]{data/powerlaw-ilp-raw-n-50.csv}\tILPPowerlawRawfifty
\pgfplotstableread[col sep=comma]{data/powerlaw-ilp-raw-n-55.csv}\tILPPowerlawRawfiftyfive
\pgfplotstableread[col sep=comma]{data/powerlaw-ilp-raw-n-60.csv}\tILPPowerlawRawsixty
\pgfplotstableread[col sep=comma]{data/powerlaw-ilp-raw-n-65.csv}\tILPPowerlawRawsixtyfive
\pgfplotstableread[col sep=comma]{data/powerlaw-ilp-raw-n-70.csv}\tILPPowerlawRawseventy

\pgfplotstableread[col sep=comma]{data/results_comparison_gilbert_3.csv}\tLargeComparisonGilbertThree
\pgfplotstableread[col sep=comma]{data/results_comparison_gilbert_4.csv}\tLargeComparisonGilbertFour
\pgfplotstableread[col sep=comma]{data/results_comparison_gilbert_5.csv}\tLargeComparisonGilbertFive

\pgfplotstableread[col sep=comma]{data/results_comparison_powerlaw_3.csv}\tLargeComparisonPowerlawThree
\pgfplotstableread[col sep=comma]{data/results_comparison_powerlaw_4.csv}\tLargeComparisonPowerlawFour
\pgfplotstableread[col sep=comma]{data/results_comparison_powerlaw_5.csv}\tLargeComparisonPowerlawFive

\pgfplotstableread[col sep=comma]{data/results_comparison_hyperbolic_3.csv}\tLargeComparisonHyperbolicThree
\pgfplotstableread[col sep=comma]{data/results_comparison_hyperbolic_4.csv}\tLargeComparisonHyperbolicFour
\pgfplotstableread[col sep=comma]{data/results_comparison_hyperbolic_5.csv}\tLargeComparisonHyperbolicFive

\pgfplotstableread[col sep=comma]{data/real_world_proc_mean.csv}\tRealWorldDataProcessedMean
\pgfplotstableread[col sep=comma]{data/real_world_proc_2500.csv}\tRealWorldDataProcessedFirst
\pgfplotstableread[col sep=comma]{data/real_world_proc_2500_mean.csv}\tRealWorldDataProcessedFirstMean
\pgfplotstableread[col sep=comma]{data/real_world_proc_5000.csv}\tRealWorldDataProcessedSecond
\pgfplotstableread[col sep=comma]{data/real_world_proc_5000_mean.csv}\tRealWorldDataProcessedSecondMean
\pgfplotstableread[col sep=comma]{data/real_world_proc_7500.csv}\tRealWorldDataProcessedThird
\pgfplotstableread[col sep=comma]{data/real_world_proc_7500_mean.csv}\tRealWorldDataProcessedThirdMean
\pgfplotstableread[col sep=comma]{data/real_world_proc_10000.csv}\tRealWorldDataProcessedFourth
\pgfplotstableread[col sep=comma]{data/real_world_proc_10000_mean.csv}\tRealWorldDataProcessedFourthMean

\section{Experiments}
\renewcommand{\theequation}{\arabic{equation}}

In this section we compare the previously discussed algorithms on several random graph models.
The ILP encoding is implemented in Python invoking Gurobi\footnote{\url{https://gurobi.com}} 11.0.0.
We implemented all heuristics (including those proposed by Blelloch~\etal~\cite{10.1145/2935764.2935765}) in the Rust\footnote{Rust is a compiled language of comparable performance to C~\cite{9640225}.} .
If not stated otherwise the following standard parameters and considerations apply to all experiments:
\begin{itemize}
    \item We use several machines with AMD EPYC 7452 (64 threads) and 7702P (128 threads) CPUs, up to 512 GB RAM, \emph{Ubuntu  22.04.3}, the \emph{Python 3.10.12} and \emph{Rust 1.77.0}.
    \item We focus on the case $\rho=n{-}1$ as this confers two benefits: \begin{inparaenum} \item it is arguably the case with the most optimization potential, \item the difference in the observed solution is solely attributable to a superior choice of shortcuts since $N_\rho(v)$ is unique for every node $v$.\end{inparaenum}
    \item Each ILP instance has a timeout of 1800\,s after which it is deemed unsolvable.
\end{itemize}

We are using several random graph models.
The Gilbert model~\cite{Gilbert1959} functions as null model with almost no discernible structure.
We denote it as $\mathcal{G}(n,p)$ where $n$ is the number of nodes and $p$ the independent probability of each edge to be present.

One prominent structural feature found in many observed networks are powerlaw degree distributions (\ie a random node has degree $X$ with $\mathbb{P}[X = x] \propto x^{-\gamma}$).~\cite{barabasi2014network}
This is of special interest here, since our discussion in \cref{sec:lower_bounds} suggests that such skewed distributions are hard instances for the existing heuristics.
The hyperbolic random graph model~\cite{PhysRevE.82.036106} $\mathcal{H}(n,k,\gamma,T)$ with $\gamma=3.0$, $k=3$ and $T=0$ is known to yield sparse graphs featuring powerlaw degree distributions and a non-vanishing local cluster coefficient.~\cite{barabasi2014network}
Since the presence of a local community structure (\eg short cycles etc.) may affect the shortcutting behavior, we also consider a Markov Chain randomization model~\cite{DBLP:conf/alenex/GkantsidisMMZ03} $\mathcal{MC}(G,s)$.
It approximates a uniform sample from all simple graphs with prescribed degrees.
This model allows us to quantify the impact of the degree distribution mostly independently from other structural properties.

\subsection{Experimental evaluation}
\begin{figure}[t]
    \begin{subfigure}{0.34\textwidth}
        \scalebox{0.57}{
            \begin{tikzpicture}
                \pgfplotstablegetrowsof{\tHyperbolicILP}
                \pgfmathtruncatemacro\NodeRows{\pgfplotsretval-1}
                \begin{axis}[legend pos=north west, xmin=13,xmax=78, ylabel=average number of added shortcuts, xlabel=number of nodes,grid=both, title={$\mathcal{H}(n,k=3,\gamma=3.0,T=0)$},legend style={fill opacity=0.7, text opacity=1}, legend cell align={left}]
                    \addplot[name path=A,color=cb-red-orange] table [x=n,y=size_bl_dp,y error=size_bl_dp_error, col sep=comma]{\tHyperbolicILP};
                    \addlegendentry{\algoDP}
                    \addplot[name path=A,color=cb-pink] table [x=n,y=size_bl_dp_s,y error=size_bl_dp_s_error, col sep=comma]{\tHyperbolicILP};
                    \addlegendentry{\algoDPStar}
                    \addplot[name path=B,color=cb-blue] table [x=n,y=size_ilp, y error=size_ilp_error, col sep=comma]{\tHyperbolicILP};
                    \addlegendentry{ILP}
	       	    \addplot[name path=C,color=cb-magenta] table [x=n,y=size_bl_dp_pert_min_pc, y error=size_bl_dp_pert_min_pc_error, col sep=comma]{\tHyperbolicILP};
        	    \addlegendentry{\algoDPAll}
                    \addplot[name path=upper-blelloch,draw opacity=0.0] table[x=n,y expr=\thisrow{size_bl_dp}+\thisrow{size_bl_dp_error}] {\tHyperbolicILP};
                    \addplot[name path=lower-blelloch,draw opacity=0.0] table[x=n,y expr=\thisrow{size_bl_dp}-\thisrow{size_bl_dp_error}] {\tHyperbolicILP};
                    \addplot[color=cb-red-orange!20] fill between[of=upper-blelloch and lower-blelloch];

                    \addplot[name path=upper-ilp,draw opacity=0.0] table[x=n,y expr=\thisrow{size_ilp}+\thisrow{size_ilp_error}] {\tHyperbolicILP};
                    \addplot[name path=lower-ilp,draw opacity=0.0] table[x=n,y expr=\thisrow{size_ilp}-\thisrow{size_ilp_error}] {\tHyperbolicILP};
                    \addplot[color=cb-blue!20] fill between[of=upper-ilp and lower-ilp];

		    \addplot[name path=upper-ilp,draw opacity=0.0] table[x=n,y expr=\thisrow{size_bl_dp_pert_min_pc}+\thisrow{size_bl_dp_pert_min_pc_error}] {\tHyperbolicILP};
		    \addplot[name path=lower-ilp,draw opacity=0.0] table[x=n,y expr=\thisrow{size_bl_dp_pert_min_pc}-\thisrow{size_bl_dp_pert_min_pc_error}] {\tHyperbolicILP};
		    \addplot[color=cb-magenta!20] fill between[of=upper-ilp and lower-ilp];

		    \addplot[name path=upper-ilp,draw opacity=0.0] table[x=n,y expr=\thisrow{size_bl_dp_s}+\thisrow{size_bl_dp_s_error}] {\tHyperbolicILP};
		    \addplot[name path=lower-ilp,draw opacity=0.0] table[x=n,y expr=\thisrow{size_bl_dp_s}-\thisrow{size_bl_dp_s_error}] {\tHyperbolicILP};
		    \addplot[color=cb-pink!20] fill between[of=upper-ilp and lower-ilp];

                    \addplot[only marks,opacity=0] table[x=n, y=size_ilp] {\tHyperbolicILP}
                    \foreach \i in {0,...,\NodeRows} {
                            coordinate [pos=\i/\NodeRows] (a\i)
                        };
                    \addplot[only marks, opacity=0] table[x=n, y=size_bl_dp] {\tHyperbolicILP}
                    \foreach \i in {0,...,\NodeRows} {
                            coordinate [pos=\i/\NodeRows] (b\i)
                        };

        		\addplot[only marks, opacity=0] table[x=n, y=size_bl_dp_pert_min_pc] {\tHyperbolicILP}
        		\foreach \i in {0,...,\NodeRows} {
        			coordinate [pos=\i/\NodeRows] (c\i)
        		};

                    \pgfplotstablegetelem{\NodeRows}{size_ilp}\of\tHyperbolicILP
                    \pgfmathsetmacro{\ILPFinal}{\pgfplotsretval}

                    \pgfplotstablegetelem{\NodeRows}{size_bl_dp}\of\tHyperbolicILP
                    \pgfmathsetmacro{\BlellochFinal}{\pgfplotsretval}

                    \node (a) at (axis cs:70,\ILPFinal) {};
                    \node[right=0.15cm of a] (a-ex) {};
                    \node (b) at (axis cs:70,\BlellochFinal) {};
                    \node[right=0.15cm of b] (b-ex) {};
                    \pgfmathsetmacro{\ApproximationFactor}{\BlellochFinal/\ILPFinal}
                    \pgfkeys{/pgf/number format/.cd,fixed,precision=2}
                    \draw[dashed,color=black!50] (a-ex.center) -- node[near start,right=0.25cm,rotate=90,color=black!70] {\small$\sigma_b=\pgfmathprintnumber{\ApproximationFactor}$} (b-ex.center);
                    \draw[color=black!50] (a.center) -- (a-ex.center);
                    \draw[color=black!50] (b.center) -- (b-ex.center);

                    \pgfplotstablegetelem{0}{size_ilp}\of\tHyperbolicILP
                    \pgfmathsetmacro{\ILPStart}{\pgfplotsretval}

                    \pgfplotstablegetelem{0}{size_bl_dp}\of\tHyperbolicILP
                    \pgfmathsetmacro{\BlellochStart}{\pgfplotsretval}

                    \node (a) at (axis cs:20,\ILPStart) {};
                    \node[left=0.05cm of a] (a-ex) {};
                    \node (b) at (axis cs:20,\BlellochStart) {};
                    \node[left=0.05cm of b] (b-ex) {};
                    \pgfmathsetmacro{\ApproximationFactorSec}{\BlellochStart/\ILPStart}
                    \pgfkeys{/pgf/number format/.cd,fixed,precision=2}
                    \draw[dashed,color=black!50] (a-ex.center) -- node[midway,left=0.15cm,rotate=90,color=black!70,xshift=1cm] {\small$\sigma_b=\pgfmathprintnumber{\ApproximationFactorSec}$} (b-ex.center);
                    \draw[color=black!50] (a.center) -- (a-ex.center);
                    \draw[color=black!50] (b.center) -- (b-ex.center);
                \end{axis}
                \foreach \i in {0,\NodeRows} {
                        \pgfplotstablegetelem{\i}{size_ilp}\of\tHyperbolicILP
                        \pgfmathsetmacro{\ILPAvgSize}{\pgfplotsretval}
                        \pgfplotstablegetelem{\i}{size_bl_dp_pert_min_pc}\of\tHyperbolicILP
                        \pgfmathsetmacro{\BlellochAvgSize}{\pgfplotsretval}
                        \pgfmathsetmacro{\ApproximationFactor}{\BlellochAvgSize/\ILPAvgSize}
                        \pgfkeys{/pgf/number format/.cd,fixed,precision=2}

                        \ifthenelse{\equal{\i}{\NodeRows}}
			{\draw[dashed,color=black!50] (a\i) -- node[near start,right=0.18cm,rotate=90,color=black!70,xshift=-0.2cm] {\small$\sigma_h=\pgfmathprintnumber{\ApproximationFactor}$} (c\i);}
			{\draw[dashed,color=black!50] (a\i) -- node[right,midway,color=black!70,yshift=-0.2cm] {\small$\sigma_h=\pgfmathprintnumber{\ApproximationFactor}$} (c\i);}
                    }
            \end{tikzpicture}}
    \end{subfigure}
    \hfill
	\begin{subfigure}{0.32\textwidth}
		\scalebox{0.57}{
			\begin{tikzpicture}
				\pgfplotstablegetrowsof{\tILPPowerlaw}
				\pgfmathtruncatemacro\NodeRows{\pgfplotsretval-1}
				\begin{axis}[legend pos=north west, xmin=13, xmax=78,  xlabel=number of nodes,grid=both, title={$\mathcal{MC}(\cdot,100)$},legend style={fill opacity=0.7, text opacity=1}, legend cell align={left}]
					\addplot[name path=A,color=cb-red-orange] table [x=n,y=size_bl_dp,y error=size_bl_dp_error, col sep=comma]{\tILPPowerlaw};
					\addlegendentry{\algoDP}
					\addplot[name path=A,color=cb-pink] table [x=n,y=size_bl_dp_s,y error=size_bl_dp_s_error, col sep=comma]{\tILPPowerlaw};
					\addlegendentry{\algoDPStar}
					\addplot[name path=B,color=cb-blue] table [x=n,y=size_ilp, y error=size_ilp_error, col sep=comma]{\tILPPowerlaw};
					\addlegendentry{ILP}
					\addplot[name path=C,color=cb-magenta] table [x=n,y=size_bl_dp_pert_min_pc, y error=size_bl_dp_pert_min_pc_error, col sep=comma]{\tILPPowerlaw};
					\addlegendentry{\algoDPAll}
					\addplot[name path=upper-blelloch,draw opacity=0.0] table[x=n,y expr=\thisrow{size_bl_dp}+\thisrow{size_bl_dp_error}] {\tILPPowerlaw};
					\addplot[name path=lower-blelloch,draw opacity=0.0] table[x=n,y expr=\thisrow{size_bl_dp}-\thisrow{size_bl_dp_error}] {\tILPPowerlaw};
					\addplot[color=cb-red-orange!20] fill between[of=upper-blelloch and lower-blelloch];
					
					\addplot[name path=upper-ilp,draw opacity=0.0] table[x=n,y expr=\thisrow{size_ilp}+\thisrow{size_ilp_error}] {\tILPPowerlaw};
					\addplot[name path=lower-ilp,draw opacity=0.0] table[x=n,y expr=\thisrow{size_ilp}-\thisrow{size_ilp_error}] {\tILPPowerlaw};
					\addplot[color=cb-blue!20] fill between[of=upper-ilp and lower-ilp];

					\addplot[name path=upper-ilp,draw opacity=0.0] table[x=n,y expr=\thisrow{size_bl_dp_pert_min_pc}+\thisrow{size_bl_dp_pert_min_pc_error}] {\tILPPowerlaw};
					\addplot[name path=lower-ilp,draw opacity=0.0] table[x=n,y expr=\thisrow{size_bl_dp_pert_min_pc}-\thisrow{size_bl_dp_pert_min_pc_error}] {\tILPPowerlaw};
					\addplot[color=cb-magenta!20] fill between[of=upper-ilp and lower-ilp];

					\addplot[name path=upper-ilp,draw opacity=0.0] table[x=n,y expr=\thisrow{size_bl_dp_s}+\thisrow{size_bl_dp_s_error}] {\tILPPowerlaw};
					\addplot[name path=lower-ilp,draw opacity=0.0] table[x=n,y expr=\thisrow{size_bl_dp_s}-\thisrow{size_bl_dp_s_error}] {\tILPPowerlaw};
					\addplot[color=cb-pink!20] fill between[of=upper-ilp and lower-ilp];
					
					\addplot[only marks,opacity=0] table[x=n, y=size_ilp] {\tILPPowerlaw}
					\foreach \i in {0,...,\NodeRows} {
						coordinate [pos=\i/\NodeRows] (a\i)
					};
					\addplot[only marks, opacity=0] table[x=n, y=size_bl_dp] {\tILPPowerlaw}
					\foreach \i in {0,...,\NodeRows} {
						coordinate [pos=\i/\NodeRows] (b\i)
					};

					\addplot[only marks, opacity=0] table[x=n, y=size_bl_dp_pert_min_pc] {\tILPPowerlaw}
					\foreach \i in {0,...,\NodeRows} {
						coordinate [pos=\i/\NodeRows] (c\i)
					};

                                        \pgfplotstablegetelem{\NodeRows}{size_ilp}\of\tILPPowerlaw
                                        \pgfmathsetmacro{\ILPFinal}{\pgfplotsretval}

                                        \pgfplotstablegetelem{\NodeRows}{size_bl_dp}\of\tILPPowerlaw
                                        \pgfmathsetmacro{\BlellochFinal}{\pgfplotsretval}

                                        \node (a) at (axis cs:70,\ILPFinal) {};
                                        \node[right=0.15cm of a] (a-ex) {};
                                        \node (b) at (axis cs:70,\BlellochFinal) {};
                                        \node[right=0.15cm of b] (b-ex) {};
                                        \pgfmathsetmacro{\ApproximationFactor}{\BlellochFinal/\ILPFinal}
                                        \pgfkeys{/pgf/number format/.cd,fixed,precision=2}
                                        \draw[dashed,color=black!50] (a-ex.center) -- node[near start,right=0.25cm,rotate=90,color=black!70] {\small$\sigma_b=\pgfmathprintnumber{\ApproximationFactor}$} (b-ex.center);
                                        \draw[color=black!50] (a.center) -- (a-ex.center);
                                        \draw[color=black!50] (b.center) -- (b-ex.center);

                                        \pgfplotstablegetelem{0}{size_ilp}\of\tILPPowerlaw
                                        \pgfmathsetmacro{\ILPStart}{\pgfplotsretval}

                                        \pgfplotstablegetelem{0}{size_bl_dp}\of\tILPPowerlaw
                                        \pgfmathsetmacro{\BlellochStart}{\pgfplotsretval}

                                        \node (a) at (axis cs:20,\ILPStart) {};
                                        \node[left=0.05cm of a] (a-ex) {};
                                        \node (b) at (axis cs:20,\BlellochStart) {};
                                        \node[left=0.05cm of b] (b-ex) {};
                                        \pgfmathsetmacro{\ApproximationFactorSec}{\BlellochStart/\ILPStart}
                                        \pgfkeys{/pgf/number format/.cd,fixed,precision=2}
                                        \draw[dashed,color=black!50] (a-ex.center) -- node[midway,left=0.15cm,rotate=90,color=black!70,xshift=1cm] {\small$\sigma_b=\pgfmathprintnumber{\ApproximationFactorSec}$} (b-ex.center);
                                        \draw[color=black!50] (a.center) -- (a-ex.center);
                                        \draw[color=black!50] (b.center) -- (b-ex.center);
					
				\end{axis}
				\foreach \i in {0,\NodeRows} {
					\pgfplotstablegetelem{\i}{size_ilp}\of\tILPPowerlaw
					\pgfmathsetmacro{\ILPAvgSize}{\pgfplotsretval}
					\pgfplotstablegetelem{\i}{size_bl_dp_pert_min_pc}\of\tILPPowerlaw
					\pgfmathsetmacro{\BlellochAvgSize}{\pgfplotsretval}
					\pgfmathsetmacro{\ApproximationFactor}{\BlellochAvgSize/\ILPAvgSize}
					\pgfkeys{/pgf/number format/.cd,fixed,precision=2}
					
					\ifthenelse{\equal{\i}{\NodeRows}}
					{\draw[dashed,color=black!50] (a\i) -- node[near start,right=0.18cm,rotate=90,color=black!70,xshift=-0.2cm] {\small$\sigma_h=\pgfmathprintnumber{\ApproximationFactor}$} (c\i);}
					{\draw[dashed,color=black!50] (a\i) -- node[right,midway,color=black!70,yshift=-0.2cm] {\small$\sigma_h=\pgfmathprintnumber{\ApproximationFactor}$} (c\i);}
				}
		\end{tikzpicture}}
	\end{subfigure}
	\hfill
    \begin{subfigure}{0.32\textwidth}
        \scalebox{0.57}{
            \begin{tikzpicture}
                \pgfplotstablegetrowsof{\tILP}
                \pgfmathtruncatemacro\NodeRows{\pgfplotsretval-1}
                \begin{axis}[legend pos=north west, xmin=16, xmax=55, xlabel=number of nodes,grid=both, title={$\mathcal{G}(n,\ 3/n)$},legend style={fill opacity=0.7, text opacity=1}, legend cell align={left}]
                    %\path [cb-pink!20] (cs:40,cs:0) rectangle (cs:60,cs:1000);
                    \draw [fill=black, opacity=0.04] (axis cs:40,-100) rectangle (1000,1000);
                    \node (lb) at (axis cs:48,100) {\tiny less than 85\% solved};
                    \addplot[name path=A,color=cb-red-orange] table [x=n,y=size_bl_dp,y error=size_bl_dp_error, col sep=comma]{\tILP};
                    \addlegendentry{\algoDP}

                    \addplot[name path=A,color=cb-pink] table [x=n,y=size_bl_dp_s,y error=size_bl_dp_s_error, col sep=comma]{\tILP};
                    \addlegendentry{\algoDPStar}
                    \addplot[name path=B,color=cb-blue] table [x=n,y=size_ilp, y error=size_ilp_error, col sep=comma]{\tILP};
                    \addlegendentry{ILP}

    		\addplot[name path=C,color=cb-magenta] table [x=n,y=size_bl_dp_pert_min_pc, y error=size_bl_dp_pert_min_pc_error, col sep=comma]{\tILP};
    		\addlegendentry{\algoDPAll}

		\addplot[name path=upper-ilp,draw opacity=0.0] table[x=n,y expr=\thisrow{size_bl_dp_pert_min_pc}+\thisrow{size_bl_dp_pert_min_pc_error}] {\tILP};
		\addplot[name path=lower-ilp,draw opacity=0.0] table[x=n,y expr=\thisrow{size_bl_dp_pert_min_pc}-\thisrow{size_bl_dp_pert_min_pc_error}] {\tILP};
		\addplot[color=cb-magenta!20] fill between[of=upper-ilp and lower-ilp];

                \addplot[name path=upper-blelloch,draw opacity=0.0] table[x=n,y expr=\thisrow{size_bl_dp}+\thisrow{size_bl_dp_error}] {\tILP};
                    \addplot[name path=lower-blelloch,draw opacity=0.0] table[x=n,y expr=\thisrow{size_bl_dp}-\thisrow{size_bl_dp_error}] {\tILP};
                    \addplot[color=cb-red-orange!20] fill between[of=upper-blelloch and lower-blelloch];

                    \addplot[name path=upper-ilp,draw opacity=0.0] table[x=n,y expr=\thisrow{size_ilp}+\thisrow{size_ilp_error}] {\tILP};
                    \addplot[name path=lower-ilp,draw opacity=0.0] table[x=n,y expr=\thisrow{size_ilp}-\thisrow{size_ilp_error}] {\tILP};
                    \addplot[color=cb-blue!20] fill between[of=upper-ilp and lower-ilp];

                    \addplot[name path=upper-ilp,draw opacity=0.0] table[x=n,y expr=\thisrow{size_bl_dp_s}+\thisrow{size_bl_dp_s_error}] {\tILP};
                    \addplot[name path=lower-ilp,draw opacity=0.0] table[x=n,y expr=\thisrow{size_bl_dp_s}-\thisrow{size_bl_dp_s_error}] {\tILP};
                    \addplot[color=cb-pink!20] fill between[of=upper-ilp and lower-ilp];

                    \addplot[only marks,opacity=0] table[x=n, y=size_bl_dp] {\tILP}
                    \foreach \i in {0,...,\NodeRows} {
                            coordinate [pos=\i/\NodeRows] (a\i)
                        };
                    \addplot[only marks, opacity=0] table[x=n, y=size_ilp] {\tILP}
                    \foreach \i in {0,...,\NodeRows} {
                            coordinate [pos=\i/\NodeRows] (b\i)
                        };

                    \addplot[only marks, opacity=0] table[x=n, y=size_bl_dp_pert_min_pc] {\tILP}
                    \foreach \i in {0,...,\NodeRows} {
                            coordinate [pos=\i/\NodeRows] (c\i)
                        };

                    \pgfplotstablegetelem{\NodeRows}{size_ilp}\of\tILP
                    \pgfmathsetmacro{\ILPFinal}{\pgfplotsretval}

                    \pgfplotstablegetelem{\NodeRows}{size_bl_dp}\of\tILP
                    \pgfmathsetmacro{\BlellochFinal}{\pgfplotsretval}

                    \node (a) at (axis cs:50,\ILPFinal) {};
                    \node[right=0.15cm of a] (a-ex) {};
                    \node (b) at (axis cs:50,\BlellochFinal) {};
                    \node[right=0.15cm of b] (b-ex) {};
                    \pgfmathsetmacro{\ApproximationFactor}{\BlellochFinal/\ILPFinal}
                    \pgfkeys{/pgf/number format/.cd,fixed,precision=2}
                    \draw[dashed,color=black!50] (a-ex.center) -- node[near start,right=0.25cm,rotate=90,color=black!70] {\small$\sigma_b=\pgfmathprintnumber{\ApproximationFactor}$} (b-ex.center);
                    \draw[color=black!50] (a.center) -- (a-ex.center);
                    \draw[color=black!50] (b.center) -- (b-ex.center);

                    \pgfplotstablegetelem{0}{size_ilp}\of\tILP
                    \pgfmathsetmacro{\ILPStart}{\pgfplotsretval}

                    \pgfplotstablegetelem{0}{size_bl_dp}\of\tILP
                    \pgfmathsetmacro{\BlellochStart}{\pgfplotsretval}

                    \node (a) at (axis cs:20,\ILPStart) {};
                    \node[left=0.05cm of a] (a-ex) {};
                    \node (b) at (axis cs:20,\BlellochStart) {};
                    \node[left=0.05cm of b] (b-ex) {};
                    \pgfmathsetmacro{\ApproximationFactorSec}{\BlellochStart/\ILPStart}
                    \pgfkeys{/pgf/number format/.cd,fixed,precision=2}
                    \draw[dashed,color=black!50] (a-ex.center) -- node[midway,left=0.18cm,rotate=90,color=black!70,xshift=1cm] {\small$\sigma_b=\pgfmathprintnumber{\ApproximationFactorSec}$} (b-ex.center);
                    \draw[color=black!50] (a.center) -- (a-ex.center);
                    \draw[color=black!50] (b.center) -- (b-ex.center);
                \end{axis}

                \foreach \i in {0,\NodeRows} {
                        \pgfplotstablegetelem{\i}{size_ilp}\of\tILP
                        \pgfmathsetmacro{\ILPAvgSize}{\pgfplotsretval}
                        \pgfplotstablegetelem{\i}{size_bl_dp_pert_min_pc}\of\tILP
                        \pgfmathsetmacro{\BlellochAvgSize}{\pgfplotsretval}
                        \pgfmathsetmacro{\ApproximationFactor}{\BlellochAvgSize/\ILPAvgSize}
                        \pgfkeys{/pgf/number format/.cd,fixed,precision=2}

                        \ifthenelse{\equal{\i}{\NodeRows}}
                        {\draw[dashed,color=black!50] (b\i) -- node[near start,right=0.18cm,color=black!70,rotate=90] {\small$\sigma_h=\pgfmathprintnumber{\ApproximationFactor}$} (c\i);}
                        {\draw[dashed,color=black!50] (b\i) -- node[right,midway,color=black!70,yshift=-0.2cm] {\small$\sigma_h=\pgfmathprintnumber{\ApproximationFactor}$} (c\i);}
                    }
            \end{tikzpicture}}
    \end{subfigure}

        \begin{subfigure}[t]{0.98\textwidth}
            \centering
        \scalebox{0.57}{
            \begin{tikzpicture} 
        \begin{axis}[
        hide axis,
        xmin=10,
        xmax=50,
        ymin=0,
        ymax=0.4,
        legend style={at={(0.5,-0.1)},draw=black!50,anchor=north,legend columns=-1, align=left},
        ]
        \addlegendimage{/pgfplots/ybar,cb-red-orange,fill=cb-red-orange!20}
        \addlegendentry{\algoDP ~~~}

        \addlegendimage{/pgfplots/ybar,cb-magenta,fill=cb-magenta!20}
        \addlegendentry{\algoDPAll ~~~}

        \addlegendimage{line legend, smooth, cb-pink, dashed}
        \addlegendentry{$f(n)=0.42n$}
        \end{axis}
    \end{tikzpicture}}
        \end{subfigure}

    \begin{subfigure}{0.34\textwidth}
        \scalebox{0.57}{
            \begin{tikzpicture}
                        \pgfkeys{/pgf/number format/.cd,fixed,precision=1}
                \begin{axis}[ybar=3pt, bar width=0.2cm, ylabel={approximation factor $\sigma_{max}$},xmax=73,xmin=17, xlabel={number of nodes $n$},grid style={solid,black!15}, ymajorgrids, legend pos=north west,legend style={fill opacity=0.7, text opacity=1}, legend cell align={left}]
                    \addplot[color=cb-red-orange, fill=cb-red-orange!20, nodes near coords] table [x=n,y=approximation_factor_blelloch,col sep=comma] {\tHyperbolicMaxILP};
        		\addplot[color=cb-magenta, fill=cb-magenta!20, nodes near coords, every node/.style = {rotate=90, xshift=0.3cm, yshift=-0.4cm}] table [x=n,y=approximation_factor_bl_dp_pert_min_pc,col sep=comma] {\tHyperbolicMaxILP};
                    \addplot[smooth, no markers, domain=15:75, color=cb-pink, dashed, forget plot] {0.42*x};
                \end{axis}
                        \pgfkeys{/pgf/number format/.cd,fixed,precision=2}
            \end{tikzpicture}}
    \end{subfigure}
	\hfill
	\begin{subfigure}{0.32\textwidth}
		\scalebox{0.57}{
			\begin{tikzpicture}
                        \pgfkeys{/pgf/number format/.cd,fixed,precision=1}
    			\begin{axis}[ybar=3pt, bar width=0.2cm, xmax=73,xmin=16, xlabel={number of nodes $n$},grid style={solid,black!15}, ymajorgrids, legend pos=north west,legend style={fill opacity=0.7, text opacity=1}]
        			\addplot[color=cb-red-orange, fill=cb-red-orange!20, nodes near coords] table [x=n,y=approximation_factor_blelloch,col sep=comma] {\tILPMaxPowerlaw};
        			\addplot[color=cb-magenta, fill=cb-magenta!20, nodes near coords, every node/.style = {rotate=90, xshift=0.3cm, yshift=-0.4cm}] table [x=n,y=approximation_factor_bl_dp_pert_min_pc,col sep=comma] {\tILPMaxPowerlaw};
        			\addplot[smooth, no markers, domain=15:75, color=cb-pink, dashed, forget plot] {0.42*x};
    			\end{axis}
                        \pgfkeys{/pgf/number format/.cd,fixed,precision=2}
		\end{tikzpicture}}
	\end{subfigure}
    \hfill
    \begin{subfigure}{0.32\textwidth}
        \scalebox{0.57}{
            \begin{tikzpicture}
                \begin{axis}[ybar, bar width=0.2cm, xmax=53,xmin=17, xlabel={number of nodes $n$},grid style={solid,black!15}, ymajorgrids, legend pos=north west,legend style={fill opacity=0.7, text opacity=1}, legend cell align={left}]
                    \addplot[color=cb-red-orange, fill=cb-red-orange!20, nodes near coords] table [x=n,y=approximation_factor_blelloch,col sep=comma] {\tILPMax};
        		\addplot[color=cb-magenta, fill=cb-magenta!20, nodes near coords, every node/.style = {rotate=90, xshift=0.3cm, yshift=-0.4cm}] table [x=n,y=approximation_factor_bl_dp_pert_min_pc,col sep=comma] {\tILPMax};
                \end{axis}
            \end{tikzpicture}}
    \end{subfigure}
    \caption{The comparison between the optimal algorithm and the heuristics for the \probMSP problem for $k=2,\ \rho=n-1$ on several random graph classes for varying $n$. $\sigma_h$ and $\sigma_b$ describe the average approximation factor for \algoDPAll and \algoDP respectively, while $\sigma_{max} = \max_i\sigma_i$ is the maximum approximation factor wittnessed on up to 50 sampled instances.}
    \label{fig:k-2-approximation-factor}
\end{figure}
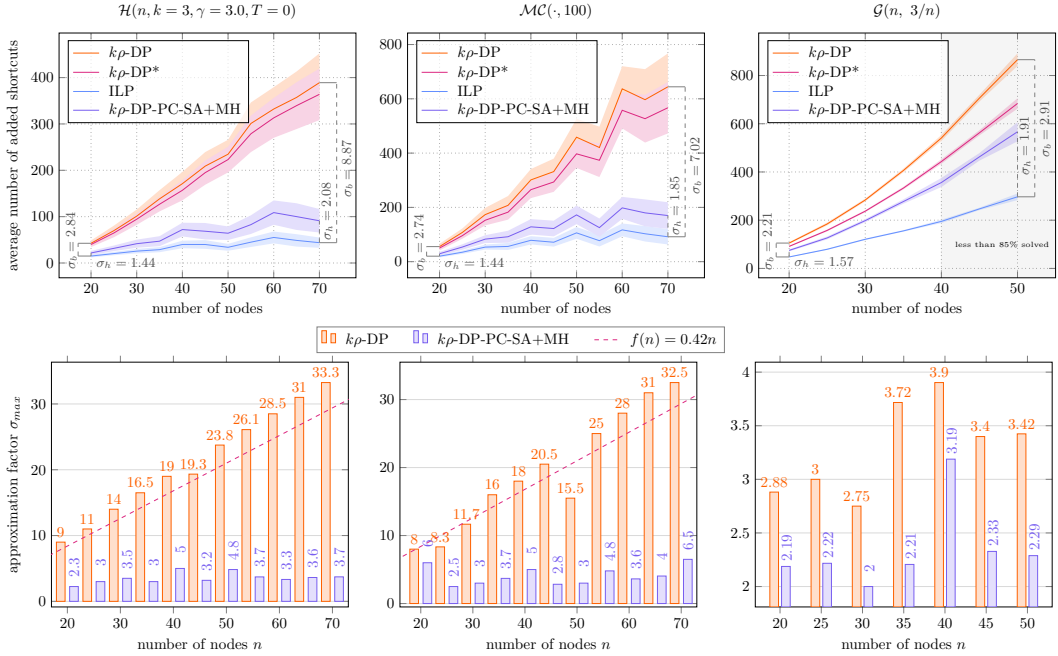
\begin{figure}[t]
    \begin{subfigure}[t]{0.33\textwidth}
    \scalebox{0.52}{
    \begin{tikzpicture}
        \centering
        \begin{axis}[tick scale binop=\times,clip marker paths=true,ylabel={average solving time of solved instances (s)},xlabel={number of nodes}, grid=both, legend pos=north west,error bars/y dir=both, error bars/y explicit, title={Average ILP solving time}, legend cell align={left}]
            \addplot[color=cb-pink,mark size=1.5pt,mark=o] table[x=n, y=speed_ilp_s,y error=speed_ilp_s_error] {\tILP};
            \addlegendentry{$\mathcal{G}(n,p)$}

            \addplot[color=cb-blue,mark size=1.5pt,mark=square] table[x=n, y=speed_ilp_s,y error=speed_ilp_s_error] {\tILPPowerlaw};
            \addlegendentry{$\mathcal{MC}(\cdot,s)$}

            \addplot[color=cb-orange,mark size=1.5pt,mark=diamond] table[x=n, y=speed_ilp_s,y error=speed_ilp_s_error] {\tHyperbolicILP};
            \addlegendentry{$\mathcal{H}(n,k,\gamma,T)$}
        \end{axis}
    \end{tikzpicture}}
    \caption{The average time for the ILP solver to solve the random graph instances with respect to their order.}
    \label{fig:misc-time}
    \end{subfigure}
    \hfill
    \begin{subfigure}[t]{0.43\textwidth}
    \scalebox{0.52}{
    \begin{tikzpicture}
        \centering
        \begin{axis}[width=12cm,height=7.25cm,tick scale binop=\times,clip marker paths=true,ylabel={factor of edge blowup},xlabel={number of nodes}, grid=both, legend pos=north east,error bars/y dir=both, error bars/y explicit,error bars/error bar style={solid}, title={Average edge blowup},legend style={fill opacity=0.7, text opacity=1, column sep=0.15cm}, legend cell align={left}]
            \addplot[color=cb-pink,mark size=1.5pt,mark=o] table[x=n, y=m_blowup_ilp,y error=m_blowup_ilp_error] {\tILP};
            \addlegendentry{ILP $\mathcal{G}(n,p)$}
            \addplot[color=cb-pink,dashed,mark options={solid}, mark size=1.5pt,mark=o] table[x=n, y=m_blowup_blelloch,y error=m_blowup_blelloch_error] {\tILP};
            \addlegendentry{\algoDP $\mathcal{G}(n,p)$}

            \addplot[color=cb-orange,mark size=1.5pt,mark=o] table[x=n, y=m_blowup_ilp,y error=m_blowup_ilp_error] {\tHyperbolicILP};
            \addlegendentry{ILP $\mathcal{H}(n,k,\gamma,T)$}
            \addplot[color=cb-orange,dashed,mark options={solid}, mark size=1.5pt,mark=o] table[x=n, y=m_blowup_blelloch,y error=m_blowup_blelloch_error] {\tHyperbolicILP};
            \addlegendentry{\algoDP $\mathcal{H}(n,k,\gamma,T)$}

            \addplot[color=cb-blue,mark size=1.5pt,mark=o] table[x=n, y=m_blowup_ilp,y error=m_blowup_ilp_error] {\tILPPowerlaw};
            \addlegendentry{ILP $\mathcal{MC}(\cdot,s)$}
            \addplot[color=cb-blue,dashed,mark options={solid}, mark size=1.5pt,mark=o] table[x=n, y=m_blowup_blelloch,y error=m_blowup_blelloch_error] {\tILPPowerlaw};
            \addlegendentry{\algoDP $\mathcal{MC}(\cdot,s)$}
        \end{axis}
    \end{tikzpicture}}
    \caption{The average edge blowup \ie the ratio of edges in the shortcutted graph to the edges in the input graph~${|E\cup S|/|E|}$ where $S$ is the generated shortcut set.}
    \label{fig:edge-blowup}
    \end{subfigure}
    \hfill
    \begin{subfigure}[t]{0.21\textwidth}
    \scalebox{0.52}{
    \begin{tikzpicture}
        \centering
        \begin{axis}[height=7.2cm,width=4.8cm,name=thirty,tick scale binop=\times,clip marker paths=true,ylabel={solving time (s)}, title={$\mathcal{G}(n,3/n)$}, xlabel={approximation factor $\sigma$}, grid=both, legend cell align={left}]
            \addplot[scatter, only marks, point meta min=-700, point meta max=1000] table[x=approximation_factor_blelloch, y=speed_ilp_s] {\tILPRawthirtyfive};
        \end{axis}
        \node[xshift=-1.2cm,yshift=-0.5cm] (a) at (thirty.north east) {86\% solved};
    \end{tikzpicture}}
    \caption{Solving time as a function of the approximation factor $\sigma$ for $n=35$.}
    \label{fig:solving-time}
    \end{subfigure}
    \caption{Miscellaneous results on including average running time, average factor of edge increases and solavbility on Gilbert random graphs.}
    \label{fig:misc}
\end{figure}
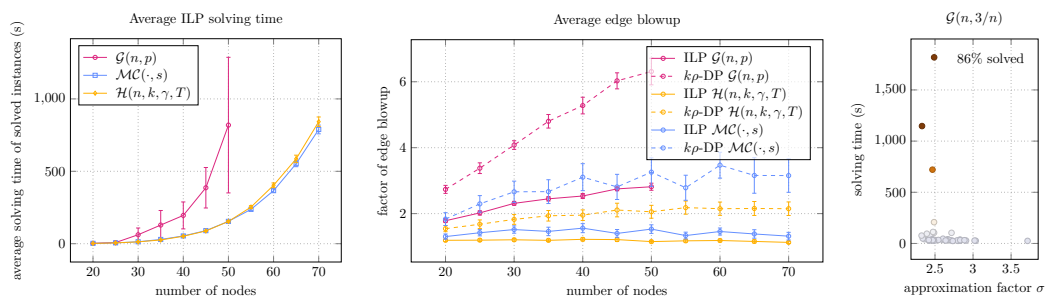

\begin{figure}
    \begin{subfigure}[t]{0.98\textwidth}
        \centering
    \scalebox{0.55}{
        \begin{tikzpicture} 
    \begin{axis}[
    hide axis,
    xmin=10,
    xmax=50,
    ymin=0,
    ymax=0.4,
    legend style={at={(0.5,-0.1)},draw=black!50,anchor=north,legend columns=-1, align=left},
    ]
    \addlegendimage{/pgfplots/ybar,cb-orange,fill=cb-orange!20}
    \addlegendentry{\algoGreedy ~~~}

    \addlegendimage{/pgfplots/ybar,cb-pink,fill=cb-pink!20}
    \addlegendentry{\algoDP ~~~}

    \addlegendimage{cb-red-orange,fill=cb-red-orange!20,/pgfplots/ybar}
    \addlegendentry{\algoDPStar ~~~}

    \addlegendimage{cb-blue,fill=cb-blue!20,/pgfplots/ybar}
    \addlegendentry{\algoDPPCMH ~~~}

    \addlegendimage{cb-magenta,fill=cb-magenta!20,/pgfplots/ybar}
    \addlegendentry{\algoDPSAMH ~~~}

    \addlegendimage{line legend, smooth, black, dashed}
    \addlegendentry{\algoDPAll}
    \end{axis}
\end{tikzpicture}}
    \end{subfigure}

    \begin{subfigure}[t]{0.32\textwidth}
    \scalebox{0.53}{
    \begin{tikzpicture}
        \centering
        \begin{axis}[ybar=2pt,xmin=0.4,xmax=6.6,bar width=0.14cm, xtick={1,2,3,4,5,6}, xticklabels={$\sqrt{n}$,$n/10$,$n/4$,$n/3$,$n/2$,$n-1$}, ylabel={average approximation factor $\sigma$},xlabel={$\rho$}, grid=both, legend pos=north east,error bars/y dir=both, error bars/y explicit, title={$\mathcal{H}(n,k=3,\gamma=3.0,T=0)$},legend style={fill opacity=0.7, text opacity=1},ymode=log, log basis y={2}]
            \addplot[color=cb-orange, fill=cb-orange!20] table[x=regime, y=approx_greedy, y error=approx_greedy_error] {\tLargeComparisonPowerlawThree};

            \addplot[color=cb-pink, fill=cb-pink!20] table[x=regime, y=approx, y error=approx_error] {\tLargeComparisonHyperbolicThree};

            \addplot[color=cb-red-orange, fill=cb-red-orange!20] table[x=regime, y=approx_ms, y error=approx_ms_error] {\tLargeComparisonHyperbolicThree};

            \addplot[color=cb-blue, fill=cb-blue!20] table[x=regime, y=approx_pc, y error=approx_pc_error] {\tLargeComparisonHyperbolicThree};

            \addplot[color=cb-magenta, fill=cb-magenta!20] table[x=regime, y=approx_pert, y error=approx_pert_error] {\tLargeComparisonHyperbolicThree};

            \addplot[smooth, no markers, domain=-1:6, color=black,dashed,forget plot]{1};
        \end{axis}
    \end{tikzpicture}}
    \end{subfigure}
    \hfill
    \begin{subfigure}[t]{0.32\textwidth}
    \scalebox{0.53}{
    \begin{tikzpicture}
        \centering
        \begin{axis}[ybar=2pt,xmin=0.4,xmax=6.6,bar width=0.14cm, xtick={1,2,3,4,5,6}, xticklabels={$\sqrt{n}$,$n/10$,$n/4$,$n/3$,$n/2$,$n-1$}, ylabel={average approximation factor $\sigma$},xlabel={$\rho$}, grid=both, legend pos=north east,error bars/y dir=both, error bars/y explicit, title={$\mathcal{MC}(\cdot,100)$},legend style={fill opacity=0.7, text opacity=1},ymode=log, log basis y={2}]
            \addplot[color=cb-orange, fill=cb-orange!20] table[x=regime, y=approx_greedy, y error=approx_greedy_error] {\tLargeComparisonPowerlawThree};

            \addplot[color=cb-pink, fill=cb-pink!20] table[x=regime, y=approx, y error=approx_error] {\tLargeComparisonPowerlawThree};

            \addplot[color=cb-red-orange, fill=cb-red-orange!20] table[x=regime, y=approx_ms, y error=approx_ms_error] {\tLargeComparisonPowerlawThree};

            \addplot[color=cb-blue, fill=cb-blue!20] table[x=regime, y=approx_pc, y error=approx_pc_error] {\tLargeComparisonPowerlawThree};

            \addplot[color=cb-magenta, fill=cb-magenta!20] table[x=regime, y=approx_pert, y error=approx_pert_error] {\tLargeComparisonPowerlawThree};

            \addplot[smooth, no markers, domain=-1:6, color=black,dashed,forget plot]{1};
        \end{axis}
    \end{tikzpicture}}
    \end{subfigure}
    \hfill
    \begin{subfigure}[t]{0.32\textwidth}
    \scalebox{0.53}{
    \begin{tikzpicture}
        \centering
        \begin{axis}[ybar=2pt,xmin=0.4,xmax=6.6,bar width=0.14cm, xtick={1,2,3,4,5,6}, xticklabels={$\sqrt{n}$,$n/10$,$n/4$,$n/3$,$n/2$,$n-1$}, ylabel={average approximation factor $\sigma$},xlabel={$\rho$}, grid=both, legend pos=north east,error bars/y dir=both, error bars/y explicit, title={$\mathcal{G}(n,3/n)$},legend style={fill opacity=0.7, text opacity=1},ymode=log, log basis y={2}]
            \addplot[color=cb-orange, fill=cb-orange!20] table[x=regime, y=approx_greedy, y error=approx_greedy_error] {\tLargeComparisonGilbertThree};

            \addplot[color=cb-pink, fill=cb-pink!20] table[x=regime, y=approx, y error=approx_error] {\tLargeComparisonGilbertThree};

            \addplot[color=cb-red-orange, fill=cb-red-orange!20] table[x=regime, y=approx_ms, y error=approx_ms_error] {\tLargeComparisonGilbertThree};

            \addplot[color=cb-blue, fill=cb-blue!20] table[x=regime, y=approx_pc, y error=approx_pc_error] {\tLargeComparisonGilbertThree};

            \addplot[color=cb-magenta, fill=cb-magenta!20] table[x=regime, y=approx_pert, y error=approx_pert_error] {\tLargeComparisonGilbertThree};

            \addplot[smooth, no markers, domain=-1:6, color=black,dashed,forget plot]{1};
        \end{axis}
    \end{tikzpicture}}
    \end{subfigure}
    \caption{The average approximation factor of all heuristics compared to the baseline \algoDPAll on various random graph classes for $n \leq 8000$ (the number of nodes fluctuates with the size of the largest connected component) and $k=3$.}
    \label{fig:k-3-heuristics}
\end{figure}
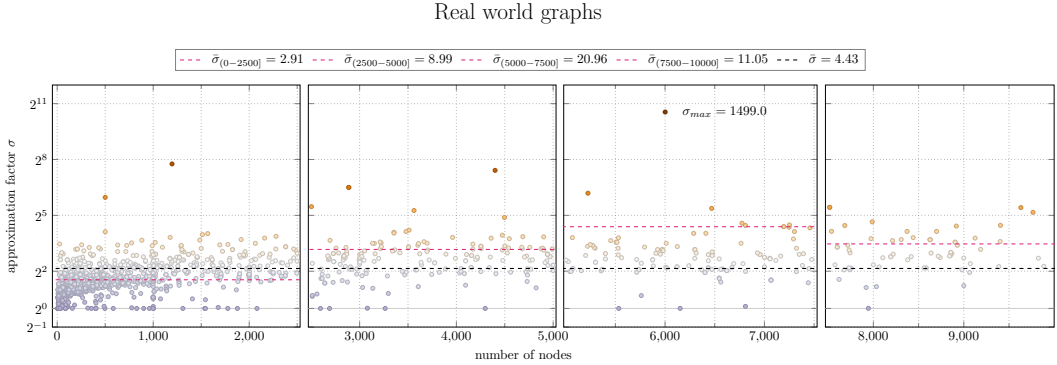
\begin{figure}[t]
    \begin{subfigure}{0.98\textwidth}
        \centering
        \scalebox{0.5}{\LARGE Real world graphs}
    \end{subfigure}\\[1.5ex]
    \begin{subfigure}[t]{0.98\textwidth}
        \centering
    \scalebox{0.5}{
        \begin{tikzpicture} 
    \begin{axis}[
    hide axis,
    xmin=10,
    xmax=50,
    ymin=0,
    ymax=0.4,
    legend style={at={(0.5,-0.1)},draw=black!50,anchor=north,legend columns=-1, align=left},
    ]
    \pgfplotstablegetelem{0}{approx}\of\tRealWorldDataProcessedFirstMean
    \pgfmathsetmacro\mac{\pgfplotsretval}
    \addlegendimage{dashed,cb-pink,thick}
    \addlegendentry{$\bar \sigma_{(0-2500]}=\pgfmathprintnumber{\mac}$}

    \pgfplotstablegetelem{0}{approx}\of\tRealWorldDataProcessedSecondMean
    \pgfmathsetmacro\mact{\pgfplotsretval}
    \addlegendimage{dashed,cb-pink,thick}
    \addlegendentry{$\bar \sigma_{(2500-5000]}=\pgfmathprintnumber{\mact}$}

    \pgfplotstablegetelem{0}{approx}\of\tRealWorldDataProcessedThirdMean
    \pgfmathsetmacro\macr{\pgfplotsretval}
    \addlegendimage{dashed,cb-pink,thick}
    \addlegendentry{$\bar \sigma_{(5000-7500]}=\pgfmathprintnumber{\macr}$}

    \pgfplotstablegetelem{0}{approx}\of\tRealWorldDataProcessedFourthMean
    \pgfmathsetmacro\macf{\pgfplotsretval}
    \addlegendimage{dashed,cb-pink,thick}
    \addlegendentry{$\bar \sigma_{(7500-10000]}=\pgfmathprintnumber{\macf}$}

    \pgfplotstablegetelem{0}{approx}\of\tRealWorldDataProcessedMean
    \pgfmathsetmacro\macg{\pgfplotsretval}
    \addlegendimage{dashed,black,thick}
    \addlegendentry{$\bar \sigma=\pgfmathprintnumber{\macg}$}
    \end{axis}
\end{tikzpicture}}
    \end{subfigure}\\[1ex]
    \begin{subfigure}{0.27\textwidth}
        \centering
    \scalebox{0.5}{
    \begin{tikzpicture}
        \begin{axis}[ymode=log,height=8cm,width=8.1cm, log basis y={2}, xmin=-50,xmax=2530, ymin=0.5,ymax=4096,xtick={0,1000,2000,3000},ytick={0.5,1,4,32,256,2048}, grid=both,clip marker paths=true, ylabel={approximation factor $\sigma$}, legend pos=north west,minor tick num=1, extra y ticks={1},extra y tick style={grid style={solid,gray!40}}, extra y tick labels=\empty,colormap name={PuOr-11}]
            \addplot[scatter, only marks,mark size=1.5pt,point meta=explicit,point meta min={-4},point meta max={9},forget plot] table[x=n,y=approx,meta expr=log2(\thisrow{approx})]{\tRealWorldDataProcessedFirst};
            \pgfplotstablegetelem{0}{approx}\of\tRealWorldDataProcessedFirstMean
            \pgfmathsetmacro\mac{\pgfplotsretval}
            \addplot[smooth, no marks,domain=-100:2600,dashed,color=cb-pink,thick] {\mac};

            \pgfplotstablegetelem{0}{approx}\of\tRealWorldDataProcessedMean
            \pgfmathsetmacro\macg{\pgfplotsretval}
            \addplot[smooth, no marks,domain=-100:2600,dashed,color=black,thick] {\macg};
        \end{axis}
    \end{tikzpicture}}
    \end{subfigure}
    \hfill
    \begin{subfigure}{0.23\textwidth}
        \centering
    \scalebox{0.5}{
    \begin{tikzpicture}
        \begin{axis}[ymode=log,height=8cm,width=8.1cm, log basis y={2}, xmin=2470,xmax=5030, ymin=0.5,ymax=4096,xtick={3000,4000,5000},ytick={0.5,4,32,256,2048}, yticklabels={\empty}, grid=both,clip marker paths=true, legend pos=north west,minor tick num=1, extra y ticks={1},extra y tick style={grid style={solid,gray!40}}, extra y tick labels=\empty,colormap name={PuOr-11}]
            \addplot[scatter, only marks,mark size=1.5pt,point meta=explicit,point meta min={-4},point meta max={9},forget plot] table[x=n,y=approx,meta expr=log2(\thisrow{approx})]{\tRealWorldDataProcessedSecond};
            \pgfplotstablegetelem{0}{approx}\of\tRealWorldDataProcessedSecondMean
            \pgfmathsetmacro\mac{\pgfplotsretval}
            \addplot[smooth, no marks,domain=2400:5100,dashed,color=cb-pink,thick] {\mac};

            \pgfplotstablegetelem{0}{approx}\of\tRealWorldDataProcessedMean
            \pgfmathsetmacro\macg{\pgfplotsretval}
            \addplot[smooth, no marks,domain=2400:5100,dashed,color=black,thick] {\macg};
        \end{axis}
    \end{tikzpicture}}
    \end{subfigure}
    \hfill
    \begin{subfigure}{0.23\textwidth}
        \centering
    \scalebox{0.5}{
    \begin{tikzpicture}
        \begin{axis}[ymode=log,height=8cm,width=8.25cm, log basis y={2}, xmin=4980,xmax=7530, ymin=0.5,ymax=4096,xtick={4979,6000,7000,8000},ytick={0.5,4,32,256,2048}, yticklabels={\empty}, grid=both,clip marker paths=true, legend pos=north west, minor xtick={5500,6500,7500}, extra y ticks={1},extra y tick style={grid style={solid,gray!40}}, extra y tick labels=\empty,colormap name={PuOr-11}]
            \addplot[scatter, only marks,mark size=1.5pt,point meta=explicit,point meta min={-4},point meta max={9},forget plot] table[x=n,y=approx,meta expr=log2(\thisrow{approx})]{\tRealWorldDataProcessedThird};
            \pgfplotstablegetelem{0}{approx}\of\tRealWorldDataProcessedThirdMean
            \pgfmathsetmacro\mac{\pgfplotsretval}
            \addplot[smooth, no marks,domain=4800:7600,dashed,color=cb-pink,thick] {\mac};
            \node (label) at (axis cs:6600, 1499) {$\sigma_{max}=1499.0$};

            \pgfplotstablegetelem{0}{approx}\of\tRealWorldDataProcessedMean
            \pgfmathsetmacro\macg{\pgfplotsretval}
            \addplot[smooth, no marks,domain=4800:7600,dashed,color=black,thick] {\macg};
        \end{axis}
    \end{tikzpicture}}
    \end{subfigure}
    \hfill
    \begin{subfigure}{0.24\textwidth}
        \centering
    \scalebox{0.5}{
    \begin{tikzpicture}
        \begin{axis}[ymode=log,height=8cm,width=7.6cm, log basis y={2}, xmin=7470,xmax=9999, ymin=0.5,ymax=4096,xtick={8000,9000,10000},ytick={0.5,4,32,256,2048}, yticklabels={\empty}, extra y tick labels=\empty, extra y tick style={grid style={solid,gray!40}}, grid=both,clip marker paths=true, legend pos=north west, minor tick num=1, extra y ticks={1} ,colormap name={PuOr-11}]
            \addplot[scatter, only marks,mark size=1.5pt,point meta=explicit,point meta min={-4},point meta max={9},forget plot] table[x=n,y=approx,meta expr=log2(\thisrow{approx})]{\tRealWorldDataProcessedFourth};
            \pgfplotstablegetelem{0}{approx}\of\tRealWorldDataProcessedFourthMean
            \pgfmathsetmacro\mac{\pgfplotsretval}
            \addplot[smooth, no marks,domain=7400:10100,dashed,color=cb-pink,thick] {\mac};

            \pgfplotstablegetelem{0}{approx}\of\tRealWorldDataProcessedMean
            \pgfmathsetmacro\macg{\pgfplotsretval}
            \addplot[smooth, no marks,domain=7400:10100,dashed,color=black,thick] {\macg};
        \end{axis}
    \end{tikzpicture}}
    \end{subfigure}\\[-2ex]
    \begin{subfigure}{0.99\textwidth}
        \centering
        \scalebox{0.5}{number of nodes}
    \end{subfigure}
    \caption{Approximation factor of \algoDP in relation to the baseline \algoDPAll on all graphs from the network repository dataset~\cite{nr} with at most $10000$ nodes with $\rho=n-1$ and $k=3$.}
    \label{fig:real-world-data}
\end{figure}
Our construction in \cref{sec:lower_bounds} to bound the approximation factor of \algoDP relies on carefully crafted graph structures.
This raises the question, how the algorithm performs on ``ordinary'' networks.
Thus our first experiment considers the approximation factor of \algoDP on $\mathcal{G}(n,p)$ graphs.  and two other random graph models.
For each model and graph size, we sample $50$ graphs and derive the average and maximum approximation factor.

For the $\mathcal{G}(n,p)$ model, \algoDP displays a small approximation factor (less than three) in any case which is slowly growing with increasing graph order as depicted in \cref{fig:k-2-approximation-factor}.
In addition, the approximation factor is strongly concentrated around the average, suggesting that there are no commonly found exceptionally hard instances for \algoDP on  $\mathcal{G}(n,p)$.

Notice that our lower bound construction shares some similarities with graphs demonstrating a powerlaw degree distribution, namely the existence of high degree vertices which, if shortcutted just right, may complete many $(k,\rho)$-balls at once.
Hence we conjecture that these graphs are a hard input for \algoDP.
Indeed our experiments show a quickly increasing average approximation factor for growing $n$ for both random graph models with powerlaw degree distributions.
For these two models, the maximum approximation factor for each data parameter point scales almost linearly in $n$.
Especially the result on the $\mathcal{MC}(\cdot,s)$ model (\ie uniform samples of simple graphs with a given degree sequence) suggest that the underlying cause thereof is likely the powerlaw degree distribution.

\algoDPAll combines the techniques described in \cref{subsec:pc,subsec:sa} with the additional usage of MinHashing.
Clearly, our heuristic performs better on average than \algoDP on every tested random graph model as depicted in \cref{fig:k-2-approximation-factor}.
This is especially prominent for graphs with powerlaw distributed vertex degrees. %, supporting our claim that the new heuristic can, at least in part, mitigate some of the issues of \algoDP.
For the largest graphs of order $n=70$, on average our heuristic provides solutions which are $4.26$ and $3.79$ times smaller, for hyperbolic and $\mathcal{MC}(\cdot,s)$ respectively, than \algoDP's.
The effect is even larger on the maximum observed approximation factor, where our heuristic consistently yields smaller solutions --- sometimes improving one order in magnitude.
Surprisingly, we are able to match and improve upon the good performance of \algoDP  on Gilbert  graphs.

Observe that our heuristic is a multi-stage algorithm as it first invokes \algoDPPC, then \algoDPSA and finally \algoDP (\ie each stage observes previously inserted shortcuts).
This alone provides our heuristic with a considerable advantage compared to \algoDP which works completely oblivious of the other computed shortcuts.
Thus let us define \algoDPStar, an algorithm which repeatedly calls \algoDP on a fraction of $\frac{n}{10}$ nodes, inserts all proposed shortcuts and repeats this process until all nodes have calculated their shortcut set.
Evidently this simulates a continuously growing globally known shortcut set and allows us to further discern the root of better performance of our heuristic.
As reported in \cref{fig:k-2-approximation-factor}, \algoDPStar barely improves over \algoDP on graphs with powerlaw degree distributions, while it performs significantly better on $\mathcal{G}(n,p)$. 
We attribute these observation to the fact, that multiple stages do not help to find and place globally good shortcuts as needed to solve powerlaw degree instance; yet they uncover ``accidental'' synergies in the large number of shortcuts needed for Gilbert graphs.

\subsubsection{Measuring run time}
As depicted in \cref{fig:misc-time}, the average run time of our ILP solver grows dramatically in the graph size~$n$ in particular for Gilbert graphs.
While we can solve all instances in the reported parameter regime for all other models within the time budget of 1800\,s, the fraction of solved instances on Gilbert graphs shrinks with increasing~$n$ as detailed in \cref{table:solvability}.
Thus we visually highlight the regime with only partial results in \cref{fig:k-2-approximation-factor} to clearly mark the potentially biased results.
Despite various preliminary tests, no association between the solving time and the approximation factor could be found in this regime as depicted in \cref{fig:solving-time}.
Hence this reduces the probability that the unsolved instances induce a systemic bias on the investigated measures due to an association to the solving time.

For an input graph~$G=(V, E)$ and a computed shortcut set~$S$ define the \emph{edge blowup factor} as the relative increase $|E\cup S|/|E|$ of edges.
Recall that a main use case for $(k,\rho)$-graphs are sharper theoretical bounds for \probPSSSP algorithms.
The blowup factor directly affects the work of these algorithms, at least multiplicatively.

Surprisingly, for random graphs with powerlaw degree distributions, the ILP solution displays a small edge blowup factor of approximately $1.4$ which in addition seems to decrease for an increasing graph sizes.
This suggests that $(k, \rho)$ transformations are, on average, asymptotically irrelevant on these practically relevant graph classes.
In comparison \algoDP initially sets out with an edge blowup factor of less than two for both aforementioned models which in line with previous results, subsequently displays an increasing trend on graphs of higher order.
Contrasting this are the results for Gilbert random graphs, here both the heuristic and the ILP algorithm persistently increase their edge blowup factor for larger graphs.

Observe that the number of reachable nodes $\rho$ can be chosen freely within $[1,n)$.
In order to generalize our experiments, we investigate the performance of our heuristic on larger graphs for different values of $\rho$.
Recall that for $\rho=o(\sqrt{n})$ the shortcutting algorithm has a lower span than all \probPSSSP algorithms on $(k,\rho)$-graphs.
Hence we focus on $\rho=\Omega(\sqrt{n})$ for which our heuristic has the same span bounds as \algoDP.
We generate graphs with $n\in\set{2000,4000,6000,8000}$ nodes, extract the largest connected component, and consider the resulting (in general smaller) graph.
Since these graphs are out of reach for our ILP formulation, we define \algoDPAll as the baseline against which all other heuristics compete.
In addition, we explicitly state the performance of the separately described preprocessing steps in \cref{sec:preprocessing} to provide further insights into their individual impacts on the solution quality.
As depicted in \cref{fig:k-3-heuristics}, our heuristic consistently outperforms \algoDP, although the performance gap narrows quickly with decreasing $\rho$.
This is not surprising, as our heuristic benefits from shortcut synergies which become rarer with smaller $\rho$.
Consistently with the findings of Blelloch \etal, \algoDP's greedy counterpart performs worse than any other tested heuristic. %, ruling out its usability on any tested model.
When restricting our attention to the differential performance of \algoDPSAMH and \algoDPPCMH, we notice several interesting results: \begin{inparaenum}
    \item \algoDPPC shows only a superior performance in comparison to \algoDPSAMH for very high values of $\rho$. For smaller values of $\rho$, its performance quickly degrades until it is eventually overtaken by \algoDPSAMH. This trend is reproducible for varying choices of $k$; see \cref{fig:k-other-approx}.
    \item The final solution size of \algoDPAll is lower than the minimum of both other heuristics on their own. 
\end{inparaenum}
The previous discussion highlights the synergistic relationship of \algoDPPCMH and \algoDPSAMH; they excel in different parameter regimes of $\rho$ and their combination consistently outperforms their individual performance (although see \cref{fig:k-other-approx} for a narrow counter-example).

\subsubsection{Performance on real world graphs}
To quantify the robustness of our new heuristic, we also consider observed networks from \cite{nr}, including protein interactions, social, citation and neurological networks.
We test the heuristics on 2350 graphs from \cite{nr} (those with less than 10000 nodes which are not already $(k,\rho)$-graphs).
The results are split into four separate plots where every one displays a specific range of $n$.
We again use \algoDPAll as baseline as illustrated in \cref{fig:real-world-data}.
On the investigated graphs \algoDPAll either matches the performance of \algoDP or improves upon it by up to several orders of magnitude.
This culminates into a maximum observed approximation factor of $1499.0$, \ie on average for each edge added by \algoDPAll, \algoDP adds $1499$.
Even without exceptional instances, the average approximation factor over all tested graphs is $\bar \sigma=4.43$, which seems to increase on average for larger $n$.
This provides strong empirical evidence that our new heuristic is robust on practical instances.

\section{Conclusion}
We showed new results regarding the complexity of \probMSP and gave important theoretical insights into the lower bounds of the approximation factor of existing heuristics.
These results allowed us to derive a new heuristic extending upon the work of Blelloch \etal which experimentally shows a good performance not only in random graph models but also in a wide variety of real world graphs.
Finally, we were able to pinpoint a property commonly found in social network graphs, a powerlaw degree distribution, as one of the major factors inducing a large approximation factor for the previously existing heuristics.
Our results therefore uncovered new theoretical insights but were also able to supply a practical contribution.
Several open questions remain:
\begin{enumerate}
    \item As previously described, \algoDPSA and \algoDPPC excel at different parameter spaces for $\rho$, where exactly are the boundaries of these spaces?
    \item Are the constants negligible such that the new heuristics translate into efficient implementations?
    \item Is there an approximation algorithm for \probMSP with a sub-linear approximation factor?
\begin{longversion}
        \item Can one extend the hardness results for \probMSP to $k=2$. This would settle the question from which point on the problem gets intractable as it is known that for $k=1$ the problem is solvable in polynomial time.
        \item Let $f(n,k)$ be a monotonically decreasing function. Is there such a $f(n,k)$ such that the cardinality of the shortcut set is bounded above by $f(n,k)$? For $k=1$ it is known that the cardinality of the shortcut set is trivially bounded by $f(n,1) = \Oh(n^2)$.
\end{longversion}
\end{enumerate}

\bibliography{main}

\appendix
\section{Appendix}
\subsection{Supporting figures and tables}
\begin{table}[H]
    \centering
    \begin{tabular}{|c|ccccccc|}
        \hline
        \thead{\textbf{n}} & 20& 25& 30& 35& 40& 45& 50\\\hline
        \thead{\textbf{solved fraction of instances}} & 1.0 & 1.0 & 1.0 & 0.86 & 0.66 & 0.54 & 0.3125\\\hline
    \end{tabular}
    \caption{The fraction of solved instances of the ILP solver within 1800s for the $\mathcal{G}(n,p)$ random graph model.}
    \label{table:solvability}
\end{table}
\begin{figure}[H]
    \begin{subfigure}[t]{0.98\textwidth}
        \centering
    \scalebox{0.53}{
        \begin{tikzpicture} 
    \begin{axis}[
    hide axis,
    xmin=10,
    xmax=50,
    ymin=0,
    ymax=0.4,
    legend style={at={(0.5,-0.1)},draw=black!50,
    anchor=north,legend columns=-1, align=left},
    ]
    \addlegendimage{/pgfplots/ybar,cb-orange,fill=cb-orange!20}
    \addlegendentry{\algoGreedy ~~~}

    \addlegendimage{/pgfplots/ybar,cb-pink,fill=cb-pink!20}
    \addlegendentry{\algoDP ~~~}

    \addlegendimage{cb-red-orange,fill=cb-red-orange!20,/pgfplots/ybar}
    \addlegendentry{\algoDPStar ~~~}

    \addlegendimage{cb-blue,fill=cb-blue!20,/pgfplots/ybar}
    \addlegendentry{\algoDPPCMH ~~~}

    \addlegendimage{cb-magenta,fill=cb-magenta!20,/pgfplots/ybar}
    \addlegendentry{\algoDPSAMH ~~~}

    \addlegendimage{line legend, smooth, black, dashed}
    \addlegendentry{\algoDPAll}
    \end{axis}
\end{tikzpicture}}
    \end{subfigure}

    \begin{subfigure}[t]{0.34\textwidth}
    \scalebox{0.53}{
    \begin{tikzpicture}
        \centering
        \begin{axis}[ybar=2pt,xmin=0.4,xmax=6.6,bar width=0.14cm, xtick={1,2,3,4,5,6}, xticklabels={$\sqrt{n}$,$n/10$,$n/4$,$n/3$,$n/2$,$n-1$}, ylabel style={align=center,execute at begin node=\setlength{\baselineskip}{2em}},ylabel={{\LARGE$k=4$}\\average approximation factor $\sigma$},xlabel={$\rho$}, grid=both, legend pos=north east,error bars/y dir=both, error bars/y explicit, title={$\mathcal{H}(n,k=3,\gamma=3.0,T=0)$},legend style={fill opacity=0.7, text opacity=1},ymode=log,log basis y={2}]
            \addplot[color=cb-orange, fill=cb-orange!20] table[x=regime, y=approx_greedy, y error=approx_greedy_error] {\tLargeComparisonHyperbolicFour};
            \addplot[color=cb-pink, fill=cb-pink!20] table[x=regime, y=approx, y error=approx_error] {\tLargeComparisonHyperbolicFour};

            \addplot[color=cb-red-orange, fill=cb-red-orange!20] table[x=regime, y=approx_ms, y error=approx_ms_error] {\tLargeComparisonHyperbolicFour};

            \addplot[color=cb-blue, fill=cb-blue!20] table[x=regime, y=approx_pc, y error=approx_pc_error] {\tLargeComparisonHyperbolicFour};

            \addplot[color=cb-magenta, fill=cb-magenta!20] table[x=regime, y=approx_pert, y error=approx_pert_error] {\tLargeComparisonHyperbolicFour};

            \addplot[smooth, no markers, domain=-1:6, color=black,dashed,forget plot]{1};
        \end{axis}
    \end{tikzpicture}}
    \end{subfigure}
    \hfill
    \begin{subfigure}[t]{0.31\textwidth}
    \scalebox{0.53}{
    \begin{tikzpicture}
        \centering
        \begin{axis}[ybar=2pt,xmin=0.4,xmax=6.6,bar width=0.14cm, xtick={1,2,3,4,5,6}, xticklabels={$\sqrt{n}$,$n/10$,$n/4$,$n/3$,$n/2$,$n-1$},xlabel={$\rho$}, grid=both, legend pos=north east,error bars/y dir=both, error bars/y explicit, title={$\mathcal{MC}(\cdot,100)$},legend style={fill opacity=0.7, text opacity=1},ymode=log, log basis y={2}]
            \addplot[color=cb-orange, fill=cb-orange!20] table[x=regime, y=approx_greedy, y error=approx_greedy_error] {\tLargeComparisonPowerlawFour};
            \addplot[color=cb-pink, fill=cb-pink!20] table[x=regime, y=approx, y error=approx_error] {\tLargeComparisonPowerlawFour};

            \addplot[color=cb-red-orange, fill=cb-red-orange!20] table[x=regime, y=approx_ms, y error=approx_ms_error] {\tLargeComparisonPowerlawFour};

            \addplot[color=cb-blue, fill=cb-blue!20] table[x=regime, y=approx_pc, y error=approx_pc_error] {\tLargeComparisonPowerlawFour};

            \addplot[color=cb-magenta, fill=cb-magenta!20] table[x=regime, y=approx_pert, y error=approx_pert_error] {\tLargeComparisonPowerlawFour};

            \addplot[smooth, no markers, domain=-1:6, color=black,dashed,forget plot]{1};
        \end{axis}
    \end{tikzpicture}}
    \end{subfigure}
    \hfill
    \begin{subfigure}[t]{0.31\textwidth}
    \scalebox{0.53}{
    \begin{tikzpicture}
        \centering
        \begin{axis}[ybar=2pt,xmin=0.4,xmax=6.6,bar width=0.14cm, xtick={1,2,3,4,5,6}, xticklabels={$\sqrt{n}$,$n/10$,$n/4$,$n/3$,$n/2$,$n-1$},xlabel={$\rho$}, grid=both, legend pos=north east,error bars/y dir=both, error bars/y explicit, title={$\mathcal{G}(n,3/n)$},legend style={fill opacity=0.7, text opacity=1},ymode=log, log basis y={2}]
            \addplot[color=cb-orange, fill=cb-orange!20] table[x=regime, y=approx_greedy, y error=approx_greedy_error] {\tLargeComparisonGilbertFour};
            \addplot[color=cb-pink, fill=cb-pink!20] table[x=regime, y=approx, y error=approx_error] {\tLargeComparisonGilbertFour};

            \addplot[color=cb-red-orange, fill=cb-red-orange!20] table[x=regime, y=approx_ms, y error=approx_ms_error] {\tLargeComparisonGilbertFour};

            \addplot[color=cb-blue, fill=cb-blue!20] table[x=regime, y=approx_pc, y error=approx_pc_error] {\tLargeComparisonGilbertFour};

            \addplot[color=cb-magenta, fill=cb-magenta!20] table[x=regime, y=approx_pert, y error=approx_pert_error] {\tLargeComparisonGilbertFour};

            \addplot[smooth, no markers, domain=-1:6, color=black,dashed,forget plot]{1};
        \end{axis}
    \end{tikzpicture}}
    \end{subfigure}

    \begin{subfigure}[t]{0.34\textwidth}
    \scalebox{0.53}{
    \begin{tikzpicture}
        \centering
        \begin{axis}[ybar=2pt,xmin=0.4,xmax=6.6,bar width=0.14cm, xtick={1,2,3,4,5,6}, xticklabels={$\sqrt{n}$,$n/10$,$n/4$,$n/3$,$n/2$,$n-1$},ylabel style={align=center,execute at begin node=\setlength{\baselineskip}{2em}}, ylabel={{\LARGE$k=5$}\\average approximation factor $\sigma$},xlabel={$\rho$}, grid=both, legend pos=north east,error bars/y dir=both, error bars/y explicit,legend style={fill opacity=0.7, text opacity=1},ymode=log,log basis y={2}]
            \addplot[color=cb-orange, fill=cb-orange!20] table[x=regime, y=approx_greedy, y error=approx_greedy_error] {\tLargeComparisonHyperbolicFive};
            \addplot[color=cb-pink, fill=cb-pink!20] table[x=regime, y=approx, y error=approx_error] {\tLargeComparisonHyperbolicFive};

            \addplot[color=cb-red-orange, fill=cb-red-orange!20] table[x=regime, y=approx_ms, y error=approx_ms_error] {\tLargeComparisonHyperbolicFive};

            \addplot[color=cb-blue, fill=cb-blue!20] table[x=regime, y=approx_pc, y error=approx_pc_error] {\tLargeComparisonHyperbolicFive};

            \addplot[color=cb-magenta, fill=cb-magenta!20] table[x=regime, y=approx_pert, y error=approx_pert_error] {\tLargeComparisonHyperbolicFive};

            \addplot[smooth, no markers, domain=-1:6, color=black,dashed,forget plot]{1};
        \end{axis}
    \end{tikzpicture}}
    \end{subfigure}
    \hfill
    \begin{subfigure}[t]{0.31\textwidth}
    \scalebox{0.53}{
    \begin{tikzpicture}
        \centering
        \begin{axis}[ybar=2pt,xmin=0.4,xmax=6.6,bar width=0.14cm, xtick={1,2,3,4,5,6}, xticklabels={$\sqrt{n}$,$n/10$,$n/4$,$n/3$,$n/2$,$n-1$},xlabel={$\rho$}, grid=both, legend pos=north east,error bars/y dir=both, error bars/y explicit,legend style={fill opacity=0.7, text opacity=1},ymode=log, log basis y={2}]
            \addplot[color=cb-orange, fill=cb-orange!20] table[x=regime, y=approx_greedy, y error=approx_greedy_error] {\tLargeComparisonPowerlawFive};
            \addplot[color=cb-pink, fill=cb-pink!20] table[x=regime, y=approx, y error=approx_error] {\tLargeComparisonPowerlawFive};

            \addplot[color=cb-red-orange, fill=cb-red-orange!20] table[x=regime, y=approx_ms, y error=approx_ms_error] {\tLargeComparisonPowerlawFive};

            \addplot[color=cb-blue, fill=cb-blue!20] table[x=regime, y=approx_pc, y error=approx_pc_error] {\tLargeComparisonPowerlawFive};

            \addplot[color=cb-magenta, fill=cb-magenta!20] table[x=regime, y=approx_pert, y error=approx_pert_error] {\tLargeComparisonPowerlawFive};

            \addplot[smooth, no markers, domain=-1:6, color=black,dashed,forget plot]{1};
        \end{axis}
    \end{tikzpicture}}
    \end{subfigure}
    \hfill
    \begin{subfigure}[t]{0.32\textwidth}
    \scalebox{0.53}{
    \begin{tikzpicture}
        \centering
        \begin{axis}[ybar=2pt,xmin=0.4,xmax=6.6,bar width=0.14cm, xtick={1,2,3,4,5,6}, xticklabels={$\sqrt{n}$,$n/10$,$n/4$,$n/3$,$n/2$,$n-1$},xlabel={$\rho$}, grid=both, legend pos=north east,error bars/y dir=both, error bars/y explicit,legend style={fill opacity=0.7, text opacity=1},ymode=log, log basis y={2}]
            \addplot[color=cb-orange, fill=cb-orange!20] table[x=regime, y=approx_greedy, y error=approx_greedy_error] {\tLargeComparisonGilbertFive};

            \addplot[color=cb-pink, fill=cb-pink!20] table[x=regime, y=approx, y error=approx_error] {\tLargeComparisonGilbertFive};

            \addplot[color=cb-red-orange, fill=cb-red-orange!20] table[x=regime, y=approx_ms, y error=approx_ms_error] {\tLargeComparisonGilbertFive};

            \addplot[color=cb-blue, fill=cb-blue!20] table[x=regime, y=approx_pc, y error=approx_pc_error] {\tLargeComparisonGilbertFive};

            \addplot[color=cb-magenta, fill=cb-magenta!20] table[x=regime, y=approx_pert, y error=approx_pert_error] {\tLargeComparisonGilbertFive};

            \addplot[smooth, no markers, domain=-1:6, color=black,dashed,forget plot]{1};
        \end{axis}
    \end{tikzpicture}}
    \end{subfigure}
    \caption{Shows the average approximation factor of all heuristics compared to the baseline \algoDPAll on various random graph classes for $n \leq 8000$ (the number of nodes fluctuates with the size of the largest connected component).}
    \label{fig:k-other-approx}
\end{figure}

\begin{longversion}
    \subsection{Omitted proof for Lemma~\ref{lemma:single_ball_for_complex}}
\label{sec:append_complex_shortcuts}
In the following, we establish a number of properties of the transformations of \cref{subsec:vc_transform}.
They accumulate to a formal proof of \cref{lemma:single_ball_for_complex}, which is restated at the end of the section.
Let $G = (V, E)$ be the input graph for \probVC, $\Gtrans = (\Vtrans, \Etrans)$ the graph obtained from the transformation in \cref{subsec:vc_transform}, and $\Gop = (\Vtrans, \Etrans \cup S)$ where $S$ is a minimal cardinality shortcut set to ensure $\Gop$ is a $(k, \rho)$ graph.
Let us denote by $T_w$ a shortest path tree emanating from $w$.
Since $G$ and $\Gtrans$ are undirected graphs with unit edge weights, any shortest path tree is a shortest path tree with fewest hops.

\begin{lemma} \label{lemma:at-least-one}
    Any shortcut in an optimal solution set $S$ for \probMSP forms at least one $(k,\rho)$-ball for some node $v \in \Vedges$ in $\Gop$.
\end{lemma}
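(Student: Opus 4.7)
My plan is to argue by contradiction, exploiting the minimality of $|S|$. Suppose some shortcut $s \in S$ does not form a $(k,\rho)$-ball for any node in \Vedges. I will show that $S' = S \setminus \set{s}$ still turns \Gtrans into a $(k,\rho)$-graph, contradicting the choice of $S$ as a minimum cardinality solution.

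The core ingredient is a monotonicity property: inserting a shortcut into any weighted graph can only create $(k,\rho)$-balls, never destroy them. Since a shortcut~$\set{u,v}$ is added with weight~$d(u,v)$, the shortest-path distance between every pair of vertices is preserved, and therefore so is $r_\rho(\cdot)$. On the other hand, inserting a shortcut can only shrink hop distances~$\hat d(\cdot,\cdot)$, so the set $\setc{w}{\hat d(u,w) > k}$ only shrinks for every~$u$. Consequently $\bar r_k(u) = \min_{w\,:\,\hat d(u,w) > k} d(u,w)$ is monotonically non-decreasing, and the defining inequality $r_\rho(u) < \bar r_k(u)$ of a $(k,\rho)$-ball is preserved under shortcut insertion.

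Combining this with \cref{lemma:sufficiently-large}, every $u \in \Vtrans \setminus \Vedges$ already possesses a $(k,\rho)$-ball in~\Gtrans, and by the monotonicity just established this property persists in any supergraph of~\Gtrans obtained by adding shortcuts. Hence the only conceivable purpose of a shortcut in~$S$ is to close the $(k,\rho)$-ball of some node in \Vedges. If $s$ contributes to no such ball, then $S \setminus \set{s}$ already yields a $(k,\rho)$-graph, contradicting the optimality of~$|S|$.

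I do not anticipate a real obstacle: the argument reduces to the monotonicity observation plus a one-line minimality contradiction. The only subtle point is pinning down what it means for $s$ to \emph{form} a $(k,\rho)$-ball for some $v \in \Vedges$, which I read as ``$v$ has a $(k,\rho)$-ball in \Gop but not in $(\Vtrans, \Etrans \cup (S \setminus \set{s}))$''. Under this reading the contradiction is immediate, and the lemma paves the way for the stronger \emph{exactly one} statement of \cref{lemma:single_ball_for_complex}.
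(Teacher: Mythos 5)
Your proof is correct and follows essentially the same minimality-contradiction argument as the paper, which simply notes that removing $s$ from $S$ would leave a smaller valid solution if $s$ formed no ball. Your explicit monotonicity observation (shortcuts preserve $d$ and only shrink $\hat d$, so existing $(k,\rho)$-balls persist) is a welcome addition that the paper leaves implicit when restricting the affected node to $\Vedges$ via \cref{lemma:sufficiently-large}.
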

\begin{proof}
    For any $s \in S$, the graph $\Gop = (\Vtrans,\Etrans\cup (S\setminus \{s\}))$ is not a $(k,\rho)$-graph.
    Otherwise, the solution was not minimal and as such not optimal to begin with.
    Hence, $s$ forms at least one $(k,\rho)$-ball for some node $v \in \Vedges$.
\end{proof}

\begin{lemma} \label{lemma:tree}
    Given an undirected graph $G_n=(V_n,E_n)$ with strictly positive edge weights and a shortcut $s=\{u,w\}$.
    Any node $v \in V$ decreases the hop distance to at least one other node iff the nodes connected by $s$ fulfill $\hat d(v,u) < \hat d(v,w)-1$ in $G_n$ and $u$ is an ancestor of $w$ in a shortest path tree $T_v$ rooted in $v$.
\end{lemma}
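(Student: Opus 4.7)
The lemma characterizes when a given shortcut $s=\{u,w\}$ strictly reduces the hop distance from a source~$v$ to at least one other node. I would prove both directions by exploiting that the shortcut's weight equals $d(u,w)$, so any shortest path using~$s$ traces a shortest weighted path of the original graph that visits $u$ and $w$ consecutively.

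For the \emph{if} direction, assume $u$ is an ancestor of $w$ in some shortest-path tree $T_v$ and $\hat d(v,u) < \hat d(v,w) - 1$. Ancestry in $T_v$ supplies $d(v,u)+d(u,w)=d(v,w)$. Concatenating a closest path from~$v$ to~$u$ (which has weight $d(v,u)$ and $\hat d(v,u)$ hops) with the edge $s$ yields a $v$-$w$ path of weight $d(v,w)$ and $\hat d(v,u)+1 < \hat d(v,w)$ hops. Hence the hop distance from $v$ to $w$ strictly decreases.

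For the \emph{only if} direction, suppose the addition of $s$ strictly reduces the hop distance from $v$ to some $x$. Any improving shortest path must traverse~$s$; without loss of generality from $u$ to $w$. Weight bookkeeping forces $d(v,u)+d(u,w)+d(w,x)=d(v,x)$, and combining this with the triangle inequalities $d(v,u)+d(u,w)\ge d(v,w)$ and $d(v,w)+d(w,x)\ge d(v,x)$ collapses both to equalities: $d(v,u)+d(u,w)=d(v,w)$ and $d(v,w)+d(w,x)=d(v,x)$. The first equality witnesses a shortest $v$-$w$ path via $u$, so some shortest-path tree $T_v$ has $u$ as ancestor of $w$ (extend the path to all remaining nodes arbitrarily). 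The second places $w$ on a shortest $v$-$x$ path, hence $\hat d(v,x)\le \hat d(v,w)+\hat d(w,x)$ by concatenation of closest paths. Since one can replace the $v$-$u$ and $w$-$x$ sub-paths of the improving path by closest paths, its hop count is at least $\hat d(v,u)+1+\hat d(w,x)$; because this must be strictly less than $\hat d(v,x)$, we obtain $\hat d(v,u)+1+\hat d(w,x) < \hat d(v,w)+\hat d(w,x)$, i.e.\ $\hat d(v,u)<\hat d(v,w)-1$.

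\textbf{Main obstacle.} The most delicate step is the ``triangle inequality for $\hat d$'' used in the only-if direction, since concatenating two closest paths need not produce a shortest weighted path between the outer endpoints. I would therefore foreground the deduction of $d(v,w)+d(w,x)=d(v,x)$ from the main weight equation, since this is precisely what legitimizes the bound $\hat d(v,x)\le\hat d(v,w)+\hat d(w,x)$. Constructing an explicit $T_v$ in which $u$ is an ancestor of $w$ (rather than just arguing it exists) would also require a small argument based on gluing a closest $v$-$u$ path with a closest $u$-$w$ path using $d(v,u)+d(u,w)=d(v,w)$; once these two points are made carefully, the remaining arithmetic is routine.
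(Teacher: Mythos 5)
Your proof is correct and follows essentially the same approach as the paper's: both directions hinge on the observation that an improving path must use the shortcut as part of a shortest-weight path, which forces $d(v,u)+d(u,w)=d(v,w)$ (the ancestry condition) and then reduces to comparing hop counts. Your only-if direction is in fact somewhat more careful than the paper's, since you handle an arbitrary improved target $x$ via explicit weight bookkeeping and the inequality $\hat d(v,x)\le\hat d(v,w)+\hat d(w,x)$, whereas the paper argues informally about $w$ and the subtree below it.
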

\begin{proof}
    Fix any node $v$ which decreases its hop distance to at least one other node by the addition of $s$ to $G_n$.
    We now assume without loss of generality there is no shortest path tree with fewest hops for which $u$ is an ancestor of $w$.
    Then, necessarily the shortcut creates a new path which is not a shortest path, since $d(v,u)+d(u,w) > d(v,w)$.
    Thus, the shortcut cannot decrease the hop distance to any node.
    Hence, without loss of generality we assume $u$ is an ancestor of $w$ and such a shortest path tree exists.
    This implies by the assumption of strictly positive edge weights that $d(v,u) < d(v,w)$ and that $v\leadsto w \leadsto u$ is not a shortest path from $v$ to $w$.
    It directly follows that in order to decrease the hop distance from $v$ to $w$ the shortcut endpoint $u$ has to be closer by number of hops to $v$ before the insertion of the shortcut, thereby implying $\hat d(v,u) < \hat d(v,w)-1$ in $G_n$.
    Then, the hop distance from $v\leadsto w$ is decreased and consequently for all other nodes $x$ inside the subtree rooted by $w$ in $T_v$ for which $v\leadsto w\leadsto x$ is a shortest path with fewest hops too.
\end{proof}
\noindent
It is clear that our constructed graph conforms to the assumptions made by the previous \cref{lemma:tree} for any number of shortcuts added.

\begin{lemma} \label{lemma:shortcut-more-than-zero}
    Any shortcut $s$ in an optimal solution set for \probMSP has both endpoints within $k$ hop distance of any node $v \in \Vedges$ in the graph $\Go$ for which the addition of $s$ increases the number of $\rho$ nearest neighbors reachable in $k$ hops by more than $1$.
\end{lemma}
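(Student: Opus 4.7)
The plan is to combine \cref{lemma:tree} with the rigid subdivision structure of $\Gtrans$, first deriving hop-distance bounds in the intermediate graph $G_n := (\Vtrans, \Etrans \cup (S \setminus \set{s}))$ and then transferring them back to $\Gtrans$. Write $s = \set{u,w}$. Because adding $s$ to $G_n$ raises $v$'s count of $\rho$-nearest neighbours reachable in $\le k$ hops by more than one, $s$ in particular decreases at least one such hop distance. Applying \cref{lemma:tree} to $G_n$ therefore yields (WLOG) a shortest-path tree $T_v$ rooted at $v$ in $G_n$ in which $u$ is an ancestor of $w$ and $\hat d_{G_n}(v,u) < \hat d_{G_n}(v,w) - 1$. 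Let $x_1, x_2, \ldots$ (at least two) be the newly reachable $\rho$-nearest neighbours of $v$; each admits a hop path $v \leadsto u \to w \leadsto x_i$ of length at most $k$, forcing $\hat d_{G_n}(v,u) + 1 + \hat d_{G_n}(w,x_i) \le k$. Since at least two such $x_i$ sit in the subtree of $T_v$ rooted at $w$, at least one satisfies $\hat d_{G_n}(w,x_i) \ge 1$, which together with the tree-path structure gives $\hat d_{G_n}(v,u) \le k-2$ and $\hat d_{G_n}(v,w) \le k$.

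Next I would lift these hop bounds from $G_n$ to $\Gtrans$. The key observation is that every vertex of $\Gtrans$ outside $V \cup \set{b_y : y \in V}$ has degree exactly two, so short hop paths emanating from an edge node $v \in \Vedges$ in $\Gtrans$ are rigid: they must traverse the subdivision templates step by step until reaching an original node or a pitchfork centre. Any shortcut $s' \in S \setminus \set{s}$ that shortens the $v$-to-$u$ hop distance in $G_n$ below the $\Gtrans$-distance must, by \cref{lemma:at-least-one}, be justified by some edge node $v' \in \Vedges$; applying the same \cref{lemma:tree} argument to $s'$ and $v'$ confines the endpoints of $s'$ to $v'$'s $k$-hop neighbourhood in $\Gtrans$. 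By induction on the number of shortcuts of $S$ that appear along the shortest-hop $v$-to-$u$ path in $G_n$, together with a case analysis ruling out shortcuts whose justifying edge node $v'$ is too far from $v$ to participate, one concludes $\hat d_{\Gtrans}(v,u), \hat d_{\Gtrans}(v,w) \le k$.

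The main obstacle is this second step: formalising the inductive collapse of chains of shortcuts in an \emph{optimal} $S$. The core difficulty is that multiple shortcuts may interact, but the separation between distinct edge nodes in $\Vedges$ (any two are at least $2(k-2)+1$ hops apart in $\Gtrans$) combined with the optimality of $|S|$ precludes such chains from extending $v$'s reach beyond its original $k$-hop neighbourhood. Once the case analysis is in place --- covering shortcuts inside a single pitchfork, spanning adjacent templates, and bridging remote gadgets --- concatenating the bounds establishes that both endpoints of $s$ lie within $k$ hops of $v$ in $\Gtrans$.
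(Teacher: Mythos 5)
Your first step is essentially sound and matches the spirit of the paper's argument: apply \cref{lemma:tree}, use the fact that the $\rho$-nearest neighbours of an edge node all lie within $k+1$ hops, and exploit the hypothesis that more than one new neighbour becomes reachable to push the far endpoint below hop distance $k+1$ and the near endpoint to $k-2$. The trouble is that you carry this out in $G_n = (\Vtrans, \Etrans \cup (S\setminus\set{s}))$ and must then transfer the bounds to \Go, and that transfer --- which you yourself flag as the main obstacle --- is never actually established. The proposed induction over chains of shortcuts along the $v$-to-$u$ path is only gestured at, and the claim that ``the separation between distinct edge nodes combined with the optimality of $|S|$ precludes such chains'' is precisely the kind of statement that this lemma and its successors (\cref{lemma:single-short-cor}) are meant to prove; constraining where the \emph{other} shortcuts of $S$ may lie in order to constrain where $s$ may lie risks circularity.

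The missing idea, which makes the entire lifting step unnecessary, is that shortcuts preserve weighted distances by definition, and \Gtrans has unit edge weights, so $\hat d_{\Gtrans}(v,\cdot)$ coincides with the weighted distance $d(v,\cdot)$ --- a quantity invariant under adding any subset of $S$. If $s=\set{u,w}$ is to move some $\rho$-nearest neighbour $x$ of $v$ into $v$'s $(k,\rho)$-ball, the new $\le k$-hop path must still be a shortest \emph{weighted} path (that is how $\hat d$ is defined), so $d(v,u)\le d(v,x)\le k+1$ and likewise for $w$; hence both endpoints are at hop distance at most $k+1$ from $v$ \emph{in \Gtrans}, with no induction over interacting shortcuts. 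From there the paper's refinement is the same as your first step: an endpoint at hop distance exactly $k+1$ can only bring that single node closer (any proper descendant on a shortest path lies at weighted distance greater than $k+1$ and is therefore not among the $\rho$ nearest), contradicting the ``more than one'' hypothesis, so the far endpoint sits at hop distance at most $k$ and \cref{lemma:tree} places the near endpoint at most at $k-2$.
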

\begin{proof}
    Observe that shortcuts cannot lie arbitrary in the graph $\Go$.
    By construction any node $v\in \Vedges$ has its $\rho$ nearest neighbors within $k+1$ hop distance.
    Thus, the shortcuts are required to enable a path with fewer hops to the nodes within $k+1$ hop distance.
    By \cref{lemma:at-least-one} we fixate an arbitrary vertex $w$ which forms a $(k,\rho)$-ball through the addition of the shortcut.
    Notice that in order for $w$ to benefit from the shortcut both endpoints have to be within $k+1$ hop distance of $w$ in $\Go$.
    Let $s=\{u,v\}$ be the shortcut in question, by \cref{lemma:tree} there exists a shortest path tree $T_w$ and one endpoint is closer in terms of hops to $w$.
    Fix the vertex that has a larger hop distance and without loss of generality call it $u$.
    Any shortcut decreases the hop distance of all nodes in the subtree rooted in $u$ of $T_w$.
    Therefore, any shortcut to a node with hop distance $k+1$ will only decrease the hop distance to that node and all other direct neighbors of that node are already at hop distance $k+2$, in consequence not valid for forming a $(k,\rho)$-ball.
    This does not hold for arbitrary graphs, but it does for the constructed one, since we chose unit edge weights, all nodes at hop distance $k+1$ in $\Go$ have the same distance.
    Consequently, a shortcut into a node with $k+1$ hop distance can never construct a shortest path to other nodes with $k+1$ hop distance.
    Hence, any shortcut in an optimal solution which increases the number of reachable neighbors by more than one has to end on a node with hop distance at most $k$ and \cref{lemma:tree} imposes that the other endpoint is then at most at hop distance $k-2$ from $w$ in $\Go$.
\end{proof}

\begin{lemma} \label{lemma:dist-two-nodes}
    The distance of two nodes from $\Vedges$ in $\Go$ is always a multiple of $2\cdot(k-2)$ and the distance of two nodes from $\Vedges$ for which the original edges were not adjacent in the input graph is at least $4k-8$.
\end{lemma}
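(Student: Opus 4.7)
The plan is to decompose every path between two edge nodes in $\Go$ into concatenations of ``half-gadget'' and ``full-gadget'' traversals, each contributing exactly $k-2$ and $2(k-2)$ hops respectively, and then read off both claims of the lemma from this decomposition.

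First I would establish two structural observations about $\Go$. The vertex $w_{u,v}$ sits exactly $k-2$ hops from each of $u$ and $v$ along its path template and has as its only neighbours the two subdivision vertices $s_1, s'_1$; consequently any walk leaving $w_{u,v}$ is forced to march through a complete half-gadget before it can touch any other part of the graph, incurring exactly $k-2$ hops to reach $u$ or $v$. Symmetrically, every walk entering $w_{u',v'}$ spends $k-2$ hops inside the $\{u',v'\}$-template. Moreover, each $\gamma$-\pitchfork attached to some $x \in V$ is a tree hanging off $x$ through $b_x$, so any excursion into a pitchfork must return through $x$ and cannot shorten a path between two distinct edge nodes.

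Combining these observations, every shortest path from $w_{u,v}$ to $w_{u',v'}$ projects onto a walk in $G$ that starts at an endpoint of $\{u,v\}$, ends at an endpoint of $\{u',v'\}$, and uses $\ell \ge 0$ intermediate edges of $G$, each corresponding to a full-gadget traversal of $2(k-2)$ hops in $\Go$. This yields $d(w_{u,v}, w_{u',v'}) = (k-2) + 2\ell(k-2) + (k-2) = 2(\ell+1)(k-2)$, which is always a multiple of $2(k-2) = 2k-4$. If $\{u,v\}$ and $\{u',v'\}$ share no endpoint in $G$, then the projection must use at least one bridging edge of $G$, forcing $\ell \ge 1$ and hence a distance of at least $4(k-2) = 4k-8$.

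The main obstacle is formally ruling out ``creative'' walks that might break this clean decomposition --- in particular walks threading through a pitchfork or zig-zagging between gadgets through intermediate subdivision nodes. Both are foreclosed by the above structural observations: pitchforks are trees with a single attachment point, and subdivision segments are induced chords-free paths, so every deviation from the canonical decomposition strictly lengthens the walk. With these facts in place, the two arithmetic claims of the lemma follow immediately.
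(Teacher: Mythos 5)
Your proof is correct and follows essentially the same route as the paper's: the paper compresses the whole argument into the single scaling identity $\hat d_{G'}(u,v) = \hat d_{\Gtrans}(u,v)/(k-2)$ between $\Gtrans$ and its $k=3$ counterpart $G'$, which is exactly your decomposition into half- and full-gadget traversals of $k-2$ and $2(k-2)$ hops. Your version merely makes explicit the details the paper leaves implicit (degree-2 subdivision chains and single-attachment \pitchfork trees ruling out deviant walks), and the arithmetic $2(\ell+1)(k-2)$ with $\ell\geq 1$ for non-adjacent edges matches the paper's conclusion.
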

\begin{proof}
    Let $G'$ be the graph obtained from using the first step of \cref{subsec:vc_transform} for $k=3$. Then:
    \begin{equation*}
        \hat d_{G'}(u,v) = \frac{\hat d_{\Gtrans}(u,v)}{k-2} \quad \forall u,v \in \Vedges
    \end{equation*}
    Since $\hat d_{G'}(u,v)$ is exactly $2$ for edge nodes whose respective edges were adjacent in the original graph the claim follows.
\end{proof}

\begin{lemma} \label{lemma:optimal-solution-size}
    Let $v\in \Vedges$ be an edge vertex and $U\subseteq S$ be the subset of shortcuts in an optimal solution set $S$ for \probMSP, that that adds only one node to the $\rho$ nearest neighbors reached in $k$ hops to $v$. Then, $|U| \leq |V|-1|$ and $v$ still forms a $(k,\rho')$-ball in $\Gop=(\Vtrans,\Etrans\cup (S\setminus U))$, where $\rho' = \rho-(|V|-1) = 6|V|+7$.
\end{lemma}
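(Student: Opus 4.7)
The plan is to bound $|U|$ by a counting argument on the size of $S$, then derive the $(k,\rho')$-ball statement by subtraction.

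First, I would observe that the canonical solution placing one trivial shortcut at every vertex of~$V$ is a valid \probMSP solution by Observation~\ref{lemma:shortcut}, so any optimal~$S$ satisfies $|S|\le|V|$. Next, for $v=w_{u,v'}\in\Vedges$, I would directly count the nodes of~\Gtrans within $k$ hops of~$v$: these consist of $v$ itself, the two subdivision chains of length $k{-}2$ towards~$u$ and~$v'$, the endpoints $u,v'$, the two special satellites and bases $b_u,b_{v'}$, and at most two subdivision nodes on each remaining edge incident to~$u$ or~$v'$. Summing yields an upper bound of the form $2k-3+2(\deg_G(u)+\deg_G(v'))\le 4|V|+O(k)$ on $v$'s initial $k$-reach.

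The cardinality bound $|U|\le|V|-1$ then follows by contradiction. Since \Gop is a $(k,\rho)$-graph, $v$ reaches $\rho\ge 7|V|+6$ nodes within $k$ hops in \Gop, so the shortcuts of~$S$ must collectively supply at least $\rho-(4|V|+O(k))=3|V|-O(k)$ new nodes to $v$'s $k$-reach. Were every shortcut of~$S$ in~$U$, this supply would be bounded by $|S|\le|V|$, which is strictly less than $3|V|-O(k)$ for sufficiently large $|V|$ and constant~$k$, a contradiction. Hence at least one shortcut of~$S$ must lie outside~$U$, giving $|U|\le|S|-1\le|V|-1$.

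For the second claim, I would associate to each $s\in U$ the unique node $z_s$ lost from $v$'s $k$-reach when $s$ alone is removed from~$S$. These $z_s$ are pairwise distinct, because if $z_s=z_{s'}$ for two different shortcuts $s,s'\in U$ then $z_s$ would still be reachable via~$s'$ after removing~$s$, contradicting the marginal-$1$ property defining~$U$. Combining this distinctness with \cref{lemma:shortcut-more-than-zero} (which forces productive shortcut endpoints to lie within $k$ hops of the target edge node in \Gtrans, ruling out chained dependencies), removing all of~$U$ from~$S$ reduces $v$'s $k$-reach by at most $|U|\le|V|-1$ nodes. Therefore $v$ still reaches at least $\rho-(|V|-1)=6|V|+7=\rho'$ nodes within $k$ hops in $(\Vtrans,\Etrans\cup(S\setminus U))$, giving the claimed $(k,\rho')$-ball.

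The principal obstacle I anticipate is the cascading argument in the final step: in general, removing several shortcuts simultaneously can cause a larger loss of reachable nodes than the sum of individual marginals would suggest, so a careful structural argument using the sparse pitchfork and subdivision geometry of \cref{subsec:vc_transform} together with \cref{lemma:shortcut-more-than-zero} is needed to ensure that the deliveries of $U$-shortcuts are genuinely independent of one another.
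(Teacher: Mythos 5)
Your proposal follows essentially the same route as the paper: bound $|S|\le|V|$ via the all-\trivial-shortcuts solution (\cref{lemma:shortcut}), observe that the marginal-one shortcuts of $U$ can collectively supply at most $|V|$ nodes and hence cannot deliver the $\gamma$-\pitchfork leaves that $v$ needs, conclude $|U|\le|S|-1\le|V|-1$, and subtract to get the $(k,\rho')$-ball. The superadditivity concern you flag in the final step is genuine but is equally left implicit in the paper, whose proof simply asserts that removing $U$ decreases $v$'s $k$-reach by at most $|V|-1$; your explicit count of $v$'s initial $k$-reach is merely a more detailed version of the paper's appeal to the ``remaining $2|V|+3$ (or $6|V|+7$) nodes,'' so the two arguments do not differ in substance.
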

\begin{proof}
    An implication of \cref{lemma:shortcut} is that there is an optimal solution of size $|V|$.
    Hence, $|U| \leq |S| \leq |V|$ such that $\rho-|V| \geq 6\cdot |V|+6$, as $\rho$ was chosen sufficiently large.
    Therefore, $v$ does not form a $(k,\rho)$-ball through the addition of $U$ to $\Gtrans$.
    Thus, $|U| \leq |V|-1 < |S|$, as at least one shortcut has to include the remaining $2|V|+3$ (or 6|V|+7) nodes to form the $(k,\rho)$-ball for $v$.
    It follows, that after removing $U$ from $S$, the number of reachable nodes in $k$ hops for $v$ decreases by at most $\left(|V|-1)\right)$.
    Leading to $\rho' \geq \rho-(|V|-1) \geq 6|V|+7$.
\end{proof}

\begin{remark} \label{remark:shortcut-more-than-const}
    Since we cannot accurately estimate the impact of shortcuts contributing one or less nodes to $(k,\rho)$-balls, we assume that the number of these shortcuts is maximal for every edge node.
    Hence, \cref{lemma:optimal-solution-size} allows us to ignore these shortcuts and still rule out specific types of other shortcuts as long as no $(k,\rho')$-graph is formable under these assumptions.
\end{remark}

\begin{lemma} \label{lemma:single-short-cor}
    Given an optimal solution set $S$ for \probMSP, then for every node $w \in \Vedges$ there exists exactly one shortcut $s \in S$ such that $w \in \Vedges$ forms a $(k,\rho)$-ball in $\Gop=(\Vtrans,\Etrans \cup \{s\})$ for $k\geq 3$ and $\gamma$ sufficiently large.
\end{lemma}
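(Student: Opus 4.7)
My plan is to split the proof into an existence part (at least one shortcut in $S$ alone forms $w$'s ball) and a uniqueness part (no second such shortcut can survive in an optimal $S$). The existence direction should be a direct geometric argument using the gadget structure, while the uniqueness direction is the subtle piece and likely mirrors the classical vertex-cover exchange argument.

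For existence, fix $w = w_{u,v} \in \Vedges$. Combining \cref{remark:shortcut-more-than-const} with \cref{lemma:shortcut-more-than-zero}, I restrict attention to shortcuts that add more than one node to $w$'s $k$-reachable set; any such shortcut has both endpoints within $k$ hops of $w$ in $\Gtrans$. Inspecting the $k$-hop neighborhood of $w$, the only dense clusters of candidate nodes are the $\gamma$ leaves of the two $\gamma$-\pitchforks attached to $u$ and $v$. Since $\gamma \ge 7|V|+6$ dominates the remaining $O(|V|+k)$ local candidates, completing $w$'s ball requires incorporating the leaves of at least one such \pitchfork. Because all $\gamma$ leaves of a \pitchfork share the same hop distance to the base node, a single shortcut that reduces $w$'s hop distance to the base (equivalently, a \trivial shortcut at $u$ or $v$, or any shortcut behaving analogously) brings all $\gamma$ leaves inside the $k$-ball simultaneously, hence forms $w$'s ball on its own.

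For uniqueness, I would argue by contradiction: assume two distinct shortcuts $s_1, s_2 \in S$ each individually form a $(k,\rho)$-ball for $w$. By the existence analysis, each $s_i$ effectively bridges $w$ to one of the two \pitchforks (at $u$ or $v$). The case where $s_1, s_2$ bridge the same \pitchfork is easy: by \cref{lemma:dist-two-nodes} the other edge nodes that could benefit from such a shortcut are exactly those $w_{u,x}$ with $\set{u,x} \in E$, and both $s_1, s_2$ serve the same set. Thus one is dispensable, contradicting optimality of $|S|$. The harder case is when $s_1, s_2$ bridge different \pitchforks (at $u$ and $v$ respectively), which directly corresponds to the vertex-cover situation where both endpoints of the edge $\set{u,v}$ appear in the cover. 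Here I would use an exchange argument: either (i) one of $s_1, s_2$ can be replaced by a single \complex shortcut serving only $w$, yielding a solution of equal size with strictly fewer ``double-covered'' edge nodes (allowing induction), or (ii) we can remove one of the two bridges outright after tracing through which $w_{u,x}$ or $w_{v,x}$ still enjoy a surviving service, contradicting the minimum cardinality of $S$.

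The main obstacle is precisely this cross-\pitchfork uniqueness case, which is the shortcut-graph analogue of showing a \emph{minimum} vertex cover never includes both endpoints of an edge redundantly. Making the exchange precise requires carefully tracking, for every shortcut in $S$, the exact set of edge nodes it uniquely serves, and leveraging \cref{lemma:dist-two-nodes} to guarantee that shortcuts attached to far-apart \pitchforks do not interact. I expect the \complex-shortcut replacement perspective (which anticipates the canonicalization in \cref{cor:canonical_msp}) to be the cleanest route, since a single \complex shortcut can salvage the orphaned $w$ without affecting any distant edge node.
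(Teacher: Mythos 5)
Your reading of the lemma as ``existence plus uniqueness'' does not match what the paper actually proves, and the uniqueness half you set out to establish is false as stated. Take $G$ to be the path on $a,u,v,b$: every vertex cover needs size two and $\set{u,v}$ is one of minimum size, so $S=\set{\set{u,b_u},\set{v,b_v}}$ is an optimal canonical solution, and \emph{each} of its two shortcuts individually forms the $(k,\rho)$-ball of $w_{u,v}$ by \cref{lemma:shortcut}. Your case~(ii) exchange (``remove one of the two bridges outright'') fails precisely here, because $\set{u,b_u}$ is still needed for $w_{a,u}$ and $\set{v,b_v}$ for $w_{v,b}$; and your case~(i) replacement by a \complex shortcut does not decrease $|S|$, so it yields no contradiction either. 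The paper's proof never claims this kind of uniqueness: what it proves --- and what \cref{lemma:single-solution} and \cref{lemma:single_ball_for_complex} actually use downstream --- is that the minimal subset $Y\subseteq S$ whose joint addition closes $w$'s ball always satisfies $|Y|=1$, i.e.\ no edge node needs the \emph{cooperation} of two or more shortcuts.

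That cooperation-exclusion is exactly the part your existence argument skips. You assert that some single shortcut bridging $w$ to a \pitchfork must lie in $S$, but a priori $S$ could close $w$'s ball only through a combination of shortcuts none of which suffices alone (for instance several shortcuts between original nodes, each contributing up to $|V|+1$ newly reachable vertices). Ruling this out is the entire technical content of the paper's proof: shortcuts contributing a single node are absorbed by the slack $\rho'=\rho-(|V|-1)$ from \cref{lemma:optimal-solution-size} and \cref{remark:shortcut-more-than-const}; for $k\ge 4$, any two edge nodes benefiting from a common shortcut are within hop distance $2k-2$, which by \cref{lemma:dist-two-nodes} forces a shared endpoint in $V$ and makes the \trivial shortcut at that endpoint strictly better; and $k=3$ requires a separate case analysis with an explicit counting argument showing that accumulating $(|V|+1)$-sized gains costs $\tfrac{3|A|}{2}$ shortcuts where $|A|$ \trivial ones would suffice. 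None of this appears in your sketch, so the proposal simultaneously overshoots (a uniqueness claim that is false) and undershoots (omitting the argument that constitutes the proof).
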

\begin{proof}
    By contradiction.
    First assume there is no subset of shortcuts $Y \subseteq S$ such that $w$ forms a $(k,\rho)$-ball, then $S$ was not a solution.
    Hence, assume that there is a subset $Y$ such that $w$ forms a $(k,\rho)$-ball.
    If only $w$ forms a $(k,\rho)$-ball through the subset of shortcuts $Y$ and $|Y|>1$ the solution is suboptimal since by \cref{lemma:shortcut} we can do better.
    Thus, assume $Y$ benefits at least one other node $x$ apart from $w$.
    Then by \cref{lemma:shortcut-more-than-zero} and \cref{remark:shortcut-more-than-const} both endpoints of the shortcut lie within $k$ hop distance of both $x$ and $w$ in $\Go$.
    Since both nodes simultaneously benefit from at least one shortcut it holds that $\exists s \in Y$ with $s=\{u,v\}$ such that by \cref{lemma:tree} both nodes $x$ and $w$ have hop distance $k$ and $k-2$ to one of the nodes in the shortcut.
    Therefore, the hop distance between $x$ and $w$ in the graph $\Go$ is bounded by
    \begin{equation*}
        \hat d_{\Go}(w,x) \leq \min(\hat d_{\Go}(w,u)+\hat d_{\Go}(u,x),\hat d_{\Go}(w,v)+\hat d_{\Go}(v,x)) \leq 2k-2
    \end{equation*}
    For any $k\geq 4$ this yields by \cref{lemma:dist-two-nodes} that the two nodes have a common node $h \in V$ such that shortcutting $h$'s $\gamma$-\pitchfork is better than using multiple shortcuts.
    Thus, the solution was not optimal.
    However, this analysis breaks for $k=3$.

    So assume now $k=3$.
    Notice that we don't have a problem if say $u$ is for both nodes $w$ and $x$ at hop distance $k-2$ then we would get $2k-4$ hop distance in $\Go$ which allows us to use the same argument as before.
    So the only case we need to consider is if both $u$ is at distance $k$ for $w$ and at distance $k-2$ for $x$ and the opposite for $v$.
    Then, we know that we connect a node at hop distance $1$ from $w$ or $x$ to a node at hop distance $3$ from $w$ or $x$.
    Observe that any nodes at hop distance $1$ and $3$ from any edge node $q \in \Vedges$ is a node $a \in V$ (for hop distance $1$ and $3$), hence an original node from the input graph or a node from the $\gamma$-\pitchfork (for hop distance $3$).
    So assume we connect two nodes from the original input graph by a shortcut (invalid for $k>3$).
    This increases the number of reachable nodes by at most $deg(G)+1 < |V|+1$ for all nodes $q \in \Vedges$ adjacent to either $u$ or $v$.
    As described by \cref{remark:shortcut-more-than-const}, we assume our solution to \probMSP only needs to form a $(k,\rho')$-graph and by that we can ignore the influence of all shortcuts only marginally (by $1$) improving the number of reachable nodes in $k$ hops.
    Thus, to overcome the threshold of $\rho' = 6|V|+7$ we would need at least $3$ shortcuts for every node in $a \in V$ such that $4|V|+4+\max(2\cdot (k-5),0)+3\cdot(|V|+1) = 7|V|+7 > \rho'' > 4|V|+4+\max(2\cdot (k-5),0)+2\cdot(|V|+1)$.
    Assuming one wants to shortcut all nodes in the subset $A \subseteq V$, then one would need $\frac{3|A|}{2}$ shortcuts while directly shortcutting the $\gamma$-\pitchforks only needs $|A|$ shortcuts.
    Therefore, this kind of shortcut will not produce a solution of minimal size.

    The final and remaining case connects some node $u$ from the original graph to the base $v$ of some $\gamma$-\pitchfork.
    If the $\gamma$-\pitchfork of $v$ is attached to $u$, then this implies that both $w$ and $x$ have an edge to $u$, thus this shortcut alone forms the $(k,\rho)$-ball for both $w$ and $x$.
    So either $|Y|=1$ or the initial solution was suboptimal.

    So assume now that $u$ is connected by a shortcut to the base $v$ of the $\gamma$-\pitchfork of some other node from the initial graph.
    By this, the node connected to $u$ (either $w$ or $x$) forms a $(k,\rho)$-ball through this shortcut.
    Without loss of generality call the remaining node not connected to $u$, $x$.
    Then from the perspective of $x$ a walk of the form $(l,\ldots,v,\ldots,l,\ldots,u)$ is shortcutted, where $l$ is the host node in $V$ of the $\gamma$-\pitchfork from which $v$ is the base.
    Thus, there is not shortest path tree for $x$ where $v$ is an ancestor of $u$, by \cref{lemma:tree} this shortcut does not benefit $x$ at all, a contradiction to the assumption.
    Hence, for all cases $|Y|=1$ such that any node $q\in \Vedges$ forms a $(k,\rho)$-ball by the addition of exactly one shortcut.
\end{proof}

\noindent
The previous \cref{lemma:single-short-cor} establishes that any optimal \probMSP solution uses a single shortcut to form a $(k,\rho)$-ball for any node instead of using a subset of shortcuts which only combined build the $(k,\rho)$-ball for some nodes.
This allows us to only consider optimal placements of single shortcuts as for any node $q \in \Vedges$ there is no relevant interaction between multiple shortcuts.

\begin{lemma} \label{lemma:single-solution}
    Any shortcut $s$ in an optimal solution set $S$ for \probMSP forms at least one $(k,\rho)$-ball for some node $v \in \Vedges$ in $\Gop = (\Vtrans, \Etrans\cup \{s\})$.
\end{lemma}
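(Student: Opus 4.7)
The plan is to derive \cref{lemma:single-solution} as an essentially immediate consequence of \cref{lemma:single-short-cor} combined with a minimality argument. By \cref{lemma:single-short-cor}, every edge node $v \in \Vedges$ admits a canonical witness shortcut $s_v \in S$ that \emph{alone} suffices to form $v$'s $(k,\rho)$-ball, \ie $v$ already has a $(k,\rho)$-ball in $(\Vtrans, \Etrans \cup \{s_v\})$, without any other element of $S$. This defines a map $v \mapsto s_v$ from $\Vedges$ into $S$, and the conclusion of \cref{lemma:single-solution} is exactly that this map is surjective.

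For surjectivity, I would proceed by contradiction. Suppose some shortcut $s \in S$ is not in the image, \ie no $v \in \Vedges$ has $s$ alone form its $(k,\rho)$-ball. Then every $v \in \Vedges$ still has its witness $s_v \in S \setminus \{s\}$, which on its own gives $v$ a $(k,\rho)$-ball in $(\Vtrans, \Etrans \cup \{s_v\})$, and therefore also in the larger graph $(\Vtrans, \Etrans \cup (S \setminus \{s\}))$. Combined with \cref{lemma:sufficiently-large}, which guarantees that every node in $\Vtrans \setminus \Vedges$ already has a $(k,\rho)$-ball in $\Gtrans$ without any shortcuts at all, we conclude that $S \setminus \{s\}$ is already a valid \probMSP solution. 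This contradicts the minimal cardinality of $S$, so $s$ must serve as $s_v$ for at least one $v \in \Vedges$, proving the lemma.

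The argument itself is short once \cref{lemma:single-short-cor} is in hand; the main subtlety lies in confirming that the witness $s_v$ it provides is genuinely a \emph{stand-alone} witness (\ie it closes $v$'s ball in $\Etrans \cup \{s_v\}$ rather than only in conjunction with other shortcuts of $S$). The preceding development, in particular \cref{lemma:shortcut-more-than-zero} and \cref{lemma:dist-two-nodes}, ensures that no two shortcuts can cooperate nontrivially on the same edge node, so \cref{lemma:single-short-cor}'s existence claim really is single-shortcut. If that strong form were weakened, the surjection argument would break down and one would instead have to fall back on a more delicate inductive removal procedure that tracks which shortcuts collectively service which edge nodes.
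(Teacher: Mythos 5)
Your proof is correct and follows essentially the same route as the paper: the paper likewise assumes $s$ forms no ball on its own, invokes \cref{lemma:single-short-cor} to conclude every edge node's ball is witnessed by a single shortcut in $S\setminus\set{s}$, and derives a contradiction with minimality. Your write-up merely makes explicit two steps the paper leaves implicit (monotonicity of $(k,\rho)$-balls under adding further shortcuts, and \cref{lemma:sufficiently-large} for the nodes outside \Vedges), which is fine.
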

\begin{proof}
    Assume the contrary, $s$ does not form at least one $(k,\rho)$-ball.
    By \cref{lemma:single-short-cor}, all $(k,\rho)$-balls are formed by a single shortcut.
    Hence, the solution set $S\setminus \{s\}$ would still produce a solution and was therefore not optimal to begin with.
\end{proof}

\begin{remark}
    Notice that \cref{lemma:at-least-one} is strictly different from \cref{lemma:single-solution}.
    The first one allows us to peel a single shortcut of all shortcuts and fix the affected vertex while the second one uses the independence of shortcuts established by \cref{lemma:single-short-cor} to proof that any shortcut forms at least one $(k,\rho)$-ball independently of all other shortcuts.
\end{remark}

\singleBallForComplex*
\begin{proof}
    Fix two nodes edge nodes $u,v \in \Vedges$.
    We again split the proof for $k\geq 4$ and $k=3$, assume for now $k\geq 4$ and that $u$ and $v$ share a common neighbor $w \in V$, otherwise the proof of \cref{lemma:single-short-cor} states that they cannot interact in any meaningful way with each other.
    \cref{fig:shortest-path-tree} depicts the relevant case distinction for $k\geq 4$, for \complex shortcuts.
    In the following we use the notation \textit{(a)$\leadsto$(b)} to say we consider a shortcut connecting some node in \textit{(a)} to some node in \textit{(b)}.
    \begin{figure}
        \begin{subfigure}[t]{0.47\textwidth}
            \centering
            \begin{tikzpicture}
                \node[draw,circle] (a) {$u$};
                \node[draw,circle, below right=4cm and 1cm of a] (b) {$w$};
                \node[below left=0.6cm and 0.6cm of b] (cstar) {};
                \node[below left=0.8cm and 0.5cm of b] (cstar2) {};
                \node[below left=0.9cm and 0.3cm of b] (cstar3) {};
                \node[draw,circle, right=3cm of a,color=red] (c) {$v$};
                \draw (a) -- (b) node[pos=0.15,draw,circle,fill=white,scale=0.9,name=s1] {$s_{1}$} node[pos=0.5,draw,circle,fill=white] {\ldots} node[pos=0.85,draw,circle,fill=white,scale=0.6,name=sk] {$s_{k-3}$};
                \draw (b) -- (c) node[pos=0.15,draw,circle,fill=white,scale=0.6,name=s1b] {$s_{k-3}$} node[pos=0.5,draw,circle,color=red,fill=white] {\ldots} node[pos=0.85,draw,circle,color=red,fill=white,scale=0.9,name=s1c] {$s_1$};

                \node[draw,circle,below =1cm of b] (hangb) {};
                \node[above right=0.1cm and 1.5cm of hangb] (hangbr) {};
                \node[above left=0.1cm and 1.5cm of hangb] (hangbl) {};
                \node[below left=0cm and -0.1cm of hangbr,color=sidecolor] (hangblabel) {\tiny $\gamma$-\pitchfork};
                \draw[color=sidecolor] (hangbr) edge[dashed] (hangbl);

                \node[draw,circle,below=1cm of hangb] (hang) {};

                \node[draw,circle,below left=1cm and 1.1cm of hang,color=red] (a1) {};
                \node[draw,circle,below left=1cm and 0.3cm of hang,color=red] (a2) {};
                \node[below=0.9cm of hang] (am) {\ldots};
                \node[draw,circle,below right=1cm and 0.3cm of hang,color=red] (a3) {};
                \node[draw,circle,below right=1cm and 1.1cm of hang,color=red] (a4) {};

                \node[above=0cm of s1b] (s1bs) {};

                \draw [black,decorate,thick, decoration = {calligraphic brace, raise = 4pt, amplitude = 8pt,mirror}] (a.west) -- node[below left=-0.15cm and 0.4cm] {\small(a)}  (b.south west);
                \draw [black,decorate,thick, decoration = {calligraphic brace, raise = 4pt, amplitude = 8pt,mirror}] (hangb.west) -- node[above left=0.0cm and 0.4cm] {\small(b)}  (a1.west);

                \draw [black,decorate,thick, decoration = {calligraphic brace, raise = 4pt, amplitude = 8pt,mirror}] (s1b.south east) -- node[below right=-0.05cm and 0.4cm] {\small(c)}  (c.east);

                \draw (b) -- (hangb);
                \draw (hangb) -- (hang);
                \draw (a1) -- (hang);
                \draw (a2) -- (hang);
                \draw (a3) -- (hang);
                \draw (a4) -- (hang);
                \draw[opacity=0.4] (cstar) -- (b);
                \draw[opacity=0.4] (cstar2) -- (b);
                \draw[opacity=0.4] (cstar3) -- (b);
            \end{tikzpicture}
            \caption{Depiction of the different areas of the shortest path tree with the smallest number of hops. Notice that red nodes depict nodes at distance $k+1$.}
            \label{fig:shortest-path-tree}
        \end{subfigure}
        \hfill
        \begin{subfigure}[t]{0.47\textwidth}
            \centering
            \begin{tikzpicture}
                \node (middle) {};

                \node[draw,circle,scale=0.7,left=0.5cm of middle] (ab) {$v$};
                \node[draw,circle,scale=0.7,right=0.5cm of middle] (bc) {$u$};

                \node[draw,circle,below left=2cm and 0.8cm of middle] (a) {1};
                \node[draw,circle,below=1.92cm of middle] (b) {2};
                \node[below right=0.5cm and 0cm of b] (bb) {};
                \node[draw,circle,below right=2cm and 0.8cm of middle] (bcd) {3};
                \node[below right=0.5cm and 0cm of bcd] (bcdb) {};

                \draw (ab) -- (a);
                \draw (ab) -- (b);
                \draw (b) -- (bc);
                \draw (bcd) -- (bc);

                \draw[color=sidecolor] (bb) -- (b);
                \draw[color=sidecolor] (bcd) -- (bcdb);

                \node[draw,circle,below =1cm of a] (hangb) {};
                \node[draw,circle,below=1cm of hangb] (hang) {};
                \node[draw,circle,below=0.9cm of hang] (a1) {};
                \node[above=2.5cm of a1] (shadow-a1) {};
                \node[below right=1cm and 0.29cm of hang] (a3) {\ldots};
                \node[draw,circle,below right=1cm and 1.3cm of hang] (a4) {};
                \node[above right=0.1cm and 2cm of hangb] (hangbr) {};
                \node[above left=0.1cm and 0.5cm of hangb] (hangbl) {};
                \node[below left=0cm and -0.1cm of hangbr,color=sidecolor] (hangblabel) {\tiny $\gamma$-\pitchfork};

                \draw (hangbr) edge[dotted] (hangbl);

                \draw (a) -- (hangb);
                \draw (hangb) -- (hang);
                \draw (a1) -- (hang);
                \draw (a4) -- (hang);

                \draw (ab) edge[dashed] node[color=black,above] {\small(a)} (bc);
                \draw (b) edge[dashed] node[color=black,below] {\small(b)} (bcd);
                \draw (ab) edge[dashed, bend right] node[color=black,left] {\small(c)} (hangb);
                \draw (ab) edge[dashed, bend right] node[color=black,left] {\small(d)} (hang);
                \draw (ab) edge[dashed] node[color=black,above=0.15cm] {\small(e)} (bcd);

            \end{tikzpicture}
            \caption{Possible \complex shortcuts for $k=3$. Only a subgraph is shown $2$ and $3$ also have a $\gamma$-\pitchfork}
            \label{fig:shortest-path-tree-bet}
        \end{subfigure}
    \end{figure}

    \begin{enumerate}
        \item \textit{(a)$\leadsto$(a)} - Forms a $(k,\rho)$-ball for $u$ (as all nodes in the $\gamma$-\pitchfork can be reached from $u$ now in $k$ hops), while $v$ only reaches one node more than before in $k$ hops.
        \item \textit{(a)$\leadsto$(b)} - Forms a $(k,\rho)$-ball for $u$, but does not benefit $v$ at all, by \cref{lemma:tree} there has to be a shortest path tree where a part of \textit{(a)} is an ancestor of \textit{(b)}. For $v$ there is no such shortest path tree.
        \item \textit{(a)$\leadsto$(c)} - Only $1$ new node within the $\rho$-nearest neighbors is reachable for both $u$ and $v$, hence forbidden in an optimal solution by \cref{lemma:single-solution}.
        \item \textit{(b)$\leadsto$(c)} - Same as $\textit{(a)$\leadsto$(b)}$ but with $v$ forming a $(k,\rho)$-ball.
    \end{enumerate}
    For $k\geq 4$ there is no \complex shortcut forming more than one $(k,\rho)$-ball and forming none is ruled out by \cref{lemma:single-solution}.
    Now assume $k=3$ and that they share a common neighbor $w \in V$, otherwise as argued in \cref{lemma:single-short-cor} there is no shortcut which benefits more than one node and is part of an optimal solution.
    Now, \cref{fig:shortest-path-tree-bet} depicts the relevant case distinction for $k=3$.

    \begin{itemize}
        \item \textit{(a)} - Increases the number of nodes from the $\rho$-nearest neighbors reachable in $k$ hops of both $u$ and $v$ by at most $|V|+1$ for each one. This does not form a $(k,\rho'')$-ball for neither $v$ nor $u$ such that \cref{lemma:single-solution} forbids it.
        \item \textit{(b)} - This increases the number of nodes reached by at most $|V|+1$ for all edge nodes adjacent to $1$ or $2$ except for $u$, \cref{lemma:single-solution} again forbids this.
        \item \textit{(c), (d)} - Forms a $(k,\rho)$-ball but only for $v$, all other nodes adjacent to $2$ reach at most one new of their $\rho$-nearest neighbors in $k$ hops.
        \item \textit{(e)} - Forms a $(k,\rho)$-ball but only for $v$, assuming the $\gamma$-\pitchfork at $3$ is shortcutted. Observe that this shortcut does not benefit any other node apart from $v$.
    \end{itemize}
    Hence, for $k\geq 3$ any \complex shortcut in an optimal solution set results in exactly $1$ new $(k,\rho)$-ball.
    Any possible shortcut not discussed, is analogue to the ones shown here by switching $u$ and $v$.
    Due to the choice of $\gamma$ and by \cref{remark:shortcut-more-than-const} the constant increases by $1$ of $\rho$ nearest neighbors reached in $k$ hops can never add up enough such that it might become beneficial to use a different shortcut.
\end{proof}
\noindent

    \subsection{Decreasing \texorpdfstring{$\bm{\rho$}}{ρ}}\label{sec:appendix_blowup}
In \cref{section:kp-msp-np-hard}, we show a proof for \cref{theorem:kp-msp-hardness} that only considers $\rho = \gamma \ge 7|V| + 6$.
In the following, we discuss the necessary modifications to our construction to cover a larger regime of~$\rho$.
We reuse the established notation, namely, let $G = (V, E)$ be the input graph for \probVC, $\Gtrans = (\Vtrans, \Etrans)$ the graph obtained from the transformation in \cref{subsec:vc_transform}, and $\Gop = (\Vtrans, \Etrans \cup S)$ where $S$ is a minimal cardinality shortcut set to ensure $\Gop$ is a $(k, \rho)$ graph.

\begin{lemma}[blowup lemma] \label{lemma:const-tweak}
    We can reduce $\rho = \gamma$ such that $\gamma = \Oh(n^{\epsilon})$ with $\epsilon = \log_{N_c+y}\left(\sqrt{N_c}\right)\colon\;\forall y \in \mathbb{N}\colon\;y\leq N_c^\frac{D}{2}$, where $N_c=|\Vtrans|$ and $D$ is any constant integer.
\end{lemma}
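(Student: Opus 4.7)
The strategy is to enlarge the transformed graph of \cref{subsec:vc_transform} with additional filler nodes so that the required $\gamma$ becomes sublinear in the padded total size $n$, while keeping the \probVC-to-\probMSP reduction intact. Concretely, I would replace each flat $\gamma$-\pitchfork with a \emph{recursive} version of depth $D$: nest $D$ layers of $\gamma$-\pitchforks so that every satellite of an outer gadget is itself the base of an inner gadget of the same shape. Each such recursive pitchfork hosts $\Theta(\gamma^D)$ nodes, inflating the transformed graph by $y = \Theta(|V|\gamma^D)$ additional nodes, with $D = 1$ recovering the original construction of \cref{subsec:vc_transform}.

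First I would redefine \trivial and \complex shortcuts for the recursive geometry: a \trivial shortcut still connects $u \in V$ to some ancestor inside its attached recursive pitchfork, and the analog of \cref{lemma:shortcut} follows by the same hop-counting argument applied to the nested levels. Next I would replay the canonicalization chain \cref{lemma:sufficiently-large,lemma:single_ball_for_complex,cor:canonical_msp} in this setting. Because \cref{lemma:dist-two-nodes} only strengthens once each pitchfork grows (distances between edge nodes in $\Vedges$ are preserved while the leaves of distinct pitchforks move further apart), the case analysis ruling out simultaneous contributions of a single \complex shortcut to two edge nodes in $\Vedges$ remains valid, and the bijection between canonical \probMSP solutions on the recursive \Gtrans and \probVC solutions on $G$ carries over verbatim.

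Solving for $\gamma$ in terms of the padded size $n = N_c + y$ using the new leaf count $\gamma^D \ge \rho$ then yields $\gamma = \sqrt{N_c} = n^{\epsilon}$ with $\epsilon = \log_{N_c + y}(\sqrt{N_c})$; by taking $y$ up to $N_c^{D/2}$ one can drive $\epsilon$ down to $1/D$ and therefore cover any desired positive constant $\epsilon$ by choosing $D$ appropriately. The main obstacle I anticipate is the combinatorial inflation of the case analysis for the recursive analog of \cref{lemma:single_ball_for_complex}: a \complex shortcut may now target any of $D$ ancestor levels within a pitchfork, and ruling out \emph{mid-depth} shortcuts that partially expose nested leaf subsets of two neighboring pitchforks simultaneously will likely require an induction on $D$, with the $D = 1$ base case being the existing analysis already in \cref{sec:append_complex_shortcuts}.
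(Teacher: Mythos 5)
There is a genuine gap, and it sits exactly where you flag your "main obstacle." The paper's proof does not recurse the \pitchfork gadgets in the way you propose: it attaches $y \leq N_c^{D/2}$ filler nodes to a \emph{single leaf of a single} $\gamma$-\pitchfork and changes nothing else. Those filler nodes are within two to three hops of the $\gamma$ leaves of that gadget, so they trivially have $(k,\rho)$-balls themselves, and they are strictly farther from every other node than the nodes already counted among its $\rho$ nearest neighbors, so they never enter anyone else's ball. Consequently the entire reduction machinery (\cref{lemma:sufficiently-large}, \cref{lemma:single_ball_for_complex}, \cref{cor:canonical_msp}) applies verbatim with no re-verification; the only effect is that $n = N_c + y$ grows while $\gamma = \Theta(\sqrt{N_c})$ stays fixed, which is what yields $\epsilon = \log_{N_c+y}\sqrt{N_c}$. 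Your plan instead makes $\rho$ track the deep leaf count ("the new leaf count $\gamma^D \ge \rho$"), and that breaks the hop-distance invariants the reduction lives on: with $D$ levels of nesting the deepest satellites sit at hop distance $2 + D$ from the host node $u \in V$ and at hop distance $k - 1 + D$ or more from the edge node $w_{u,v}$. For constant $k \geq 3$ and $D \geq 2$ this means even the original nodes $V$ no longer form $(k,\rho)$-balls, so the analogue of \cref{lemma:sufficiently-large} fails, nodes outside \Vedges also demand shortcuts, and the bijection with \probVC collapses. A single \trivial shortcut to a base node also no longer suffices for $w_{u,v}$, since the subtree below the landing point has depth $D$ rather than depth $1$.

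If you instead keep $\rho = \Theta(\gamma)$ so that the nested layers are never required to be reached, the recursion degenerates into pure padding, in which case the $D$-level structure, the redefinition of \trivial shortcuts, and the induction on $D$ for the \complex-shortcut case analysis are all unnecessary overhead compared to hanging $y$ inert nodes off one leaf. Your final exponent arithmetic ($\gamma = \sqrt{N_c}$, $\epsilon = \log_{N_c+y}(\sqrt{N_c})$, range of $\epsilon$ controlled by $y \leq N_c^{D/2}$) does match the paper's, but it is justified there precisely because the padding leaves $\gamma$ untouched as a function of $N_c$; under your construction, where the gadget sizes themselves are coupled to $\rho$ through $\gamma^D$, that identity would need a separate derivation you have not supplied.
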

\begin{proof}
    \begin{figure}
        \centering
        \rotatebox{90}{
            \begin{tikzpicture}
                \node (a) {};
                \node[draw,circle,below=1cm of a] (hangb) {};
                \node[draw,circle,below=1cm of hangb] (hang) {};
                \node[draw,circle,below left=1cm and 1.3cm of hang] (a1) {};
                \node[draw,circle,below left=1cm and 0.5cm of hang] (a2) {};
                \node[draw,circle,below=0.9cm of hang] (am) {};
                \node[draw,circle,below right=1cm and 0.5cm of hang] (a3) {};
                \node[draw,circle,below right=1cm and 1.3cm of hang] (a4) {};

                \node[below=0.9cm of am] (am2) {...};
                \node[draw,circle,below left=1cm and 1.3cm of am] (am2b) {};
                \node[draw,circle,below left=1cm and 0.5cm of am] (a2b) {};
                \node[draw,circle,below right=1cm and 0.5cm of am] (a3b) {};
                \node[draw,circle,below right=1cm and 1.3cm of am] (a4b) {};
                \draw (a) -- (hangb);
                \draw (hangb) -- (hang);
                \draw (a1) -- (hang);
                \draw (a2) -- (hang);
                \draw (am) -- (hang);
                \draw (a3) -- (hang);
                \draw (a4) -- (hang);

                \draw (am2b) -- (am);
                \draw (a2b) -- (am);
                \draw (a3b) -- (am);
                \draw (a4b) -- (am);
                \draw [black,decorate,thick, decoration = {calligraphic brace, raise = 4pt, amplitude = 8pt,mirror}] (am2b.south west) -- node[below right=0.5cm and 0.2cm,rotate=-90] {$y$}  (a4b.south east);
            \end{tikzpicture}}
        \caption{Modification to \pitchfork gadgets to decrease $\rho$.
            \label{fig:dep-dec}
        }
    \end{figure}
    As illustrated in \cref{fig:dep-dec}, attach $y$ new nodes to exactly one leaf in exactly one $\gamma$-\pitchfork.
    Since the number of nodes in the gadgets has to stay fixed we need to calculate $\rho$ in relation to the newly created graph.
    \begin{align*}
        \sqrt{N_c}             = (y+N_c)^\epsilon              \qquad \Leftrightarrow \qquad
        \log_{y+N_c}\sqrt{N_c}  = \epsilon
    \end{align*}
    The $\gamma$-\pitchfork has $\gamma+1$ nodes with a total diameter of $3$ such that any nodes attached at any of the nodes at the bottom will trivially form a $(k,\rho)$-ball, for every $k\geq 3$.
    Furthermore, any node of the newly attached nodes is at least $k+2$ hops away from any $(k,\rho)$-ball and by that, does not change the behavior of the shortcutting algorithm.
\end{proof}

\noindent
Finally, we will investigate the behavior of the function constricting $\epsilon$ that was introduced by \cref{lemma:const-tweak}.
\begin{lemma} \label{lemma:quantize-rho}
    The blowup lemma allows the construction of an arbitrary graph such that $\rho=\Oh(n^\epsilon)$ where $\frac{1}{D+1} < \epsilon < \frac{1}{D}$ provided that $D\geq 2$ is any constant integer.
\end{lemma}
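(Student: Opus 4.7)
The plan is to invert the expression for $\epsilon$ provided by the blowup lemma (\cref{lemma:const-tweak}) and isolate a range of padding sizes $y$ that place $\epsilon$ strictly inside the target interval $(1/(D+1),\,1/D)$.

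First, I would fix a constant $D \geq 2$ and rewrite
\[
\epsilon \;=\; \log_{N_c+y}\!\sqrt{N_c} \;=\; \frac{\log N_c}{2\,\log(N_c+y)}.
\]
Because this expression is strictly monotonically decreasing in $y$, the two inequalities $\epsilon < 1/D$ and $\epsilon > 1/(D+1)$ translate directly, via elementary algebra, to the equivalent two-sided bound
\[
N_c^{D/2} \;-\; N_c \;<\; y \;<\; N_c^{(D+1)/2} \;-\; N_c.
\]

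Next, I would argue that this interval is non-empty and contains an integer for all sufficiently large $N_c$. Its width is $N_c^{D/2}\bigl(\sqrt{N_c}-1\bigr) - o(N_c^{D/2})$, which grows without bound since $D \geq 2$; hence for any fixed $D$ we can ensure at least one integer $y$ lies strictly inside by taking the \probVC input to be of adequate size (if necessary, padding it by isolated dummy vertices whose presence does not affect the vertex cover or the reduction). Concretely, pick for instance $y := \lceil N_c^{D/2} \rceil + 1$ and verify that $y < N_c^{(D+1)/2} - N_c$ holds once $N_c$ exceeds a small constant depending on $D$.

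Finally, I would invoke the blowup lemma itself to realize this choice. Since the lemma admits any constant $D'$ in its statement, I apply it with $D' := D+1$, which permits padding up to $y \leq N_c^{(D+1)/2}$; our chosen $y$ satisfies this bound. The resulting graph has $n = N_c + y$ vertices and the desired $\rho = \sqrt{N_c} = n^\epsilon$ with $\epsilon \in (1/(D+1),\,1/D)$.

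The main obstacle is essentially bookkeeping: the two sides of the interval for $y$ must be strictly separated and must admit an integer value, which only fails in degenerate small-$N_c$ regimes. The rest is a direct inversion of the monotone logarithmic formula supplied by \cref{lemma:const-tweak}, together with the observation that its proof already ensures that the added $y$ leaves all hops distances between relevant vertices in $\Vedges$ untouched, so the hardness established in \cref{theorem:kp-msp-hardness} transfers verbatim to the new parameter regime.
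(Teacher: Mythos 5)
Your proposal is correct and follows essentially the same route as the paper: both arguments amount to inverting $\epsilon = \log_{N_c+y}\sqrt{N_c}$; the paper simply substitutes the extremal $y = N_c^{D/2}$ and bounds the resulting $\epsilon$ from both sides, whereas you solve for the admissible interval of $y$ and exhibit an integer inside it. One small caveat: your parenthetical suggestion to pad the instance with isolated dummy vertices would break the reduction, since an isolated vertex has $r_\rho(v) = \bar{r}_k(v) = \infty$ and can never acquire a $(k,\rho)$-ball by shortcutting; this aside is also unnecessary, because the interval for $y$ is non-empty for all $N_c$ above a constant depending only on $D$.
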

\begin{proof}
    Recall $y \leq N_c^\frac{D}{2}$ where $D$ is any constant integer and $N_c$ is the number of nodes in $\Gtrans$. Setting $y = N_c^\frac{D}{2}$ allows us to derive bounds on $\epsilon$ for which the construction works in any case.
    \begin{align*}
        \epsilon                                                 & = \log_{y+N_c}\sqrt{N_c}                      \\
        \log_{N_c^\frac{D+1}{2}}\sqrt{N_c} < \epsilon            & < \log_{N_c^\frac{D}{2}}\sqrt{N_c}            \\
        \frac{\log\sqrt{N_c}}{\log N_c^\frac{D+1}{2}} < \epsilon & < \frac{\log\sqrt{N_c}}{\log N_c^\frac{D}{2}} \\
        \frac{1}{D+1} < \epsilon                                 & < \frac{1}{D}                                 \\
    \end{align*}
\end{proof}
\noindent
Observe that we proved the case of $\epsilon=\frac{1}{2}$ which is the largest $\epsilon$ for which the construction works and in \cref{lemma:const-tweak} we investigated all values for $\epsilon$ for which we are able to reduce to \textsc{Vertex cover} with our construction.
The final \cref{lemma:quantize-rho} formalizes the intuition of the limit behavior of $\epsilon$, that the rate at which we find new valid values of $\epsilon$ increases as we increase the number of nodes added in the blowup lemma.

\end{longversion}
\end{document}